\newcommand{\ov}{\overline}
\newcommand{\sqeq}{\sqsubseteq}
\newcommand{\la}{\lambda}
\newcommand{\cc}{\circledcirc}
\newcommand{\adr}{{{\mathbb A}\,}}
\newcommand{\vpump}{\Uparrow}
\newcommand{\compdown}[2]{{#1}_{|#2}}
\newcommand{\bp}{{{\mathbb B}\hspace{.2ex}}}
\newcommand{\bpr}{\bp_{\varepsilon}}
\newcommand{\ebp}{\emptyset_{\,{\mathbb B}\,}}
\newcommand{\nat}{{{\mathbb N}\,}}
\newcommand{\brp}{{{\mathbb R}\,}}
\newcommand{\linear}{{{\mathbb F}\,}}
\newcommand{\seq}{{\mathbb S}}
\newcommand{\sign}{{\mathfrak S}}
\newcommand{\shadow}{{\mathfrak R}}
\newcommand{\lamt}{\Lambda_{\mbox{\rm\scriptsize NF}}}
\newcommand{\trite}{\vdash_{\mbox{\sf\scriptsize BB'IW}\,}}
\newcommand{\free}{\mbox{\sf Free}}
\newcommand{\sub}{\mbox{\sf Sub}}
\newcommand{\dom}{{\mbox{\rm dom}}}
\newcommand{\form}{\Omega}
\newtheorem{theorem}{Theorem}[section]
\newtheorem{lemma}[theorem]{Lemma}
\newtheorem{proposition}[theorem]{Proposition}
\newtheorem{definition}[theorem]{Definition}
\date{19 June 2010; Revised 6 March 2012}
\begin{document}
\title[Ticket Entailment is decidable]{Ticket Entailment is decidable}
\author[V. Padovani]
{V\ls I\ls N\ls C\ls E\ls N\ls T\ns P\ls A\ls D\ls O\ls V\ls A\ls N\ls I\\
Equipe Preuves, Programmes et Syst\`emes\\
 Universit\'e Paris VII - Denis Diderot\\
Case 7014\\
75205 PARIS Cedex 13\\
{\tt padovani@pps.jussieu.fr}}
\maketitle
\begin{abstract}
%% Text of abstract
We prove the decidability of the logic $T_\to$ of Ticket Entailment. Raised by Anderson and Belnap within the framework of relevance logic, this question is equivalent to the question of the decidability of type inhabitation in simply-typed combinatory logic with the partial basis ${\sf BB'IW}$. We solve the equivalent problem of type inhabitation for the restriction of simply-typed lambda-calculus to hereditarily right-maximal terms.
\end{abstract}
%\begin{keyword}
%relevance logic \sep
%ticket entailment \sep
%decidability
%\end{keyword}
The partial bases built upon the atomic combinators ${\sf B}$, ${\sf B'}$, ${\sf C}$, ${\sf I}$, ${\sf K}$, ${\sf W}$  of combinatory logic are well-known for being closely connected with propositional logic. The types of their combinators form the axioms of implicational logic systems that have been studied for well over 70 years \cite{TriggHB1994}.
The partial basis ${\sf BB'IW}$ corresponds, via the types of its combinators, to
 the system $T_\to$ of {\em Ticket Entailment} introduced and motivated in \cite{AndersonBelnap1975,AndersonBelnapDunn1990}. The system $T_\to$ consists of modus ponens and four axiom schemes that range over the following types for each atomic combinator:
\begin{itemize}
\item ${\sf B}$ :  $(\chi\to\psi)\to((\phi\to\chi)\to(\phi\to\psi))$
\item ${\sf B'}$ :  $(\phi\to\chi)\to((\chi\to\psi)\to(\phi\to\psi))$
\item ${\sf I}$ :  $\phi\to\phi$
\item ${\sf W}$ :  $(\phi\to(\phi\to\chi))\to(\phi\to\chi)$
\end{itemize}
The type inhabitation problem for ${\sf BB'IW}$ is the problem of deciding for a given type whether there exists within this basis a combinator of this type. This problem is equivalent to the problem of deciding whether a given formula can be derived in $T_\to$.

Surprisingly, the question of the decidability of $T_\to$ has remained unsolved since it was raised in \cite{AndersonBelnap1975}, although the
 problem has been thoroughly explored within the framework of relevance logic with
proofs of decidability and undecidability for several related systems. 
For instance the system $R_\to$ of {\em Relevant Implication} (which corresponds to the basis ${\sf B C I W}$) and the system $E_\to$ of {\em Entailment} \cite{AndersonBelnap1975} are both decidable \cite{Kripke1959} whereas the extensions $R$, $E$, $T$ of $R_\to$, $E_\to$, $T_\to$  to a larger set of connectives ($\to$, $\wedge$, $\vee$) are undecidable \cite{Urquhart1984}.% or attempts to use alternate equivalent systems \cite{Bimbo2005}.

In 2004, a partial decidability result for the type inhabitation problem was proposed in \cite{BrodaDF2004} for a restricted class of formulas -- the class of {\em $1$-unary formulas} in which every maximal negative subformula is of arity at most 1. Broda, Dams, Finger and Silva e Silva's approach is based on a translation of the problem into a type inhabitation problem for the {\em hereditary right-maximal} (HRM) terms of lambda calculus \cite{TriggHB1994,Bunder1996,BrodaDF2004}. The closed HRM-terms form the closure under $\beta$-reduction of all translations of ${\sf BB'IW}$-terms, accordingly the type inhabitation problem within the basis ${\sf BB'IW}$ is equivalent to the type inhabitation problem for HRM-terms.

 We use in this paper the same approach as  Broda, Dams, Finger and Silva e Silva's. We prove that the type inhabitation problem for HRM-terms is decidable, and conclude that the logic $T_\to$ is decidable\footnote{In the course of the publication of this article, we heard of a work in progress by  Katalin Bimb\`o and Michael Dunn towards a solution that is seemingly based on a different approach.}. 
\subsection*{Summary}
In Section 1,
 we recall the definition of hereditarily right-maximal terms and the equivalence between the decidability of type inhabitation for ${\sf BB'IW}$ and the decidability of type inhabitation for HRM-terms. The principle of our proof is depicted on Figure~\ref{fig_proof_principle}. 

In Sections  2 and 3 we provide for each formula $\phi$ a partial characterisation of the inhabitants of $\phi$ in normal form and of minimal size. We show that all those inhabitants belong to two larger sets of terms, the set of {\em compact} and {\em locally compact} inhabitants~of~$\phi$. 
 
In Section 4 we show how to associate, with each locally compact inhabitant $M$ of a formula $\phi$, a labelled tree with the same tree structure as $M$. We call this tree the {\em shadow} of $M$.  We define for shadows the analogue of compactness for terms and prove that the shadow of a compact term is itself compact.

Finally, in Section \ref{sect_finite}, we prove that for each formula $\phi$ the set of all compact shadows of inhabitants of $\phi$ is a finite set (hence the set of compact inhabitants of $\phi$ is also a finite set), and that this set is effectively computable from $\phi$. The proof appeals to Higman Theorem and Kruskal Theorem -- more precisely, to Melli\`es' Axiomatic Kruskal Theorem. 

The decidability of the type inhabitation problem for HRM-terms and the decidability of $T_\to$ follow from this last key result: given an arbitrary formula $\phi$, this formula is inhabited if and only if there exists a compact shadow  with the same tree structure as an inhabitant of $\phi$, and our key lemma proves that the existence of such a shadow is decidable. 
\begin{figure}
\begin{center}
\epsfig{scale=1.2, file=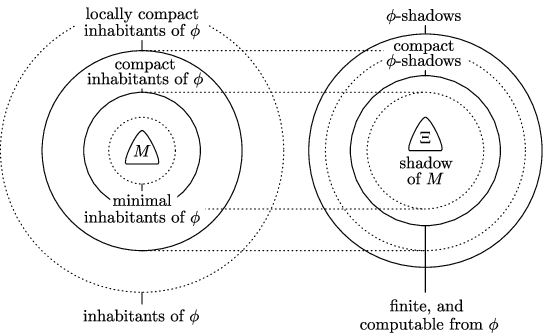}
\end{center}
\caption{\label{fig_proof_principle} Principle of the proof of decidability of type inhabitation for HRM-terms.}
\end{figure}
\section*{Preliminaries}
The first section of this paper assumes some familiarity with pure and simply-typed lambda-calculus and with the usual notions of $\alpha$-conversion, $\beta$-reduction and $\beta$-normal form \cite{Barendregt1984,Krivine1993}. The last three notions are not essential to our discussion, as we later focus exclusively on a particular set of simply-typed terms in $\beta$-normal form. We shall briefly recall the definitions and results used  in Section 1.

 The set of {\em terms of pure lambda-calculus} ($\lambda$-{\em terms}) is inductively defined by:
\begin{itemize}
\item every variable $x$ is a $\lambda$-term,
\item if $M$ is a $\lambda$-term and $x$ is a variable, then $(\lambda xM)$ is a $\lambda$-term,
\item if $M, N$ are $\lambda$-terms, then $(MN)$ is a $\lambda$-term.
\end{itemize}
Terms yielded by the second and third rules are called {\em abstractions} and {\em applications} respectively. The parentheses surrounding applications and abstractions are often omitted if unambiguous. %The outermost parentheses are not written.
We let $\la x_1\dots x_n.M N_1\dots N_p$ abbreviate $(\la x_1(\dots (\la x_n(((MN_1)\dots)N_p))\dots))$. 
For instance, $\la xy.x(xy) z$ stands for $(\la x(\la y((x(xy))z)))$.

 The {\em bound variables} of $M$ are all $x$ such that $\la x$ occurs in $M$. A variable $x$ is {\em free} in $M$ if and only:
\begin{itemize}
\item $M = x$, or,
\item $M =\la y.N$, $y\neq x$ and $x$ is free in $N$, or,
\item $M = NP$ and $x$ is free in $N$ or free in $P$.
\end{itemize}
A {\em closed} term is a term with no free variables. The {\em raw substitution of $N$ for $x$ in $M$}, written $M\langle x\leftarrow N\rangle$, is the term obtained by substituting $N$ for every free occurrence of $x$ in $M$ (every occurrence of $x$ that is not in the scope of a $\la x$). We require this substitution to avoid variable capture (for all $y$ free in $N$, no free occurrence of $x$ in $M$ is in the scope of a $\la y$):
\begin{itemize}
\item  if $y = x$, then $y\langle x\leftarrow N\rangle$ is equal to $N$, otherwise it is equal to $y$,
\item  $(\lambda x.M)\langle x\leftarrow N\rangle = \lambda x.M$,
\item if $y\neq x$ and $y$ is free in $N$, then  $(\lambda y.M)\langle x\leftarrow N\rangle$ is undefined,
\item if $y\neq x$, $y$ is not free in $N$ and $M\langle x\leftarrow N\rangle = M'$, then $(\la y.M)\langle x\leftarrow N\rangle = \la y.M'$,
\item if $M_1\langle x\leftarrow N\rangle = M'_1$ and  $M_2\langle x\leftarrow N\rangle = M'_2$, then  $(M_1M_2)\langle x\leftarrow N\rangle = (M'_1M'_2)$.
\end{itemize}
The {\em $\alpha$-conversion} is defined as the least binary relation $\equiv_\alpha$ such that:
\begin{itemize}
\item $x\equiv_\alpha x$,
\item if $M\equiv_\alpha M'$, $y$ is not free in $M'$ and $M'\langle x\leftarrow y\rangle = M''$, then $(\lambda x.M)\equiv_\alpha (\lambda y.M'')$
\item  if $M_1\equiv_\alpha M'_1$ and $M_2\equiv_\alpha M'_2$, then $(M_1M_2)\equiv_\alpha (M'_1M'_2)$. 
\end{itemize}
For instance $\la x.y\equiv_\alpha \la z.y\not\equiv_\alpha \lambda y.y$. It is a common practice to consider $\la$-terms up to $\alpha$-conversion, however we will not follow this practice in our exposition. 

The {\em  $\beta$-reduction} is the least binary relation $\beta$ satisfying:
\begin{itemize}
\item if $M\equiv_\alpha (\la x.N) P$ and $N\langle x\leftarrow P\rangle = N'$, then  $M\beta N'$.
\item if $M\beta M'$, then $(\la x.M)\beta(\la x.M')$, $(MN)\beta(M'N)$ and $(NM)\beta(NM')$.
\end{itemize} 
In the first rule, $x$ is not necessarily free in $N$, so we may have $N = N'$ -- in particular, free variables may disappear in the process of reduction.

We write $\beta^*$ for the reflexive and transitive closure of $\beta$. A term $M$ is {\em in $\beta$-normal form} -- or {\em $\beta$-normal} -- if there is no $M'$ such that $M\beta M'$. A term $M$ is {\em normalising} if there is a normal $N$ -- called {\em normal form} of $M$ -- such that $M\beta^*N$. It is {\em strongly normalising} if there is no infinite sequence $M = M_0\beta M_1\beta M_2\dots$ 

 It is well-known that $\beta$-conversion enjoys the {\em Church-Rosser property}: if $M\beta^* N$ and $M\beta^* N'$, then there exist two $\alpha$-convertible $P, P'$ such that $N\beta^* P$ and   $N'\beta^* P'$. As a consequence, if a term is normalising then its normal form is unique up to $\alpha$-conversion.

The judgment ``assuming $x_1,\dots, x_n$ are of types $\psi_1,\dots \psi_n$, the term $M$ is of type $\phi$'', written $\{x_1:\psi_1,\dots, x_n:\psi_n\}\vdash M:\phi$, where $\psi_1,\dots,\psi_n,\phi$ are formulas of propositional calculus and $x_1,\dots, x_n$ are distinct variables, is defined by:
\begin{itemize}
\item  $\Gamma\vdash x:\psi$ for each $x:\psi\in\Gamma$,
\item  if $\Gamma\cup\{x:\phi\}\vdash M:\psi$, then $\Gamma\vdash \la x.M:\phi\to\psi$.
\item  if $\Gamma\vdash M:\phi\to\psi$ and  $\Gamma\vdash N:\phi$, then $\Gamma\vdash (M N):\psi$
\end{itemize} 
The {\em simply-typable terms} are all $M$ for which there exist $\Gamma$, $\phi$ such that $\Gamma\vdash M:\phi$. Note that $\Gamma$ contains all variables free in $M$. 
The following properties are well-known:
\begin{enumerate} 
\item  (Strong normalisation)  If $\Gamma\vdash M:\phi$, then $M$ is strongly normalising.
\item (Subject reduction) If $\Gamma\vdash M:\phi$ and $M\beta N$, then $\Gamma\vdash N:\phi$.
\end{enumerate}
\section{From ${\sf BB'IW}$ to simply-typed lambda-calculus}\label{sect_lambda}
The aim of this first section is to provide a precise characterisation of simply-typable terms that are typable with inhabited types in ${\sf BB'IW}$, so as to transform the problem of type inhabitation in ${\sf BB'IW}$  into a type inhabitation problem in lambda-calculus. 
The types of atomic combinators in ${\sf B B' I W}$ are also types for their respective counterparts $\la fgx.f(gx)$,  $\la fgx.g(fx)$, $\la x.x$, $\la hx.h x x$ in lambda-calculus, hence to each inhabited type $\phi$ in ${\sf B B' I W}$  corresponds at least one closed $\la$-term of type $\phi$. Moreover, subject reduction and strong normalisation (see above) also ensure the existence of a closed normal $\la$-term of type $\phi$. What we lack is a criterion to distinguish amongst all typed normal forms the ones that are reducts of translations of combinators within  ${\sf B B' I W}$.

The material and the results of this section are not new \cite{Bunder1996,BrodaDF2004}. The reader may as well skip the contents of Sections \ref{sect_preserve} and \ref{sect_equiv} entirely, accept Lemma \ref{main_equiv} then go on with the study of stable parts and blueprints in Section \ref{sect_blue}.

The definition of hereditarily right-maximal terms is an adaptation of the definition given in \cite{Bunder1996}. The proof of Lemma \ref{subject} (subject reduction for HRM-terms) is similar to the proof of Property 2.4, p.375 in \cite{BrodaDF2004}. 
The right-to-left implication of Lemma~\ref{main_equiv} can be deduced from Property 2.20, p.390 in \cite{BrodaDF2004}, although our proof method seems to be simpler.

\subsection{Lambda-calculus}\label{la_calc}
Let ${\cal X}$ be a countably infinite set of variables $x,y,z\dots$ together with a one-to-one function ${\cal O}$ from ${\cal X}$ to $\nat$. For all $x,y$ in ${\cal X}$, we write $x < y$ when ${\cal O}(x) < {\cal O}(y)$.
%Let $x_0, x_1\dots$ be different variables. We write $x_i < x_j$ when $i < j$. 
Throughout the sequel, by {\em term} we always mean a term of lambda-calculus built over those variables. For each term $M$, we write $\free(M)$ for the strictly increasing sequence of all free variables~of~$M$. 

Terms are {\em not} identified modulo $\alpha$-conversion - apart from Section 1, all considered terms will be in normal form, and the Greek letters $\alpha$, $\beta$ will be even used with new meaning at the beginning of Section~2. 
We adopt however the usual convention according to which two distinct $\lambda$'s may not bound the same variable in a term, and no variable can be simultaneously free and bound in the same term. 
\subsection{Hereditarily right-maximal terms}
\begin{definition} The set of {\em hereditarily right-maximal} (HRM) terms is inductively defined as follows:
\begin{enumerate}
\item Each variable $x$ is HRM.
\item If $M$ is HRM and $x$ is the greatest free variable of $M$ 
then $\la x.M$ is HRM.
\item If $M,  N$ are HRM, and for each free variable $x$ of $M$ there exists a free variable $y$ of $N$ such that $x\leq y$, then $(MN)$ is HRM.
\end{enumerate}
\end{definition}
The second rule ensures that all HRM-terms are $\lambda_I$-terms, that is, terms in which every subterm $\la x.M$ is such that $x$ is free in $M$. As a consequence the set of free variables of an HRM-term is preserved under $\beta$-reduction. As we shall see below (Lemma \ref{subject}), right-maximality can also be preserved at the cost of appropriate bound variable~renamings.

In the third rule, if $N$ is closed then so is $M$. When $M$ and $N$ are non-closed terms, the greatest free variable of $M$ is less than or equal to  the greatest free variable of $N$.
For instance, if $f < g < x$ and $h < x$, then $\lambda fgx.f(gx)$,  $\lambda fgx.g(fx)$, $\la x.x$, $\la hx.hxx$ are HRM, whereas $\la yz.z y$ is not, no matter if $y < z$ or $y > z$.
\begin{definition}
Let $\Omega$ be a function mapping each variable to a formula, in such a way that $\Omega^{-1}(\phi)$ is an infinite set for each $\phi$. We extend this function to the set of all strictly increasing finite sequences of variables, letting
 $\form(x_1,\dots,x_n) =(\form(x_1),\dots,\form(x_n))$. 
\end{definition}
\begin{definition}The judgment $M:\phi$, in words ``$M$ is of type $\phi$ w.r.t $\Omega$'', is defined by:
\begin{itemize}
\item if $\Omega(x) = \phi$, then  $x:\phi$,
\item if $x:\chi$, $M :\psi$ and $\la x.M$ is HRM, then  $\la x.M :\chi\to\psi$,
\item if $M :\chi\to\psi$, $N:\chi$ and $(M N)$ is HRM, then $(MN):\psi$.
\end{itemize}
\end{definition}
The function $\Omega$ will remain fixed throughout our exposition. Accordingly the type of a term $M$ w.r.t $\Omega$ will be called {\em the} type of $M$, without any further reference to the choice of $\Omega$. Note that every typed term is HRM. 
\begin{definition}
We write $\lamt$ for the set of all typed terms in $\beta$-normal form. We call {\em $\lamt$-inhabitant of $\phi$} every closed term {$M\in\lamt$} of type $\phi$.
\end{definition}
 The next lemma is the well-known subformula property of simply-typed lambda-calculus:
\begin{lemma} \label{normal_propto} (Subformula Property) Let $M$ be a $\lamt$-inhabitant of $\phi$. The types of the subterms of $M$ are subformulas of $\phi$.
\end{lemma}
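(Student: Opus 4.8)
The plan is to prove the statement by an induction that proceeds from the root of $M$ down to its leaves, carrying a strengthened invariant. Since $M$ is closed, typed, and $\beta$-normal, every subterm $P$ of $M$ is again a normal term whose free variables are exactly the variables bound by the $\la$'s that surround $P$ in $M$. I would establish the following invariant for every subterm $P$ of $M$: the type of $P$ is a subformula of $\phi$, and the type $\form(x)$ of every free variable $x$ of $P$ is also a subformula of $\phi$. Applied to $P = M$ and to all subterms of $M$, the first clause of the invariant is exactly the lemma.

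For the base case $P = M$: as $M$ is a closed $\lamt$-inhabitant of $\phi$, its type is $\phi$ and it has no free variables, so the invariant holds trivially. For the inductive step I assume the invariant for $P$ and derive it for the immediate subterms of $P$. Because $P$ is in $\beta$-normal form, it is either an abstraction $\la x.Q$ or an application whose head is a variable, i.e.\ of the form $y N_1\dots N_p$ with $p\geq 0$; a head of the form $\la x.R$ is impossible, as it would produce a redex.

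In the application case $P = y N_1\dots N_p$ of type $\rho$, the head $y$ is free in $P$, so by the invariant $\form(y)$ is a subformula of $\phi$. Reading the typing rules backwards gives $\form(y) = \chi_1\to\dots\to\chi_p\to\rho$, so each argument type $\chi_i$ and the result type $\rho$ are subformulas of $\form(y)$, hence of $\phi$ by transitivity of the subformula relation. Since the free variables of each $N_i$ are among those of $P$, the invariant for $P$ already guarantees that their $\form$-types are subformulas of $\phi$, and $N_i$ has type $\chi_i$, a subformula of $\phi$; thus the invariant passes to every $N_i$. In the abstraction case $P = \la x.Q$ of type $\form(x)\to\psi$, the invariant makes $\form(x)\to\psi$ a subformula of $\phi$, so both $\form(x)$ and $\psi$ are subformulas of $\phi$. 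The free variables of $Q$ are those of $P$ together with $x$, and all their $\form$-types are now known to be subformulas of $\phi$; moreover $Q$ has type $\psi$, again a subformula of $\phi$. Hence the invariant propagates to $Q$.

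The only delicate point is the formulation of the strengthened invariant. The bare assertion ``the types of subterms are subformulas of $\phi$'' is not directly inductive: handling an application requires knowing that the type of its head variable is a subformula of $\phi$, and handling an abstraction requires knowing that the bound variable's type is a subformula of $\phi$. Once the invariant is stated so as to record this information about the types of free variables, every individual step collapses to the transitivity of the subformula relation, and no genuine obstacle remains.
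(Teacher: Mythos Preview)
Your argument is correct and is precisely the standard proof of the subformula property for $\beta$-normal simply-typed terms: strengthen the statement to track the types of free variables, then descend along the spine decomposition $\la x_1\dots x_m.\,y\,N_1\dots N_p$. The paper itself gives no proof of this lemma; it merely cites it as ``well-known'', so there is nothing to compare against. One small point you might make explicit: when $P = y\,N_1\dots N_p$, the intermediate applications $y\,N_1\dots N_k$ for $k<p$ are also subterms of $M$, and their types $\chi_{k+1}\to\dots\to\chi_p\to\rho$ are subformulas of $\form(y)$ as well---this is implicit in your reasoning but not stated.
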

%\begin{proof}
%An immediate induction on $N$ shows that for every normal $N:\phi$ (not necessarily closed), the type of each subterm of $N$ (including free and bound variables) is a subformula of $\phi$ or a subformula of a free variable of $N$.
% \end{proof}
\subsection{Subject reduction of hereditarily right-maximal terms}\label{sect_preserve}
\begin{lemma}\label{subject} Suppose there exists a closed $M:\phi$. Then $\phi$ is $\lamt$-inhabited.
\end{lemma}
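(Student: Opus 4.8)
The plan is to combine the strong normalisation property recalled in the preliminaries with a subject-reduction property for the HRM typing system. First I would observe that a closed HRM-typed term is simply typable in the ordinary sense: deleting the side conditions ``$\ldots$ is HRM'' from the three HRM typing rules turns a derivation of $M:\phi$ into an ordinary derivation $\vdash M:\phi$ whose context assigns $\Omega(x)$ to each free variable $x$ of $M$ (here none, since $M$ is closed). Hence $M$ is strongly normalising; fix a $\beta$-normal form $N$ of $M$. It then suffices to prove subject reduction for HRM-typing up to $\alpha$-conversion: if $M:\phi$ and $M\,\beta\,M'$, then $M'\equiv_\alpha M''$ for some HRM-typed $M'':\phi$. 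Iterating this along a reduction $M\,\beta^*\,N$ yields an HRM-typed $N''\equiv_\alpha N$ of type $\phi$; as $\alpha$-conversion preserves both normal forms and free variables, $N''$ is a closed term of $\lamt$ of type $\phi$, that is, a $\lamt$-inhabitant of $\phi$.

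I would prove the single-step statement by induction on the position of the contracted redex. The congruence cases are those where the redex lies strictly inside $M=\la z.M_0$ or $M=(M_1M_2)$. Here the crucial simplification is that, HRM-terms being $\lambda_I$-terms, the set of free variables of every subterm is preserved by the reduction; consequently the greatest-free-variable condition of rule 2 and the domination condition of rule 3 at the root of $M$ are preserved verbatim. One applies the induction hypothesis to the immediate subterm containing the redex, obtains an $\alpha$-variant that is HRM-typed with the same free variables, re-attaches it, and renames its bound variables to fresh ones (available since each $\Omega^{-1}(\phi)$ is infinite) so as to respect the no-capture convention. The only genuine case is therefore the head redex $M\equiv_\alpha(\la x.P)Q$ with $M'=P\langle x\leftarrow Q\rangle$, which I would isolate as a substitution lemma.

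The substitution lemma asserts that $P\langle x\leftarrow Q\rangle$ is $\alpha$-convertible to an HRM-typed term of the type of $P$, and I would prove it by induction on $P$. Two structural facts drive the argument. First, since $\la x.P$ is HRM, $x$ is the greatest free variable of $P$; a short lemma then shows that the greatest free variable of an HRM term, whenever it occurs on the left of an application, occurs also on the right, so that the occurrences of $x$ in $P$ sit in right-descending positions. Substituting $Q$ — whose free variables dominate those of $\la x.P$ by rule 3 of the redex — at those positions therefore re-establishes the domination condition of rule 3 at every application. Second, at each abstraction $\la y.R$ of $P$ the bound variable $y$ was the greatest free variable of $R$, but the inserted copies of $Q$ may introduce free variables exceeding $y$; this is repaired by renaming $y$, and the bound variables above it, to sufficiently large fresh variables of the same type, again using that $\Omega^{-1}$ has infinite fibres.

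The main obstacle is precisely this substitution lemma: one must perform the renaming of bound variables consistently, from the inside out, so that the greatest-free-variable conditions of rule 2 and the domination conditions of rule 3 are re-established simultaneously at every node, while avoiding variable capture and preserving all types. Everything hinges on $x$ being the maximum free variable of $P$, which confines the inserted copies of $Q$ to positions where they can only help the ordering; without this, the result could fail to be HRM-able at all, as the non-HRM normal form $\la yz.zy$ already shows. This step is the technical heart of the lemma, and parallels Property 2.4 of \cite{BrodaDF2004}.
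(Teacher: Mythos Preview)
Your global plan---strong normalisation of the underlying simply typed term, then a one-step subject reduction for the HRM typing, then iterating---is exactly the paper's. The congruence cases are handled as you say, using that HRM-terms are $\lambda_I$-terms. The divergence, and the gap, is in the substitution lemma.

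You propose to prove it by induction on $P$ using the hypothesis ``$x=\max(\free(P))$''. This invariant does not survive descent through an abstraction: if $P=\la y.R$ then $y=\max(\free(R))>x$, so the induction hypothesis no longer applies to $R$. Your ``repair'' (rename $y$ to something large afterwards) does not restore the hypothesis for $R$; it only fixes the root of $\la y.R$ once you already know the substituted body is HRM, which is precisely what you were trying to prove. Relatedly, the claim that ``the greatest free variable of $P$, whenever it occurs on the left of an application, also occurs on the right'' is false as soon as one crosses an abstraction: in $P=\la y.(xy)$ with $x<y$ the variable $x$ occurs only on the left of the inner application. What \emph{is} true is that the greatest free variable of each HRM application lies in its right argument---but under $\la y$ that greatest variable is $y$, not $x$.

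The paper's substitution lemma fixes this by weakening the invariant: it allows $P$ to have free variables $z>x$, provided each such $z$ exceeds $\max(\free(Q))$ (and, when $Q$ is closed, that $x$ is the \emph{least} free variable of $P$). This condition \emph{is} stable when one passes from $\la y.R$ to $R$, because the new free variable $y$ was bound in $P$ and can be assumed large enough. The paper secures this assumption by first $\alpha$-converting $P$ so that all its bound variables exceed $\max(\free(Q))$; the substitution then yields an HRM term directly, with no on-the-fly renaming. In short: do the renaming once, before the induction, and carry the stronger invariant that any free variable of $P$ above $x$ already dominates $Q$.
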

\begin{proof} (1) We leave to the reader the proof of the fact that for every variable $y$ and for every $N:\phi$, there exists $N'\equiv_\alpha N$ such that $N':\phi$ and every bound variable of $N'$ is strictly greater than $y$. 

(2) We prove the following proposition by induction on $P$.
 Let $P, Q$ be typed HRM-terms. Suppose:
\begin{itemize}
\item $x$ and $Q$ are of the same type, 
\item if $Q$ is closed and $x\in\free(P)$, then $x = \min(\free(P))$
\item if $Q$ is not closed, then for all $z\in\free(P)$:

\ \ \ \ \  if $z < x$ then $z \leq \max(\free(Q))$,

\ \ \ \ \  if $x < z$ then $\max(\free(Q)) < z$.
\item if $Q$ is not closed, then $\max(\free(Q)) < z$ for all bound variables $z$ of $P$.
\end{itemize}
Then $R = P\langle x\leftarrow Q\rangle$ is defined, HRM and of the same type as $P$. The proposition is clear if $P$ is a variable.

Suppose $P = \lambda z.P'$. Then $\free(P') = \free(P)\cdot(z)$. By induction hypothesis  $R' = P'\langle x\leftarrow Q\rangle$ is defined, HRM and of the same type as $P'$. The variable $z$ is still the greatest free variable of $R'$ and $z$ is not free in $Q$, hence $R = \la z.R'$.

Suppose  $P = (P_1P_2)$. By induction hypothesis $R_i = P_i\langle x\leftarrow Q\rangle$ is defined, HRM and of the same type as $P_i$ for each $i\in\{1,2\}$. It remains to check that $R = (R_1R_2)$ is HRM. Assume $x$ is free in $P$ and $P_1$ is not closed. 

Suppose $\max(\free(P_1)) > x$. Then $\max(\free(P_1)) = \max(\free(R_1))\leq \max(\free(P_2)) = \max(\free(R_2))$. 

Suppose $\max(\free(P_1)) < x$. The term $Q$ cannot be closed, and $\max(\free(P_1)) = \max(\free(R_1))\leq \max(\free(Q))$. We have either $\max(\free(P_2)) = x$ and $\max(\free(R_2)) = \max(\free(Q))$, or $\max(\free(P_2)) > x$ and $\max(\free(P_2)) = \max(\free(R_2))$.

Otherwise $\max(\free(P_1)) = x$. Suppose $\max(\free(P_2)) > x$.  Then $\max(\free(P_2)) = \max(\free(R_2))$. If $Q$ is closed, then $\free(P_1) = (x)$ and $R_1$ is closed. Otherwise we have $\max(\free(R_1)) = \max(\free(Q)) \leq \max(\free(P_2))$. The remaining case is $\max(\free(P_2)) = x$. If $Q$ is closed then $\free(P_1) = \free(P_2) = (x)$ and $R_1$, $R_2$ are closed. Otherwise 
 $\max(\free(R_1)) = \max(\free(R_2) = \max(\free(Q))$.

(3) Assume $N:\phi$ and $N$ is not in normal form. We prove by induction on $N$ the existence of $N':\phi$ such that $N\beta N'$. If $N = \la x.P$, or if $N = (N_1 N_2)$ with $N_1$ or $N_2$ not in normal form, then the existence of $N'$ follows from the induction hypothesis and the fact that $\beta$-reduction preserves the set of free variables of an HRM-term. Otherwise $N = (\la x.P) Q$ where for each free variable $z$ of $\la x.P$, we have $z < x$ and there exists a free variable $y$ of $Q$ such that $z < y$. By (1) there exists $P'\equiv_\alpha P$ such that $P':\phi$ and no bound variable of $P'$ is less than or equal to a free variable of $Q$. The variable $x$ is the greatest free variable of $P'$. By (2), the term $N' = P'\langle x\leftarrow Q\rangle$ is well-defined, HRM and of the type $\phi$. Moreover $N\beta N'$.

 (4) We now prove the lemma. The term $M$ is a simply-typable HRM-term. The strong normalisation property implies the existence of a normal form $N$ of $M$. The term $N$ is still a closed term. By (1), there exists
$N'\equiv_\alpha N$ such that $N' :\phi$, that is, $\phi$ is $\lamt$-inhabited,
\end{proof}
\subsection{Equivalence between inhabitation in ${\sf BB'IW}$ and $\lamt$-inhabitation}\label{sect_equiv}
\noindent In the next three lemmas by $\phi_1\dots\phi_n\to\psi$ we mean the formula $(\phi_1\to(\dots (\phi_n\to\psi)\dots))$ if $n >0$, and otherwise the formula $\psi$. We write $\trite\phi$ for the judgment ``there exists within the basis ${\sf BB'IW}$ a combinator of type $\phi$''.
\begin{lemma}\label{step_1} If $\trite\phi$, then $\phi$ is $\lamt$-inhabited.
\end{lemma}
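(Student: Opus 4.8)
The plan is to translate each ${\sf BB'IW}$-combinator into a closed HRM-term of the same type, and then appeal to Lemma~\ref{subject} to replace this term by a $\lamt$-inhabitant. A derivation of $\trite\phi$ exhibits a combinator $C$ of type $\phi$, that is, a binary tree whose leaves are atomic combinators ${\sf B}$, ${\sf B'}$, ${\sf I}$, ${\sf W}$ and whose internal nodes are applications, such that the type $\phi$ is obtained from instances of the four axiom schemes by repeated modus ponens. I would first define a translation $C\mapsto\ov C$ sending $C$ to a $\la$-term of the same tree structure, in which each occurrence of an atomic combinator is replaced by its lambda-counterpart $\la fgx.f(gx)$, $\la fgx.g(fx)$, $\la x.x$, $\la hx.hxx$, and each application node is kept as an application. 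The claim to prove by induction on $C$ is that $\ov C$ is a closed HRM-term of the same type as $C$.

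For the base case I must check that each counterpart can be realised as a closed term of the relevant instance of its axiom scheme. Given concrete $\phi,\chi,\psi$, it suffices to choose the bound variables $f,g,x$ (resp.\ $h,x$) so that their $\Omega$-types match the premisses of the corresponding axiom --- e.g.\ $\Omega(f)=\chi\to\psi$, $\Omega(g)=\phi\to\chi$, $\Omega(x)=\phi$ for ${\sf B}$ --- and so that the ordering $f<g<x$ (resp.\ $h<x$) holds. Both demands can be met simultaneously because $\Omega^{-1}(\theta)$ is infinite for every $\theta$, so among the infinitely many variables of each required type we may always pick representatives in any prescribed order. A direct verification then shows that each such counterpart is HRM and receives precisely the intended instance of its axiom scheme, as already illustrated for $\la fgx.f(gx)$ in the discussion following the definition of HRM-terms.

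For the inductive step, suppose $C=(C_1C_2)$ with $C_1$ of type $\sigma\to\phi$ and $C_2$ of type $\sigma$. By induction hypothesis $\ov{C_1}$ and $\ov{C_2}$ are closed HRM-terms of types $\sigma\to\phi$ and $\sigma$; hence $\ov C=(\ov{C_1}\,\ov{C_2})$ is again closed. Since $\ov{C_1}$ has no free variable, the HRM side-condition on the application $(\ov{C_1}\,\ov{C_2})$ is vacuously satisfied, so $\ov C$ is HRM, and the application rule of the typing relation gives $\ov C:\phi$. This is where the translation is genuinely easy: because every counterpart is closed, every subterm of $\ov C$ is closed and the HRM constraint at each application node holds trivially, so no bookkeeping of free variables is needed.

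Putting these together, $\ov C$ is a closed term with $\ov C:\phi$, though in general not in $\beta$-normal form. Lemma~\ref{subject} applies verbatim to this situation and yields a closed normal term of type $\phi$, i.e.\ a $\lamt$-inhabitant of $\phi$. The only real point of care is the base case: reconciling the fixed assignment $\Omega$ with the need to produce the exact instance of each axiom scheme while respecting the strict variable ordering imposed by right-maximality. Once the infinitude of each $\Omega^{-1}(\theta)$ is invoked to secure suitable variables, the remainder is immediate.
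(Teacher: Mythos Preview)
Your proof is correct and follows essentially the same route as the paper: realise each atomic combinator by its $\lambda$-counterpart with variables chosen (using the infinitude of each $\Omega^{-1}(\theta)$) to satisfy both the type constraints and the ordering $f<g<x$, $h<x$; observe that application of closed typed terms yields a closed typed term (the paper phrases this as closure under modus ponens); then invoke Lemma~\ref{subject}. The only difference is presentational---you make the induction on the combinator explicit via a translation $C\mapsto\ov C$, whereas the paper compresses this into a single sentence about closure under modus ponens.
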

\begin{proof} If $f < g < x$ and $h < x$, then $\la x.x$, $\la f g x.f (gx)$,  $\la f g x.g (fx)$ and $\la hx.hxx$ are HRM. For each type $\phi$ of an atomic combinator, the variables $f, g, h, x$ can be chosen so that one of those terms is of type $\phi$. The set of all formulas $\phi$ for which there exists a closed $M$ of type $\phi$ is closed under modus ponens.  By Lemma \ref{subject}, every such formula is $\lamt$-inhabited.
\end{proof}
\begin{lemma}\label{extend} If $\trite \chi\to\psi$, then 
$\trite (\phi_1\dots\phi_n\to\chi)\to (\phi_1\dots\phi_n\to\psi)$ for all~$\phi_1,\dots,\phi_n$.
\end{lemma}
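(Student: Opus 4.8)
The plan is to argue by induction on $n$, using the combinator ${\sf B}$ — whose type is exactly the rule of left composition — to prepend the premises $\phi_1,\dots,\phi_n$ one at a time onto a fixed witness of $\chi\to\psi$.

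For the base case $n=0$, the target formula $(\phi_1\dots\phi_n\to\chi)\to(\phi_1\dots\phi_n\to\psi)$ is literally $\chi\to\psi$, so the conclusion is just the hypothesis $\trite\chi\to\psi$.

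For the inductive step, I would assume the statement holds for all sequences of length $n-1$ and apply the induction hypothesis to the tail $\phi_2,\dots,\phi_n$, obtaining $\trite(\phi_2\dots\phi_n\to\chi)\to(\phi_2\dots\phi_n\to\psi)$. Abbreviating $\chi^\ast=\phi_2\dots\phi_n\to\chi$ and $\psi^\ast=\phi_2\dots\phi_n\to\psi$, this reads $\trite\chi^\ast\to\psi^\ast$. Next I would instantiate the axiom scheme for ${\sf B}$, taking its schematic ``$\chi$'', ``$\psi$'', ``$\phi$'' to be $\chi^\ast$, $\psi^\ast$, $\phi_1$ respectively, which yields $\trite(\chi^\ast\to\psi^\ast)\to((\phi_1\to\chi^\ast)\to(\phi_1\to\psi^\ast))$. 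Since the collection of derivable formulas is closed under modus ponens — the same fact invoked in the proof of Lemma~\ref{step_1}, reflecting that application of well-typed combinators is again well-typed — these two judgments combine to give $\trite(\phi_1\to\chi^\ast)\to(\phi_1\to\psi^\ast)$, which is precisely $\trite(\phi_1\dots\phi_n\to\chi)\to(\phi_1\dots\phi_n\to\psi)$.

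I do not expect any genuine obstacle here: read at the level of combinators, the witness being constructed is simply the $n$-fold iterate ${\sf B}(\cdots({\sf B}\,X)\cdots)$ of left composition applied to a combinator $X$ of type $\chi\to\psi$. The only points requiring verification are the correct instantiation of the type of ${\sf B}$ and the closure of $\trite$ under modus ponens, both of which are routine, the latter having already been recorded.
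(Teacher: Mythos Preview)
Your proof is correct and follows exactly the approach the paper indicates: induction on $n$ with the inductive step supplied by an instance of ${\sf B}$ and modus ponens. The paper's own proof is the single line ``By induction on $n$, using left-applications of ${\sf B}$,'' which your argument spells out in full.
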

\begin{proof} By induction on $n$, using left-applications of ${\sf B}$.
\end{proof}
\begin{lemma}\label{apply_propto} Suppose $(i_1,\dots, i_n)$, $(j_1, \dots, j_m)$,  $(k_1,\dots, k_p)$ are strictly increasing sequences of integers, $\{k_1,\dots, k_p\} = \{i_1,\dots, i_n,j_1,\dots, j_m\}$, $n = 0$ or ($n > 0$, $m > 0$, $i_n\leq j_m$). If
\begin{enumerate}
\item $\trite \omega_{i_1}\dots \omega_{i_n}\to (\chi\to\psi)$,
\item $\trite \omega_{j_1}\dots \omega_{j_m}\to \chi$,
\end{enumerate}
then  $\trite \omega_{k_1}\dots \omega_{k_p}\to \psi$.
\end{lemma}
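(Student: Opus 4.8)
The plan is to read hypotheses (1) and (2) as $\trite\omega_{i_1}\dots\omega_{i_n}\to(\chi\to\psi)$ and $\trite\omega_{j_1}\dots\omega_{j_m}\to\chi$, witnessed by combinators $M$ and $N$, and to build a combinator of the required type that behaves like the \emph{application--merge} $\lambda\vec a.(M\,a_{i_1}\dots a_{i_n})(N\,a_{j_1}\dots a_{j_m})$, where $\vec a=a_{k_1}\dots a_{k_p}$ carries one variable per index of the merged sequence (shared indices being shared variables). The side condition $i_n\le j_m$ is exactly what makes this term hereditarily right-maximal: at the outermost application the greatest free variable of the $M$-part is $a_{i_n}$ and that of the $N$-part is $a_{j_m}$, so $i_n\le j_m$ is precisely the instance of the third HRM-rule needed there. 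The whole term being closed and HRM, the combinator must exist; the real work is to produce it \emph{inside} ${\sf BB'IW}$, i.e. without the forbidden ${\sf C}$ and ${\sf K}$.

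I would argue by induction on $n+m$, peeling the greatest index $k_p$, which by $i_n\le j_m$ is always $j_m$. The base case is $n=0$: the conclusion is $\omega_{j_1}\dots\omega_{j_m}\to\psi$, and since (1) now reads $\trite\chi\to\psi$, Lemma~\ref{extend} gives $\trite(\omega_{j_1}\dots\omega_{j_m}\to\chi)\to(\omega_{j_1}\dots\omega_{j_m}\to\psi)$, and modus ponens with (2) concludes. For the inductive step, suppose first that $a_{j_m}$ is \emph{not} shared, i.e. $i_n<j_m$; then $a_{j_m}$ feeds only $N$, and abstracting it turns the body into a pre-composition. There are two dual ways to prepare the hypotheses, both producing codomain $\omega_{j_m}\to\psi$ with $\omega_{j_m}$ deleted from the index data. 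Using the type of ${\sf B}$, namely $(\chi\to\psi)\to((\omega_{j_m}\to\chi)\to(\omega_{j_m}\to\psi))$, together with Lemma~\ref{extend} and modus ponens, (1) becomes $\trite\omega_{i_1}\dots\omega_{i_n}\to((\omega_{j_m}\to\chi)\to(\omega_{j_m}\to\psi))$ while (2) is unchanged; dually, using the type of ${\sf B'}$, namely $(\omega_{j_m}\to\chi)\to((\chi\to\psi)\to(\omega_{j_m}\to\psi))$, (2) becomes $\trite\omega_{j_1}\dots\omega_{j_{m-1}}\to((\chi\to\psi)\to(\omega_{j_m}\to\psi))$ while (1) plays the role of the argument. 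In the first rewriting $M$ stays in function position and the recursive orientation requires $i_n\le j_{m-1}$; in the second (the ${\sf B'}$-flip, which swaps the two hypotheses' roles) it requires $j_{m-1}\le i_n$. Since one of these always holds, the induction hypothesis applies and yields $\trite\omega_{k_1}\dots\omega_{k_{p-1}}\to(\omega_{j_m}\to\psi)=\trite\omega_{k_1}\dots\omega_{k_p}\to\psi$.

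If instead $i_n=j_m=a$, the greatest variable feeds both parts and must be contracted. Here I would first run the non-shared construction with $N$'s top index relabelled to a fresh $a'>a$, obtaining $\trite\omega_{k_1}\dots\omega_{k_{p-1}}\to(\omega_a\to(\omega_a\to\psi))$, and then use the type of ${\sf W}$, namely $(\omega_a\to(\omega_a\to\psi))\to(\omega_a\to\psi)$, through Lemma~\ref{extend} and one modus ponens, to collapse the two $\omega_a$'s into one. A suitable measure (number of shared indices, then $n+m$) keeps this well-founded. Everything stays on the combinatory side, so Lemma~\ref{step_1} plays no part in the present argument.

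The step I expect to be the genuine obstacle is the \emph{interleaving} (``crossing'') case, where an index of the first sequence lies strictly between two indices of the second, e.g. $i=(1,3)$, $j=(2,4)$. A naive left-to-right bracket abstraction gets stuck exactly there: after removing the top variable the residual term carries its larger free variable in \emph{function} position, which would demand ${\sf C}$. The device that unblocks this — and the reason the basis contains both ${\sf B}$ and ${\sf B'}$ — is the identity ${\sf B}\,X\,Y={\sf B'}\,Y\,X$ (both abbreviate ``$Y$ then $X$''), which lets one present the variable to be abstracted in \emph{argument} position, where the contraction-free ${\sf B}$-rule applies; this is precisely the ${\sf B'}$-flip used above. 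The crux is therefore to check that the inequality $i_n\le j_m$ is preserved, in the appropriate and possibly swapped orientation, at every recursive call, since this is what guarantees that ${\sf C}$ and ${\sf K}$ are never needed.
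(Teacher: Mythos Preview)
Your argument is correct and follows essentially the same route as the paper's: induction on $n+m$, choosing ${\sf B}$ when $i_n\le j_{m-1}$ and the ${\sf B'}$-flip otherwise, with ${\sf W}$ to contract the doubled top index. The one organisational difference is that the paper dispenses with your relabelling detour in the case $i_n=j_m$: it simply runs the ${\sf B'}$ branch as in the strict case and observes that, since $i_n$ already belongs to the merged prefix $\{j_1,\dots,j_{m-1},i_1,\dots,i_n\}$, the induction hypothesis directly returns $\trite\omega_{k_1}\dots\omega_{k_{p-1}}\to(\omega_{i_n}\to(\omega_{i_n}\to\psi))$, so a single induction on $n+m$ suffices and the auxiliary shared-index component of your measure is unnecessary.
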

\begin{proof} By induction on $n+ m$. The proposition is true if $n = m = 0$. Assume $n + m > 0$. Then $m > 0$.

Suppose $n = 0$. Then $(j_i, \dots, j_m) = (k_1,\dots, k_p)$. We have:\\
$\begin{array}{ll}
(\mbox{\rm i}) & \trite (\chi\to\psi)\to((\omega_{j_m}\to\chi)\to(\omega_{j_m}\to\psi))\\
(\mbox{\rm ii}) & \trite (\omega_{j_m}\to\chi)\to(\omega_{j_m}\to\psi)
\end{array}$\\
where: (i) is a type for ${\sf B}$; (ii) follows from (i), (1) and modus ponens. If $m = 1$ then $\trite \omega_{j_1}\to \psi$ follows from (ii), (2) and modus ponens. Otherwise $\trite \omega_{j_1}\dots \omega_{j_m}\to\psi$ follows from (ii), (2) and the induction hypothesis.

We now assume $n > 0$.
Suppose $m > 1$ and $i_n \leq j_{m-1}$. Then\\
$\begin{array}{ll}
(\mbox{\rm iii}) & \trite (\chi\to\psi)\to((\omega_{j_m}\to\chi)\to(\omega_{j_m}\to\psi))\\
(\mbox{\rm iv}) &\trite (\omega_{i_1}\dots \omega_{i_n}\to(\chi\to\psi))\to(\omega_{i_1}\dots \omega_{i_n}\to((\omega_{j_m}\to\chi)\to(\omega_{j_m}\to\psi)))\\
(\mbox{\rm v}) &\trite \omega_{i_1}\dots \omega_{i_n}\to((\omega_{j_m}\to\chi)\to(\omega_{j_m}\to\psi))
\end{array}$\\
where: (iii) is a type for ${\sf B}$; (iv) follows from (iii) and Lemma  \ref{extend}; (v) follows from (iv), (1) and modus ponens. We have
 $k_p = j_m$ and $\{k_1,\dots, k_{p-1}\} = \{i_1,\dots, i_{n},j_1,\dots, j_{m-1}\}$.  Since $i_n \leq j_{m-1}$, we have $\trite \omega_{k_1}\dots \omega_{k_{p-1}}\to(\omega_{j_m}\to\psi)$ by (v), (2) and the induction hypothesis.

Suppose $m = 1$ or ($m > 1$ and $i_n > j_{m-1}$). Then\\
$\begin{array}{ll}
(\mbox{\rm vi}) &\trite (\omega_{j_m}\to\chi)\to((\chi\to\psi)\to(\omega_{j_m}\to\psi))\\
(\mbox{\rm vii}) & \trite (\omega_{j_1}\dots\omega_{j_m}\to\chi)\to(\omega_{j_1}\dots\omega_{j_{m-1}}\to((\chi\to\psi)\to(\omega_{j_m}\to\psi)))\\
(\mbox{\rm viii}) & \trite \omega_{j_1}\dots\omega_{j_{m-1}}\to((\chi\to\psi)\to(\omega_{j_m}\to\psi))\\
(\mbox{\rm ix}) & \trite \omega_{n_1} \dots \omega_{n_{q}}\to(\omega_{j_m}\to\psi)
\end{array}$\\
where: (vi) is a type for ${\sf B'}$; (vii) follows from (vi) and  Lemma \ref{extend}; (viii) follows from (vii), (2) and modus ponens;  $\{n_1,\dots, n_{q}\} = \{j_1,\dots, j_{m-1},i_1,\dots, i_{n}\}$; (ix) follows from (viii), (1) and the induction hypothesis. 
If $j_m > i_n$, then  $(n_1,\dots, n_q, j_m) = (k_1,\dots, k_p)$. Otherwise $j_m = i_n$, $n_{q} = i_n$, $(n_1,\dots n_{q}) = (k_1,\dots, k_p)$ and\\
$\begin{array}{ll}
(\mbox{\rm x}) & \trite \omega_{k_1} \dots \omega_{k_{p-1}}\to(\omega_{i_n}\to(\omega_{i_n}\to\psi))\\
(\mbox{\rm xi}) &\trite (\omega_{i_n}\to(\omega_{i_n}\to\psi))\to(\omega_{i_n}\to\psi)\\
(\mbox{\rm xii}) &\trite (\omega_{k_1} \dots \omega_{k_{p-1}}\to(\omega_{i_n}\to(\omega_{i_n}\to\psi)))\to (\omega_{k_1} \dots \omega_{k_{p-1}}\to(\omega_{i_n}\to\psi))\\
(\mbox{\rm xiii}) & \trite \omega_{k_1} \dots \omega_{k_{p-1}}\to(\omega_{i_n}\to\psi)
\end{array}$\\
where: (x) is (ix); (xi) is a type for ${\sf W}$; (xii) follows from (xi) and Lemma \ref{extend}; (xiii) follows from (x), (xii) and modus ponens; (xiii) is  $\trite \omega_{k_1} \dots \omega_{k_{p}}\to\psi$.
\end{proof}
\begin{lemma}\label{main_equiv} For every formula $\phi$, we have $\trite\phi$ if and only if $\phi$ is $\lamt$-inhabited.
\end{lemma}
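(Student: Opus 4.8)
The plan is as follows. The left-to-right implication is already Lemma~\ref{step_1}, so it remains to establish the converse: if $\phi$ is $\lamt$-inhabited then $\trite\phi$. Fix a closed $M\in\lamt$ of type $\phi$. Since $M$ is closed and $\beta$-normal, I would read off $\trite\phi$ from the tree structure of $M$ by induction; the difficulty is that the proper subterms of $M$ are in general neither closed nor of type $\phi$, so the statement does not carry through an induction as it stands. I would therefore strengthen it into a claim about the free variables of an arbitrary subterm.

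Concretely, for each integer $i$ in the range of ${\cal O}$ write $\omega_i = \form({\cal O}^{-1}(i))$, so that $\omega_{{\cal O}(x)}$ is the type of the variable $x$, matching the notation of Lemma~\ref{apply_propto}. I would prove, by induction on $N$, the following claim: \emph{for every typed normal HRM-term $N$ of type $\chi$ whose free variables in increasing order are $x_{i_1} < \dots < x_{i_n}$ (with ${\cal O}(x_{i_t}) = i_t$), one has $\trite\omega_{i_1}\dots\omega_{i_n}\to\chi$.} Applied to $N = M$, whose sequence of free variables is empty, this yields $\trite\phi$ and closes the lemma. The induction is well-founded because every subterm of $M$ is again typed, $\beta$-normal and HRM.

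The three cases run as follows. If $N = x$ is a variable of type $\chi = \omega_{{\cal O}(x)}$, the required judgment is $\trite\omega_{{\cal O}(x)}\to\omega_{{\cal O}(x)}$, an instance of the type of ${\sf I}$. If $N = \la x.P$, then by the second rule defining HRM-terms $x$ is the greatest free variable of $P$, so $\free(P)$ is $\free(N)$ with ${\cal O}(x)$ appended at the end; writing $\chi\to\psi$ for the type of $N$, the induction hypothesis applied to $P$ gives exactly $\trite\omega_{i_1}\dots\omega_{i_n}\omega_{{\cal O}(x)}\to\psi$, which \emph{is} the desired $\trite\omega_{i_1}\dots\omega_{i_n}\to(\chi\to\psi)$. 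Finally, if $N = (P_1P_2)$ with $P_1:\chi\to\psi$ and $P_2:\chi$, let $(i_1,\dots,i_n)$ and $(j_1,\dots,j_m)$ be the orders of the free variables of $P_1$ and $P_2$; the induction hypothesis supplies premisses (1) and (2) of Lemma~\ref{apply_propto}, the sorted union of the two sequences of orders is precisely the sequence of orders of $\free(N)$, and Lemma~\ref{apply_propto} then delivers $\trite\omega_{k_1}\dots\omega_{k_p}\to\psi$, as wanted.

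The one substantive point to check — and the only place where the HRM restriction is used — is that the application case genuinely meets the hypotheses of Lemma~\ref{apply_propto}, whose side condition demands $n = 0$, or else $n,m > 0$ and $i_n \leq j_m$. This is exactly what the third clause of the definition of HRM-terms guarantees for $(P_1P_2)$: if $P_1$ is closed then $n = 0$, and otherwise every free variable of $P_1$, in particular its greatest, is bounded by some free variable of $P_2$, which forces $P_2$ to be non-closed and $\max(\free(P_1)) \leq \max(\free(P_2))$, i.e. $i_n \leq j_m$. I expect the bookkeeping of these increasing sequences of free variables — that abstraction appends the bound variable at the top and that application takes the sorted union of the two sequences — to be the only delicate part of the argument; the entire logical content is carried by the already-established Lemmas~\ref{step_1}, \ref{extend} and \ref{apply_propto}.
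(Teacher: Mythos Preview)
Your proposal is correct and follows exactly the paper's approach: the left-to-right direction is Lemma~\ref{step_1}, and the converse is the same strengthened induction on typed HRM-terms (proving $\trite\chi_1\dots\chi_n\to\psi$ from $\free(N) = (x_1,\dots,x_n)$ with $x_i:\chi_i$), invoking Lemma~\ref{apply_propto} in the application case. You have simply spelled out the variable and abstraction cases and the verification of the side condition of Lemma~\ref{apply_propto}, which the paper leaves implicit.
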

\begin{proof} The left to right implication is Lemma \ref{step_1}. Using Lemma~\ref{apply_propto} when $M$ is an application, an immediate induction on $M$ shows that if $M:\psi$, $\free(M) = (x_1,\dots, x_n)$ and $x_1:\chi_1,\dots, x_n:\chi_n$, then $\trite\chi_1\dots\chi_n\to\psi$
\end{proof}
\section{Stable parts and blueprints}\label{sect_blue}
The last lemma showed that the decidability of type inhabitation for ${\sf BB'IW}$  is equivalent to the decidability of $\lamt$-inhabitation. The sequel is devoted to the elaboration of a decision algorithm for the latter problem.

The problem we shall examine throughout Sections 2 and 3 is the following: if an inhabitant is not of minimal size, is there any way to transform it (with the help of grafts and/or another compression of some sort) into a smaller inhabitant of the same type? This question is not easy because we are dealing with a lambda-calculus restricted with strong structural constraints (righ-maximality). There are however simple situations in which an inhabitant is obviously not of minimal size.

Consider a $\lamt$-inhabitant $M$ and two subterms $N, P$ of $M$ such that $P$ is a strict subterm of $N$. Suppose:
\begin{itemize}
\item $N,P$ are applications of the same type or abstractions of the same type.
\item $\free(N) = X = (x_1,\dots, x_n)$, 
\item  $\free(P) = Y = (y^1_0,\dots,y^1_{p_1},\dots, y^n_0,\dots,y^n_{p_n})$
\item $\Omega(X) = (\chi_1,\dots, \chi_n)$,
\item  $\Omega(Y) = (\chi^1_0,\dots,\chi^1_{p_1},\dots, \chi^n_0,\dots,\chi^n_{p_n})$,
\item $\chi^i_j = \chi_i$ for each $i,j$.
\end{itemize} 
Then $M$ is not of minimal size. Indeed we can rename the free variables of $P$ (letting $\rho(y^i_j) = x_i$) so as to obtain a term $P'$ of the same size as $P$, of the same type and the same free variables as $N$. The subterm $N$ of $M$ can be replaced with $P'$ in $M$. The resulting term is a $\lamt$-inhabitant of the same type but of strictly smaller size.

This simple property is far from being enough to characterise the minimal inhabitants of a formula: there are indeed formulas with inhabitants of abitrary size in which this situation never occurs. What we need is a more flexible way to reduce the size of non-minimal inhabitants. In particular, we need a better understanding of our available freedom of action if we are to rename the free variables of a term -- possibly occurrence by occurrence -- and if we want to ensure that right-maximality is preserved. 
This section is devoted to the proof of two key lemmas that delimit this freedom.
\begin{itemize}
\item In Sections \ref{sect_tree}, \ref{subsect_bp} and \ref{subsect_bp} we show how to build from any term $M\in\lamt$ a partial tree labelled with formulas. This partial tree is called the {\em blueprint} of $M$. This blueprint can be seen as a synthesized version of $M$ that contains  all and only the information required to determine whether a (non-uniform) renaming of the free variables of $M$ will preserve hereditarily right-maximality.  
\item In Sections \ref{subsect_red_def} and \ref{subsect_extract} we introduce a rewriting relation on blueprints that allows one to ``extract'' sequences of formulas from a blueprint.
\item In section \ref{subsect_abstr_red} we prove our two key lemmas. 
Lemma \ref{abstract_extract} clarifies the link between the blueprints of $M$ and $\la x.M$ (provided both are in $\lamt$). This lemma proves in particular that the sequence of the types of the free variables of $M$ (that is, $\Omega(\free(M))$) can always be extracted from its blueprint. 
Lemma \ref{switch_var} shows that for {\em every} sequence of formulas $\overline\phi$ that can be extracted from the blueprint of $M$, there exists a (non-uniform) renaming of the free variables of $M$ that will produce a term $N$ of the same type and with the same blueprint as $M$, and such that $\Omega(\free(N)) = \ov \phi$.
\end{itemize}
As a continuation of our first example, let us examine the consequences of this last result. Consider again a $\lamt$-inhabitant $M$ and two subterms $N, P$ of $M$ such that $P$ is a strict subterm of $N$ and $N,P$ are applications of the same type or abstractions of the same type. Suppose:
\begin{itemize}
\item the sequence $\Omega(\free(N))$ can be extracted from the blueprint of  $P$.
\end{itemize} 
This situation is a generalization of the preceding one (in our first example $\Omega(X)$ could also be extracted from the blueprint of  $P$, see Definition \ref{def_linear}). The term $M$ is still not of minimal size. Indeed, we may use the second key lemma to prove the existence of (non-uniform) renaming of the free variables of $P$ that will produce a term $P'$ of the same type as $P$ such that $\free(P') = \free(N)$. The term $N$ can be replaced with $P'$ in $M$. 
\subsection{Partial trees and trees}\label{sect_tree}
\begin{definition}
Let $(\adr, \leq)$ be the set of all finite sequences over the set $\nat_+$ of natural numbers, ordered by prefix ordering. Elements of $\adr$ are called {\em addresses}. We call {\em partial tree} every function $\pi$ whose domain is a set of addresses. For each partial tree $\pi$ and for each address $a$, we let $\compdown{\pi}{a}$ denote the partial tree $c \mapsto \pi(a\cdot c)$ of domain $\{c\,|\,a\cdot c\in\dom(\pi)\}$. 
\end{definition}
\begin{definition}
For all partial trees $\pi, \pi'$ and for every address $a$, we let $\pi[a\leftarrow\pi']$ denote the partial tree $\pi''$ such that $\compdown{\pi''}{a} = \pi'$ and $\pi''(b) = \pi(b)$ for all $b\in\dom(\pi)$ such that $a\not\leq b$.
\end{definition}
\begin{definition}
A {\em tree domain} is a set $A\subseteq\adr$ such that for all $a\in A$: every prefix of $a$ is in $A$; for every integer $i > 0$, if $a\cdot(i)\in A$, then $a\cdot(j)\in A$ for each $j\in\{1,\dots, i -1\}$.  
A tree domain $A$ is {\em finitely branching} if and only if for each $a\in A$, there exists an $i > 0$
such that $a\cdot(i)$ is undefined.
We call {\em tree} every function whose domain is a tree domain.
\end{definition}
In the sequel terms will be freely identified with trees. We identify: $x$ with the tree mapping $\varepsilon$ to $x$; $\la x.M$ with the tree $\tau$ mapping $\varepsilon$ to $\la x$ and such that $\compdown{\tau}{(1)}$ is the tree of $M$; $(M_1M_2)$ with the tree $\tau$ mapping $\varepsilon$ to $@$ and such that  $\compdown{\tau}{(i)}$ is the tree of $M_i$ for each $i\in\{1,2\}$.
\subsection{Blueprints}\label{subsect_bp}
\begin{definition}\label{def_bp}
Let $\sign$ be the signature consisting of all formulas and all symbols of the form $@_\phi$ where $\phi$ is a formula. Each formula is considered as a symbol of null arity. Each  $@_\phi$ is of arity 2. 

We call {\em blueprint} every finite partial tree $\alpha: A\to\sign$ satisfying the following condition:  for each $a\in A$, if $\alpha(a) = @_\phi$, then $\compdown{\alpha}{a\cdot(1)}$ and  $\compdown{\alpha}{a\cdot(2)}$ are of non-empty domains. A {\em rooted} blueprint is a blueprint $\alpha$ such that $\varepsilon\in\dom(\alpha)$.

For each ${\cal S}\subseteq\sign$, we call {\em ${\cal S}$-blueprint} every blueprint whose image is a subset of ${\cal S}$. We write $\bp({\cal S})$ for the set of all ${\cal S}$-blueprints, and $\bpr({\cal S})$ for the set of all rooted ${\cal S}$-blueprints.
\end{definition}
\begin{definition}
For every blueprint $\alpha$ and every address $a$, the {\em relative depth of $a$ in $\alpha$} is the number of $b\in\dom(\alpha)$ such that $b < a$. The {\em relative depth of $\alpha$} is defined as $0$ if $\alpha$ is of empty domain, the maximal relative depth of an address in $\alpha$ otherwise.
\end{definition}
In the sequel the following notations will be used to denote blueprints (see Figure \ref{blueprint_cons}):
\begin{itemize}
\item $\ebp$ denotes the blueprint of empty domain.
\item we abbreviate $\varepsilon\mapsto\phi$ as $\phi$.
\item $@_\phi(\alpha_1,\alpha_2)$ denotes the (rooted) blueprint $\alpha$ such that $\alpha(\varepsilon) = \phi$, $\compdown{\alpha}{(1)} = \alpha_1$,  $\compdown{\alpha}{(2)} =~\alpha_1$.
\item for every sequence $\ov a = (a_1,\dots,a_k)$ of pairwise incomparable addresses, $\ast_{\ov a}(\alpha_1,\dots,\alpha_k)$  denotes the blueprint $\alpha$ of minimal domain such that $\compdown{\alpha}{a_i} = \alpha_i$ for each $i\in[1,\dots,k]$.
\item we let $\ast(\alpha_1,\dots,\alpha_k)$ denote the blueprint $\ast_{\ov a}(\alpha_1,\dots,\alpha_k)$ such that $\ov a = ((1),\dots,(k))$.
\end{itemize}
\begin{figure}
\begin{center}
\epsfig{scale=1.2, file=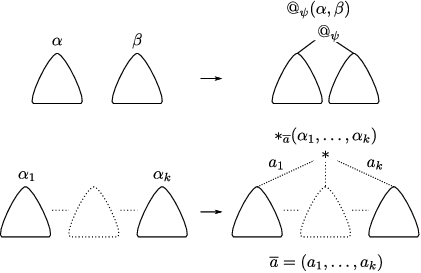}
\end{center}
\caption{\label{blueprint_cons} Construction of blueprints, with the notations of Section \ref{subsect_bp}. In the upper diagram, the blueprints $\alpha$ and $\beta$ must be non-empty. Although $\alpha_1,\dots,\alpha_k$ are displayed from left to right, the sequence $(a_1,\dots,a_k)$ needs not to be lexicographically ordered.}
\end{figure}
For each blueprint $\alpha$, the choice of $\ov a,\alpha_1,\dots,\alpha_k$ such that $\alpha = \ast_{\ov a}(\alpha_1,\dots,\alpha_k)$ is obviously not unique. The sequence $(\alpha_1,\dots,\alpha_k)$ may contain an arbitrary number of empty blueprints, hence the sequence $\ov a$ may be of arbitrary length. Also, $\alpha$ can be roooted (if $k = 1$, $a_1=\varepsilon$ and $\alpha_1$ is rooted) or empty (if $k = 0$ or $\alpha_1 = \dots = \alpha_k = \ebp$).  Those ambiguities will not be difficult to deal with, but they will require a few precautions in our proofs and definitions by induction on blueprints.
\subsection{Blueprint of a term}\label{subsect_bpt}
\begin{definition}\label{def_stable} For all $M\in\lamt$, 
the {\em stable part} of $M$ is the set of all  $a\in\dom(M)$ such that  $\free(\compdown{M}{a})\subseteq \free(M)$ and $\compdown{M}{a}$ is a variable or an application.
\end{definition}
It is easy to check that our conventions (no variable is simultaneously free and bound in a term) ensure that the stable part of a term does not depend on the choice of variable names. Since $M$ is in normal form, $M$ is of empty stable part if and only if it is closed. 
\begin{definition} For all $M\in\lamt$,  we call {\em blueprint of $M$} the function $\alpha$ mapping each $a$ in the stable part of $M$ to:
\begin{itemize}
\item $\psi$\ \ \  if $\compdown{M}{a}$ is a variable of type $\psi$,
\item $@_\psi$ if $\compdown{M}{a}$ is an application of type $\psi$.
\end{itemize}
We let $M\Vdash\alpha$ denote the judgment ``$M$ is of blueprint $\alpha$'' (Figure \ref{blueprint_stable}).
\end{definition}
\begin{figure}
\begin{center}
\epsfig{scale=1.2, file=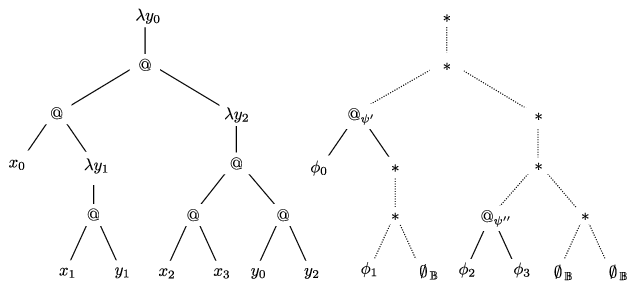}
\end{center}
\caption{\label{blueprint_stable} An element of $\lamt$ with its blueprint ($x_0 < x_1 < y_1$, $x_2 < x_3 < y_0 < y_2$, $x_1 < y_0 < y_2$).}
\end{figure}

If $M = (M_1 M_2)\in\lamt$, $M:\phi$, $M_1\Vdash\alpha_1$, $M_2\Vdash\alpha_2$, then  each $\alpha_i$ is of non-empty domain and $(M_1 M_2)\Vdash @_\phi(\alpha_1,\alpha_2)$ -- in other words the so-called blueprint of $M$ is indeed a blueprint, provided so are the blueprints of $M_1$, $M_2$. When $M = \la x.M_1$ the blueprint of $M$ is of the form $*(\alpha)$ -- the relation between $\alpha$ and the blueprint of $M_1$ in that case will be clarified by Lemma \ref{abstract_extract}.
\begin{lemma}\label{her_stable}
For all $M\in\lamt$ and forall $a\cdot b\in\dom(M)$:
\begin{enumerate}
\item If $\free(\compdown{M}{a\cdot b})\subseteq\free(M)$ then $\free(\compdown{M}{a\cdot b})\subseteq \free(\compdown{M}{a})$.
\item If $\compdown{M}{a}\Vdash\alpha$ and  $\compdown{M}{a\cdot b}\Vdash\beta$, then $\compdown{\alpha}{b} = \beta$.
\end{enumerate}
\end{lemma}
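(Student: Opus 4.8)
The plan is to prove (1) directly from our variable convention and then to derive (2) from (1) by comparing the two blueprints position by position; I expect the reverse domain inclusion in (2) to be the only real difficulty.

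For (1) I would reason on occurrences. Let $x\in\free(\compdown{M}{a\cdot b})$ and assume $\free(\compdown{M}{a\cdot b})\subseteq\free(M)$, so that $x$ is free in $M$. By our convention no variable of $M$ is at once free and bound, hence no $\lambda x$ occurs anywhere in $M$, and in particular none occurs on the path from $a$ to $a\cdot b$ inside $\compdown{M}{a}$. Since $\compdown{M}{a\cdot b}=\compdown{(\compdown{M}{a})}{b}$ and the relevant occurrence of $x$ is therefore captured by no abstraction of $\compdown{M}{a}$ lying above $b$, the variable $x$ is free in $\compdown{M}{a}$; as $x$ was arbitrary, $\free(\compdown{M}{a\cdot b})\subseteq\free(\compdown{M}{a})$.

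For (2) I would first reduce to a claim about the single term $N=\compdown{M}{a}$. A subterm of a typed $\beta$-normal term is again typed and $\beta$-normal, so $N$ and $\compdown{N}{b}=\compdown{M}{a\cdot b}$ both lie in $\lamt$ and carry the blueprints $\alpha$ and $\beta$. The identity $\compdown{\alpha}{b}=\beta$ is an equality of partial trees, which I would verify pointwise: for each address $c$, $\compdown{\alpha}{b}(c)=\alpha(b\cdot c)$ is defined exactly when $b\cdot c$ lies in the stable part of $N$, while $\beta(c)$ is defined exactly when $c$ lies in the stable part of $\compdown{N}{b}$. Whenever both are defined the common value is determined only by the subterm $S=\compdown{N}{b\cdot c}=\compdown{M}{a\cdot b\cdot c}$ and its type, both intrinsic to $S$; so the values always agree where both exist, and only the two domains remain to be matched. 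Both memberships share the condition that $S$ be a variable or an application; granting this, $b\cdot c$ lies in the stable part of $N$ iff $\free(S)\subseteq\free(N)$, whereas $c$ lies in the stable part of $\compdown{N}{b}$ iff $\free(S)\subseteq\free(\compdown{N}{b})$. Applying (1) to $N$ with the split at $b$ gives $\free(S)\subseteq\free(N)\Rightarrow\free(S)\subseteq\free(\compdown{N}{b})$, hence $\dom(\compdown{\alpha}{b})\subseteq\dom(\beta)$.

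The reverse inclusion is the step I expect to be the crux, and it is exactly the point at which the blueprint of an abstraction parts company with the blueprint of its body (the discrepancy to be quantified in Lemma~\ref{abstract_extract}): a variable free in $S$ might be free in $\compdown{N}{b}$ yet be bound between the root of $N$ and $b$, and so fail to be free in $N$. To rule this out I would invoke (1) once more, now applied to $M$ with the split at $a$: under the stability $\free(\compdown{M}{a\cdot b})\subseteq\free(M)$, every free variable of $\compdown{N}{b}=\compdown{M}{a\cdot b}$ is free in $M$, hence bound nowhere in $M$, hence uncaptured along the path from $a$ to $a\cdot b$, so that $\free(\compdown{N}{b})\subseteq\free(N)$. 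This upgrades $\free(S)\subseteq\free(\compdown{N}{b})$ to $\free(S)\subseteq\free(N)$ and yields $\dom(\beta)\subseteq\dom(\compdown{\alpha}{b})$; together with the pointwise agreement of values it gives $\compdown{\alpha}{b}=\beta$.
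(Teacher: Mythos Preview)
Your treatment of (1) is correct and coincides with the paper's: both use the bound-variable convention to see that any free variable of $\compdown{M}{a\cdot b}$ already free in $M$ must be free in $\compdown{M}{a}$.

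For (2) you are more explicit than the paper, and you have correctly located the only delicate point --- the reverse inclusion $\dom(\beta)\subseteq\dom(\compdown{\alpha}{b})$. Your forward inclusion, obtained by applying (1) to $N=\compdown{M}{a}$, is sound. But for the reverse inclusion you silently import the hypothesis $\free(\compdown{M}{a\cdot b})\subseteq\free(M)$, which item (2) as stated does not carry. Without it the equality fails: take $M=\lambda x.x$, $a=\varepsilon$, $b=(1)$; then $\alpha=\ebp$ while $\beta$ is the single-node blueprint $\form(x)$, so $\compdown{\alpha}{b}=\ebp\neq\beta$. The obstruction is precisely the one you name --- a variable bound between $a$ and $a\cdot b$ may be free in $\compdown{M}{a\cdot b}$ but not in $\compdown{M}{a}$ --- and nothing in the stated hypotheses of (2) excludes it.

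The paper's own proof glosses over this. From (1) it extracts the implication ``$a\cdot b$ in the stable part of $M$ $\Rightarrow$ $b$ in the stable part of $\compdown{M}{a}$'' (which, instantiated with $\compdown{M}{a}$ in place of $M$, is exactly your forward inclusion) and then simply asserts that ``the second proposition is equivalent to the first''. So the equality in (2) is intended either as the one-sided inclusion $\compdown{\alpha}{b}\subseteq\beta$ of partial functions, or under the tacit hypothesis of (1). You have in effect diagnosed this and supplied the missing assumption; just say so explicitly rather than invoking ``the stability $\free(\compdown{M}{a\cdot b})\subseteq\free(M)$'' as if it were given.
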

\begin{proof} The first proposition is a consequence of our bound variable convention (see Section \ref{la_calc}): if $\free(M) = X$, $\free(\compdown{M}{a}) = X'\cup Y$ where $X'\subseteq X$ and $X$, $Y$ are disjoint, then every element of $\free(\compdown{M}{a\cdot b})$ in $X$ is also an element of $X'$. Thus if 
$a\cdot b$ is in the stable part of $M$, then $b$ is also in the stable part of $\compdown{M}{a}$. The second proposition is equivalent to the first.
\end{proof}
\subsection{Extraction of the formulas of a blueprint}\label{subsect_red_def}
\begin{definition}\label{reduc}
The judgment ``$\beta$ is the blueprint obtained by extracting the formula $\phi$ at the address $a$ in the blueprint $\alpha$'', written $\alpha\rhd^a_\phi\beta$, is inductively defined~by:
\begin{enumerate}
\item $\phi\rhd^\varepsilon_\phi\ebp$,
\item if $\alpha\rhd^a_\phi\beta$, 
then  $@_\psi(\gamma,\alpha)\rhd^{(2)\cdot a}_\phi *(\gamma,\beta)$ 
\item if $\alpha\rhd^a_\phi\beta$, 
then $*_{(b,c_1,\dots,c_k)}(\alpha,\gamma_1,\dots,\gamma_k)\rhd^{b\cdot a}_\phi *_{(b,c_1,\dots,c_k)}(\beta,\gamma_1,\dots,\gamma_{n})$.
\end{enumerate} 
\end{definition}
\begin{figure}
\begin{center}
\epsfig{scale=1.2,file=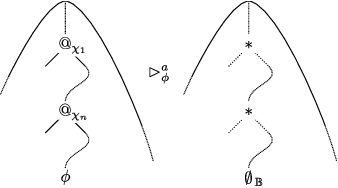}
\end{center}
\caption{\label{blueprint_red_def} Principle of blueprint reduction.}
\end{figure}
In (2) we assume of course that $\alpha$ and $\gamma$ are non-empty. In (3) we assume $b\neq\varepsilon$ in order to avoid circularity.

For instance (Figure~\ref{blueprint_multiple_red_bp_alone}):
\begin{figure}
\begin{center}
\epsfig{scale=1.2,file=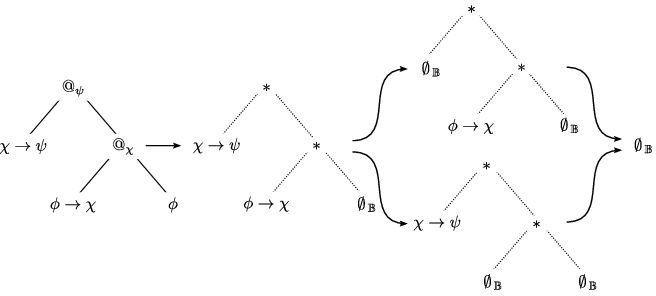}
\end{center}
\caption{\label{blueprint_multiple_red_bp_alone}Full reductions of $@_\psi(\chi\to\psi,@_\chi(\phi\to\chi,\phi))$ to $\ebp$.}
\end{figure}
\begin{itemize}
\item $\begin{array}[t]{lll}
@_\psi(\chi\to\psi,@_\chi(\phi\to\chi,\phi)) & \rhd^{(2,2)}_\phi & *(\chi\to\psi,*(\phi\to\chi,\ebp))\vspace{1ex}\\
&\rhd^{(2,1)}_{\phi\to\chi}& *(\chi\to\psi,*(\ebp,\ebp))\vspace{1ex}\\
&\rhd^{(1)}_{\chi\to\psi}& *(\ebp,*(\ebp,\ebp))= \ebp\vspace{1ex}
\end{array}$\vspace{1ex}
\item $\begin{array}[t]{lll}
@_\psi(\chi\to\psi,@_\chi(\phi\to\chi,\phi))&\rhd^{(2,2)}_\phi& *(\chi\to\psi,*(\phi\to\chi,\ebp))\vspace{1ex}\\
&\rhd^{(1)}_{\chi\to\psi} & *(\ebp,*(\phi\to\chi,\ebp))\vspace{1ex}\\
&\rhd^{(2,1)}_{\phi\to\chi}& *(\ebp,*(\ebp,\ebp))= \ebp
\end{array}$
\end{itemize}
When $\alpha\rhd^a_\phi\beta$, the blueprint $\beta$ can be seen as $\alpha$ in which the formula $\phi$ at $a$ is erased together with all $@$'s in the path to $a$. At each $@$ this path must follow the right branch of $@$. The constraints on the construction of blueprints imply the existence of at least one such path in every non-empty blueprint, even if it is not the blueprint of a term.
\subsection{Sets of extractible sequences}\label{subsect_extract}
\begin{definition}\label{def_linear} For each formula $\phi$, let $\rhd_\phi$ be the relation defined by: $\alpha\rhd_\phi\beta$ if and only if there exists $a$ such that $\alpha\rhd^a_\phi\beta$. We write $\rhd^+_\phi$ for the transitive closure of $\rhd_\phi$. For each  $\alpha$, we write $\linear(\alpha)$ for the set of all sequences $(\phi_1,\dots,\phi_n)$ such that $\alpha\rhd^+_{\phi_n}\dots\rhd^+_{\phi_1}\ebp$. 
\end{definition}
The set $\linear(\alpha)$ is what we called ``set of extractible sequences of $\alpha$'' in the introduction of Section \ref{sect_blue}. Note that $\linear(\ebp) = \{\varepsilon\}$. If $\alpha\neq\ebp$, then all elements of $\linear(\alpha)$ are non-empty sequences. Note also that each $\rhd$-reduction strictly decreases the cardinality of the domain of a blueprint, therefore  $\linear(\alpha)$ is a finite set for all $\alpha$. We now introduce the notion of {\em shuffle} which will allow us to characterise $\linear(\alpha)$ depending on the structure of $\alpha$.
\begin{definition}\label{lin_op} A {\em contraction} of a sequence $F$ is either the sequence $F$ or a sequence $G\cdot(f)\cdot H$ where $G\cdot(f)\cdot(f)\cdot H$  is a contraction of $F$. 
\end{definition}
\begin{definition}
For all finite sequences $F_1, \dots, F_n$ we call {\em shuffle} of $(F_1,\dots, F_n)$ every sequence $F^1_1\cdot\dots\cdot F^1_{n}\cdot\dots\cdot F^p_1\cdot\dots \cdot F^p_n$ such that $F^1_i\cdot\dots \cdot F^p_i = F_i$ for each $i$.  For each tuple of sets of finite sequences $({\cal F}_1,\dots, {\cal F}_n)$ we write $\circledast({\cal F}_1,\dots, {\cal F}_n)$ for the closure under contraction of the set of shuffles of elements of ${\cal F}_1\times\dots\times {\cal F}_n$.
\end{definition}
\begin{definition}
Given two non-empty finite sequences $F_1, F_2$, we call {\em right-shuffle} of $(F_1,F_2)$ every sequence  $F^1_1\cdot F^1_2\cdot\dots\cdot F^p_1\cdot F^p_2$ such that $F^1_i\cdot\dots F^p_i = F_i$ for each $i$ and $F^p_2\neq\varepsilon$. For each pair of sets of non-empty finite sequences $({\cal F}_1,{\cal F}_2)$  we write $\circledcirc({\cal F}_1,{\cal F}_2)$ for the closure under contraction of the set of right-shuffles of elements ${\cal F}_1\times{\cal F}_2$.
\end{definition}
\begin{figure}
\begin{center}
\epsfig{scale=1.2,file=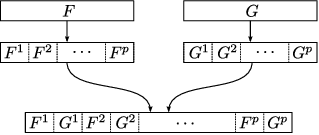}
\end{center}
\caption{\label{sequences_shuffle} Shuffling of two sequences. The chunks of $F$ and $G$ need not to be of the same size -- some of them can be empty. Every contraction of the resulting sequence belongs to $\circledast(F,G)$.  Each contraction belongs also to $\circledcirc(F,G)$ when $F, G$ are non-empty and the last chunk $G^p$ of $G$ is non-empty.}
\end{figure}
 The principle of (right-)shuffling is depicted on Figure \ref{sequences_shuffle}.
The following properties follow from our definitions and will be used without reference:
\begin{enumerate}
\item If $\alpha = \ebp$, then $\linear(\alpha) = \{\varepsilon\}$.
\item If $\alpha =\phi$, then $\linear(\alpha) = \{(\phi)\}$. 
\item If $\alpha = *_{\ov a}(\alpha_1,\dots,\alpha_k)$, then $\linear(\alpha) = \circledast(\linear(\alpha_1),\dots,\linear(\alpha_k))$.
\item If $\alpha = @_\phi(\alpha_1,\alpha_2)$, then $\linear(\alpha) = \circledcirc(\linear(\alpha_1),\linear(\alpha_2))$.
\end{enumerate}
\subsection{Abstraction vs. extraction}\label{subsect_abstr_red}
\begin{lemma}\label{confl_red}
Suppose $\{a_1,\dots,a_p\} = \{b_1,\dots, b_p\}$, and: 
\begin{itemize}
\item $\alpha \rhd^{a_1}_\chi\dots\rhd^{a_p}_\chi\beta$,
\item  $\alpha \rhd^{b_1}_\chi\dots\rhd^{b_p}_\chi\beta'$.
\end{itemize} 
Then $\beta = \beta'$.
\end{lemma}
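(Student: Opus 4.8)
The plan is to reduce the statement to a single, order-independent description of the net effect of an extraction sequence. The starting point is a concrete reading of one step. By induction on the derivation of $\alpha\rhd^a_\phi\beta$ (one case per clause of Definition~\ref{reduc}) I would first establish the following normal form for a single extraction, in which $<$ denotes the strict prefix order on addresses: $\alpha\rhd^a_\phi\beta$ holds precisely when $a\in\dom(\alpha)$, $\alpha(a)=\phi$, and the coordinate of $a$ immediately following every strict prefix $c$ of $a$ with $\alpha(c)=@_\psi$ equals $(2)$; and in that case $\beta$ is obtained from $\alpha$ by deleting from its domain the address $a$ together with every strict prefix $c$ of $a$ such that $\alpha(c)$ is an $@$-symbol, all remaining labels being left unchanged. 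The three clauses match the three cases (leaf, $@$-node exited rightwards, gluing node), and the verification is routine. The key qualitative point, used repeatedly afterwards, is that an extraction never rewrites a label: it only erases the chosen leaf and the $@$-symbols lying above it, turning the latter into unlabelled gluing nodes.

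Write $E^a$ for this deletion operator, so that $\dom(E^a\alpha)=\dom(\alpha)\setminus(\{a\}\cup\{c<a : \alpha(c)=@_\psi\})$ and $E^a\alpha$ agrees with $\alpha$ elsewhere. Given $\alpha\rhd^{a_1}_\chi\cdots\rhd^{a_p}_\chi\beta$, set $\alpha^{(0)}=\alpha$ and $\alpha^{(i)}=E^{a_i}\alpha^{(i-1)}$, so $\beta=\alpha^{(p)}$. I would then prove by induction on $p$ that $\beta=\alpha|_{\dom(\alpha)\setminus D}$, where $D=\{a_1,\dots,a_p\}\cup\{c\in\dom(\alpha) : \alpha(c)=@_\psi \text{ and } c<a_i \text{ for some } i\}$. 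The content of the induction is the observation that an $@$-symbol $c$ lying above one of the extracted leaves is deleted exactly once, namely at the first index $i$ (in the given order) with $c<a_i$: at that moment $c$ has not yet been erased, so it is still an $@$-symbol above $a_i$ and is removed by $E^{a_i}$, while at every later index it is already an unlabelled gluing node and is therefore neither re-deleted nor subject to any rightward condition. Since labels are never altered, $\alpha^{(i)}$ agrees with $\alpha$ off the nodes deleted so far, and letting $i$ run to $p$ yields the displayed formula for $\beta$.

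The crucial feature of $D$ is that it depends only on $\alpha$ and on the \emph{set} $\{a_1,\dots,a_p\}$, not on its enumeration. Applying the same computation to the second sequence gives $\beta'=\alpha|_{\dom(\alpha)\setminus D'}$ with $D'=\{b_1,\dots,b_p\}\cup\{c\in\dom(\alpha):\alpha(c)=@_\psi \text{ and } c<b_j\text{ for some }j\}$; since $\{a_1,\dots,a_p\}=\{b_1,\dots,b_p\}$ we get $D=D'$ and hence $\beta=\beta'$.

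The part that needs the most care — and the only place where order appears to matter — is the interaction of shared $@$-ancestors. Two extracted leaves $a_i,a_j$ may share an $@$-ancestor $c$ at which their paths diverge, one continuing into the right subtree and the other into the left; then the leaf going left can be extracted only after $c$ has been turned into a gluing node, so validity \emph{forces} a relative order between $a_i$ and $a_j$, and not every enumeration of $\{a_1,\dots,a_p\}$ is a legal extraction sequence. The lemma sidesteps this by hypothesis, since both enumerations are assumed derivable, and the bookkeeping above shows that whatever legal order is used, each such $c$ is still deleted exactly once and every surviving label is untouched, so the final blueprint is the same. I do not expect to need any confluence or termination machinery beyond this.
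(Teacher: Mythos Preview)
Your argument is correct. The characterisation of a single step $\alpha\rhd^a_\phi\beta$ as a pure domain restriction (delete the leaf $a$ together with every $@$-labelled strict prefix of $a$, leave all surviving labels untouched) follows by a direct induction on the derivation, and once that is in hand your formula
\[
\beta=\alpha\big|_{\dom(\alpha)\setminus D},\qquad
D=\{a_1,\dots,a_p\}\cup\{c\in\dom(\alpha):\alpha(c)\text{ is binary and }c<a_i\text{ for some }i\}
\]
follows by the routine bookkeeping you describe. The delicate point you single out---that a left-going leaf can only be extracted after its $@$-ancestor has been erased---is indeed the only place where order enters, and you handle it correctly: validity of both sequences is assumed, and your formula for $D$ is manifestly order-independent regardless of which legal order produced it.

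This is a genuinely different route from the paper's. The paper dispatches the lemma with a structural induction on $\alpha$: one decomposes $\alpha$ as $@_\psi(\alpha_1,\alpha_2)$ or $*_{\ov a}(\alpha_1,\dots,\alpha_k)$, observes that each $a_i$ and each $b_j$ lies in exactly one component, and applies the induction hypothesis to each component separately. Your approach instead computes the end result of an extraction sequence in closed form at the level of domains. What the paper's induction buys is brevity and uniformity with the style of the surrounding section; what yours buys is an explicit, reusable description of iterated extraction that makes the order-independence visible without any case analysis on the shape of $\alpha$. Either is adequate here.
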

\begin{proof} By an easy induction on $\alpha$.
\end{proof}
Recall that for every strictly increasing sequence of variables $X = (x_{1},\dots, x_{n})$, we write $\form(X)$ for the sequence of the types of $x_{1},\dots, x_{n}$. We now clarify the link between the blueprint  $\alpha$ of  a term $M$ and the one of $\la x.M$. 

The next lemma shows in particular that if $M, \la x.M\in\lamt$, then $M$ and $\la x.M$  are of blueprints $\alpha$ and $\beta$ if and only if there exist $a_0,\dots,a_p$ such that $\{a_0,\dots,a_p\} = \{a\,|\,\compdown{M}{a}~=~x\}$,  $\alpha\rhd^{a_0}_\chi\dots\rhd^{a_p}_\chi\alpha'$ and $\beta = *(\alpha')$ (Figure \ref{blueprint_abstr}).
\begin{figure}
\begin{center}
\epsfig{scale=1.2, file=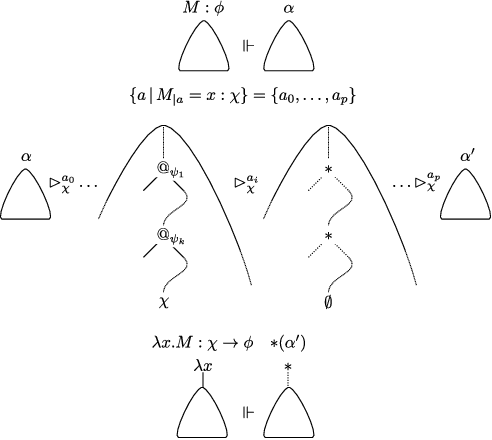}
\end{center}
\caption{\label{blueprint_abstr} How the blueprint of $M$ evolves into the blueprint of $\la x.M$}
\end{figure}
\begin{lemma}\label{abstract_extract} Suppose $M\in\lamt$ is of blueprint $\alpha$,  with $\free(M) = (x_1,\dots, x_n)$ and $\Omega(x_1,\dots, x_n) = (\chi_1,\dots,\chi_n)$. For each $i\in[0,\dots, n]$:
\begin{itemize}
\item let $\alpha_i$ be the restriction of $\alpha$ to $\dom(\alpha) \cap \{a\,|\,\free(\compdown{M}{a})\subseteq \{x_{1},\dots, x_{i}\}\}$.
%\item let $d_i$ be the sequence $(1,\dots,1)$ of length $(n - i)$, 
\item let $\beta_i$ be the blueprint of $\la x_{i + 1}\dots x_n.M$,
\end{itemize}
Then:
\begin{enumerate}
\item For each $i\in[0,\dots,n]$ we have $\dom(\beta_i) = \{1^{n-1}\cdot a\,|\,a\in\dom(\alpha_i)\}$ and $\compdown{\beta_i}{1^{n-i}} = \alpha_i$.
\item For each $i\in\,\,]0,\dots, n]$:
\begin{enumerate}\parskip -1ex
\item there exist $a^i_0,\dots,a^i_{p_i}$ such that $\{a^i_0,\dots,a^i_{p_i}\} = \{a\,|\,\compdown{M}{a} = x_{i}\}$ 

and $\alpha_i\rhd^{a^i_0}_{\chi_i}\dots\rhd^{a^i_{p_i}}_{\chi_i}\alpha_{i-1}$,
\item if $\{b_0,\dots,b_{p_i}\} = \{a\,|\,\compdown{M}{a} = x_{i}\}$ and $\alpha_i\rhd^{b_0}_{\chi_i}\dots\rhd^{b_{p_i}}_{\chi_i}\alpha'$ then $\alpha'=\alpha_{i-1}$.
\end{enumerate}
\item We have $(\chi_1,\dots, \chi_n)\in\linear(\alpha)$.
\end{enumerate} 
\end{lemma}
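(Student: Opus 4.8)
The plan is to prove all three parts by induction on the structure of $M$, with part (1) serving as the structural backbone from which parts (2) and (3) follow. Before starting the induction it helps to unwind the definitions. The blueprint $\alpha$ of $M$ lives on the stable part of $M$, i.e.\ on those addresses $a$ where $\compdown{M}{a}$ is a variable or application and $\free(\compdown{M}{a})\subseteq\free(M)$. The restriction $\alpha_i$ keeps only those addresses whose subterm has free variables among $x_1,\dots,x_i$; in particular $\alpha_n=\alpha$ (since $\free(\compdown{M}{a})\subseteq\free(M)=\{x_1,\dots,x_n\}$ on the stable part) and $\alpha_0$ is the restriction to addresses of \emph{closed} subterms. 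The term $\la x_{i+1}\dots x_n.M$ has $x_{i+1},\dots,x_n$ as its successively bound (and greatest) variables, so it is HRM and in $\lamt$; its free variables are exactly $x_1,\dots,x_i$, and its blueprint $\beta_i$ sits at addresses $1^{n-i}\cdot a$ because each abstraction prepends a single unary $\ast$-node.

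First I would prove part (1), which is almost a matter of chasing the definition of stable part through successive abstractions. The key observation is Lemma~\ref{her_stable}: the stable part of a term restricts coherently to subterms, so passing from $M$ to $\la x_{i+1}\dots x_n.M$ changes which subterms are ``stable'' precisely by the criterion $\free(\compdown{M}{a})\subseteq\{x_1,\dots,x_i\}$. Concretely, in $\la x_{i+1}\dots x_n.M$ a subterm at address $1^{n-i}\cdot a$ is stable exactly when its free variables lie in $\{x_1,\dots,x_i\}$, which is the defining condition for $a\in\dom(\alpha_i)$. This gives $\dom(\beta_i)=\{1^{n-i}\cdot a\mid a\in\dom(\alpha_i)\}$ and $\compdown{\beta_i}{1^{n-i}}=\alpha_i$ simultaneously, once one checks that the blueprint labels (which record only the type and the variable/application distinction) are unchanged by abstraction at the retained addresses.

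Next I would establish part (2), which is the heart of the lemma and where I expect the real work to lie. The point is that passing from $\la x_{i+1}\dots x_n.M$ to $\la x_i\dots x_n.M$ binds $x_i$, so the addresses $a$ with $\compdown{M}{a}=x_i$ leave the stable part, and by part (1) the blueprint shrinks from (essentially) $\alpha_i$ to $\alpha_{i-1}$. I would argue that binding $x_i$ corresponds exactly to extracting the formula $\chi_i=\Omega(x_i)$ at each of the addresses $a^i_0,\dots,a^i_{p_i}$ where $x_i$ occurs. The crucial structural fact is that $x_i$ is the \emph{greatest} free variable of $\la x_{i+1}\dots x_n.M$, so by right-maximality every occurrence of $x_i$ sits at the right spine relative to the $@$-nodes above it within the stable part --- this is precisely what makes each extraction step follow the right branch of every $@$, matching clause (2) of Definition~\ref{reduc}. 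I would verify that erasing a variable node labelled $\chi_i$ together with the $@$'s on the path to it (turning each $@_\psi$ into a $\ast$) transforms the relevant portion of $\alpha_i$ into $\alpha_{i-1}$, handling the several occurrences one at a time. Part (2a) then asserts the existence of \emph{some} extraction order; part (2b) asserts independence of the order, which is immediate from the confluence Lemma~\ref{confl_red} together with (2a). The main obstacle will be making the correspondence between ``binding the greatest free variable'' and ``extracting along right branches'' fully rigorous, in particular checking that right-maximality forces every $@$-ancestor of an $x_i$-occurrence (within the stable part) to have that occurrence in its right subtree, and that no stable address is disturbed except along these paths.

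Finally, part (3) follows by iterating part (2) down from $i=n$ to $i=1$. Chaining the reductions $\alpha=\alpha_n\rhd^+_{\chi_n}\alpha_{n-1}\rhd^+_{\chi_{n-1}}\cdots\rhd^+_{\chi_1}\alpha_0$ and observing that $\alpha_0=\ebp$ --- because $M$, being closed only if $n=0$, has no stable address whose subterm is closed once all free variables have been consumed, so $\la x_1\dots x_n.M$ is a closed normal term of empty stable part --- we obtain $\alpha\rhd^+_{\chi_n}\cdots\rhd^+_{\chi_1}\ebp$, which is exactly the statement that $(\chi_1,\dots,\chi_n)\in\linear(\alpha)$ by Definition~\ref{def_linear}. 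The only point needing care here is the base case $\alpha_0=\ebp$: I would note that $\la x_1\dots x_n.M\in\lamt$ is closed, hence (being in normal form) of empty stable part, so its blueprint is $\ast(\cdots\ast(\alpha_0)\cdots)$ with $\alpha_0=\ebp$.
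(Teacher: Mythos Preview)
Your treatment of parts (1), (2b), and (3) matches the paper's: (1) is definitional unwinding, (2b) follows from (2a) together with Lemma~\ref{confl_red}, and (3) is the chain $\alpha=\alpha_n\rhd^+_{\chi_n}\cdots\rhd^+_{\chi_1}\alpha_0=\ebp$.

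Your argument for (2a), however, rests on a false claim. You assert that right-maximality forces every $@$-ancestor of an $x_i$-occurrence (within the stable part) to have that occurrence in its right subtree. This is not so. If $b\in\dom(\alpha_i)$ with $\compdown{M}{b}=(P_1P_2)$ and $\free(P_1P_2)\subseteq\{x_1,\dots,x_i\}$, HRM gives only $\max(\free(P_1))\leq\max(\free(P_2))$; hence $x_i$ may perfectly well occur in $P_1$ --- what HRM guarantees is merely that it then \emph{also} occurs in $P_2$. A left-branch occurrence of $x_i$ therefore cannot be extracted directly, since rule (2) of Definition~\ref{reduc} forces the extraction path to go right at every $@$. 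One must first extract some right-branch occurrence, which turns the $@_\psi$ into a $\ast$, after which rule (3) permits descent into the left component. Producing a correct global extraction order and verifying that the end result is exactly $\alpha_{i-1}$ is the real content of (2a); your sketch does not supply it.

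The paper handles this by a structural induction on $M$ with a strengthened hypothesis: it introduces $\rho_N$ (iterated extraction of a suffix of variables, in the sense of Definition~\ref{reduc}) and $\mu_N$ (restriction of a blueprint to addresses with free variables in a given prefix) and proves $\mu_M(X,\alpha)=\rho_M(X',\alpha)$ for every decomposition $\free(M)=X\cdot X'$. In the application case $M=(M_1M_2)$ the argument splits on whether the last variable of $X'$ is the last variable of $\free(M_1)$ as well, which is precisely the left/right ambiguity you glossed over; the case analysis shows that the extraction on $M_2$ dissolves the root $@$ before (or simultaneously with) any extraction that needs to enter $M_1$.
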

\begin{proof} Property (1) follows immediately from the definition of a blueprint. Since $\alpha_n~=~\alpha$ and $\alpha_0 = \ebp$, Property (3) follows from Property (2.a). Property (2.b) follows from Property (2.a) and Lemma  \ref{confl_red}.
As to prove (2.a) we introduce the following notations. 

For each $N\in\lamt$, we let $\rho_N$ be the least partial function satisfying the following conditions:  for every blueprint $\gamma$, we have $\rho_N(\varepsilon,\gamma) = \gamma$; for every finite sequence of variables $Y$ and for every blueprint $\gamma$, if  $\rho_N(Y,\gamma) = \delta$, $\{b\,|\,\compdown{N}{b} = y\} = \{b_0,\dots, b_m\}$ and $\delta\rhd^{b_0}_{\chi}\dots \rhd^{b_m}_{\chi}\delta'$, then  $\rho_M((y)\cdot Y,\gamma) = \delta'$.
By Lemma \ref{confl_red}, if $\{b\,|\,\compdown{N}{b} = y\} = \{b_0,\dots, b_m\} = \{c_0,\dots,c_m\}$, $\delta\rhd^{b_0}_{\chi}\dots \rhd^{b_m}_{\chi}\delta'$ and $\delta\rhd^{c_0}_{\chi}\dots \rhd^{c_m}_{\chi}\delta''$, then $\delta' = \delta''$, thus $\rho_N$ is indeed a function. 
For each finite sequence of variables $Y'$ and for each blueprint $\gamma$, we let $\mu_N(Y',\gamma)$ be the restriction of $\gamma$ to $\dom(\gamma) \cap \{b\,|\,\free(\compdown{N}{b})\subseteq Y'\}$. 

We shall prove by induction on $M$ that for all pairs $(X,X')$ such that $\free(M) = X\cdot X'$, we have  $\mu_M(X,\alpha) = \rho_N(X',\alpha)$ -- in particular 
for all $i > 0$ we have 
$$\begin{array}{ll}
\alpha_{i-1} & = \mu_M((x_1,\dots, x_{i-1}),\alpha)\\
&= \rho_M((x_{i},\dots,x_{n}),\alpha)\\ 
&= \rho_M((x_{i}),\rho_M((x_{i+1}\dots,x_{n}),\alpha))\\
&= \rho_M((x_{i}),\mu_M((x_{1}\dots,x_{i}),\alpha))\\
&= \rho_M((x_{i}),\alpha_i)\end{array}$$
thus (2.a) holds. The case $X' = \varepsilon$ is immediate, hence we may as well assume that $X'$ is a non-empty suffix of $\free(M)$. The case of $M$ equal to a variable follows immediately from our definitions.

Suppose $M = (M_1M_2)$, $M_1\Vdash\gamma_1$ and $M_2\Vdash\gamma_2$.
There exist $X_1, X_2, X'_1, X'_2$ such that: $X_1\cup X_2 = X$; $X'_1\cup X'_2 = X'$; $\free(M_j) = X_j\cdot X'_j$ for each $j\in\{1,2\}$. We have $\alpha = @_\psi(\gamma_1,\gamma_2)$ where $\psi$ is the type of $M$, and $\mu_M(X,\alpha) = \ast(\mu_{M_1}(X_1,\gamma_1),\mu_{M_2}(X_2,\gamma_2))$. By induction hypothesis $\mu_{M_i}(X_i,\gamma_i) = \rho_{M_i}(X'_i,\gamma_i)$ for each $i$. The sequence $X'$ is non-empty hence the last elements of $X',X'_2$ are equal. Assume $X' = X''\cdot(x)$ and $X'_2 = X''_2\cdot(x)$. If $x$ is not the last element of $X'_1$ then:  
$$\begin{array}{ll}
\rho_M(X',\alpha) & = \rho_M(X''\cdot(x),@_\psi(\gamma_1,\gamma_2))\\
 & =  \rho_M(X'_1\cup X''_2,*(\gamma_1,\rho_{M_2}((x),\gamma_2)))\\ 
 & = *(\rho_{M_1}(X'_1,\gamma_1),\rho_{M_2}(X''_2,\rho_{M_2}((x),\gamma_2)))\\
& = *(\rho_{M_1}(X'_1,\gamma_1),\rho_{M_2}(X''_2\cdot(x),\gamma_2))\\
& = *(\rho_{M_1}(X'_1,\gamma_1),\rho_{M_2}(X'_2,\gamma_2))
\end{array}$$
Otherwise,  $X'_1 = X''_1\cdot(x)$ and we have:
$$\begin{array}{ll}
\rho_M(X',\alpha) & = \rho_M(X',@_\psi(\gamma_1,\gamma_2))\\
 & =  \rho_M(X''_1\cup X''_2,*(\rho_{M_1}((x),\gamma_1),\rho_{M_2}((x),\gamma_2)))\\ 
 & = *(\rho_{M_1}(X''_1,\rho_{M_1}((x),\gamma_1)),\rho_{M_2}(X''_2,\rho_{M_2}((x),\gamma_2)))\\
& = *(\rho_{M_1}(X''_1\cdot(x),\gamma_1),\rho_{M_2}(X''_2\cdot(x),\gamma_2))\\
& = *(\rho_{M_1}(X'_1,\gamma_1),\rho_{M_2}(X'_2,\gamma_2))
\end{array}$$
In either case 
$$\begin{array}{ll}
\rho_M(X',\alpha) & = \ast(\rho_{M_1}(X'_1,\gamma_1),\rho_{M_2}(X'_2,\gamma_2))\\
& = \ast(\mu_{M_1}(X_1,\gamma_1),\mu_{M_2}(X_2,\gamma_2))\\
& = \mu_M(X,\alpha)\end{array}$$
Suppose $M = \la x.M_1$, $M_1\Vdash\gamma_1$. 
By induction hypothesis $\mu_{M_1}(X,\gamma_1) = \rho_{M_1}(X'\cdot(x),\gamma_1) = \rho_{M_1}(X',\rho_{M_1}((x),\gamma_1)) = \rho_{M_1}(X',\mu(X\cdot X',\gamma_1)) = \rho_{M_1}(X',\compdown{\alpha}{(1)})$. Moreover $\mu_{M_1}(X,\gamma_1) = \mu_{M_1}(X,\mu_{M_1}(X\cdot X',\gamma_1))  = \mu_{M_1}(X,\compdown{\alpha}{(1)})$. Hence $\mu_{M_1}(X,\compdown{\alpha}{(1)}) = \rho_{M_1}(X',\compdown{\alpha}{(1)})$, therefore  $\mu_{M_1}(X,\alpha) = \rho_{M_1}(X',\alpha)$.
\end{proof}
Thus the full sequence of the types of the free variables of $M$ can be extracted from its blueprint. The next lemma shows that conversely for each sequence $\ov\chi$ in $\linear(\alpha)$, there exists a term $N$ with the same domain, blueprint and of the same type as $M$, and such that the sequence of types of the free variables of $N$ is equal to $\ov\chi$, see Figure \ref{blueprint_renaming}.
\begin{figure}
\begin{center}
\epsfig{scale=1.2, file=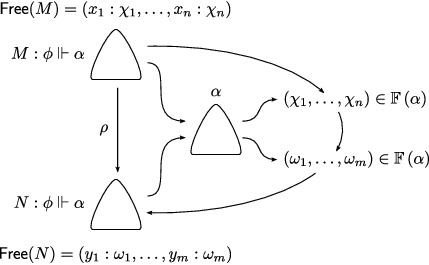}
\end{center}
\caption{\label{blueprint_renaming} A non-uniform renaming of the variables of $M$, based on an alternate extraction of the formulas of its blueprint.}
\end{figure}
\begin{lemma}\label{switch_var} Let $M\in\lamt$ be a term of blueprint $\alpha$. Suppose 
$$\alpha\rhd^{b^m_0}_{\omega_m}\dots\rhd^{b^m_{p_m}}_{\omega_m}\ \ \dots\ \ \rhd^{b^1_0}_{\omega_1}\dots\rhd^{b^1_{p_1}}_{\omega_1}\ebp $$
Then for every strictly increasing sequence of variables $Y = (y_1,\dots, y_m)$ such that  $\form(Y) = (\omega_1,\dots,\omega_m)$, there exists $N$ with the same domain, blueprint and of the same type as $M$ such that $\free(N) = Y$ and $\{b\,|\,\compdown{N}{b} = y_i\} = \{b^i_1,\dots, b^i_{p_i}\}$ for each $i$.  
\end{lemma}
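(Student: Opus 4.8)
The plan is to argue by structural induction on $M$, letting the three ways of forming $M$ follow the three clauses defining $\rhd$ (and the recursive descriptions of $\linear$). The term $N$ will always be obtained from $M$ by a purely occurrence-wise relabelling: the variable occurrence sitting at each address $b^i_j$ becomes $y_i$, while the binders of $M$ get renamed to the fresh, suitably large variables that the induction forces upon us. If $M$ is a variable $x$, then $\alpha = \form(x)$, the only full reduction is $\form(x)\rhd^{\varepsilon}_{\form(x)}\ebp$, so $m=1$, $\omega_1=\form(x)$ and $b^1_0=\varepsilon$; taking $N = y_1$ settles this case at once.

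Application case. Suppose $M=(M_1M_2)$, so $\alpha=@_\psi(\alpha_1,\alpha_2)$ with $M_i\Vdash\alpha_i$; since $M$ is a normal form both $M_1,M_2$ are non-closed and $\alpha_1,\alpha_2\neq\ebp$ (a closed normal form is an abstraction, so the function part of a normal application is never closed, and by the HRM application condition the argument cannot be closed either). The first clause of $\rhd$ never applies to a $@$, and the second forces the very first extraction of the given sequence to descend into the right child $\alpha_2$, turning $@_\psi(\alpha_1,\alpha_2)$ into a $\ast$-node; thereafter the two children evolve independently under the third clause. I would therefore split the extraction steps according to whether their address begins with $(1)$ or $(2)$: deleting the leading digit yields two sub-sequences which, by this independence, are full reductions of $\alpha_1$ and of $\alpha_2$ to $\ebp$, with their rounds inherited (and harmlessly reordered within a round, by Lemma~\ref{confl_red}). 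Applying the induction hypothesis to $M_1$ and $M_2$ with the subsequences $(y_i)_{i\in I_1}$ and $(y_i)_{i\in I_2}$ of $Y$, where $I_k=\{i:\mbox{round }i\mbox{ meets child }k\}$, produces $N_1,N_2$; I set $N=(N_1N_2)$. Because the first extraction lies in the right child, the greatest variable $y_m$ is free in $N_2$, whence $\max(\free(N_1))\leq y_m=\max(\free(N_2))$ and $(N_1N_2)$ is HRM; its blueprint is $@_\psi(\alpha_1,\alpha_2)=\alpha$, its type is $\psi$, and $\free(N)=Y$ since every round meets at least one child.

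Abstraction case. Suppose $M=\la z.M_1$ with $M_1\Vdash\beta_1$ and $\zeta=\form(z)$. Then $\alpha=\ast(\alpha')$, and by Lemma~\ref{abstract_extract} there are addresses $a_0,\dots,a_p$, namely the occurrences of $z$ in $M_1$, with $\beta_1\rhd^{a_0}_{\zeta}\dots\rhd^{a_p}_{\zeta}\alpha'$. Every step of the given reduction of $\ast(\alpha')$ descends through the unique child, so erasing the leading $(1)$ turns it into a full reduction of $\alpha'$; prefixing the round $\beta_1\rhd^{a_0}_{\zeta}\dots\rhd^{a_p}_{\zeta}\alpha'$ yields a full reduction of $\beta_1$ to $\ebp$ whose first, hence greatest, round extracts $\zeta$. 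Choosing a fresh variable $w$ with $\form(w)=\zeta$ and $w$ greater than every element of $Y$ (possible since $\Omega^{-1}(\zeta)$ is infinite), I apply the induction hypothesis to $M_1$ with the sequence $Y\cdot(w)$, obtaining $N_1$ in which $w$ occurs exactly at $a_0,\dots,a_p$. Then $N=\la w.N_1$ is HRM because $w=\max(\free(N_1))$; its type is $\zeta\to\psi_1$ where $M_1:\psi_1$, its free variables are $Y$, and by Lemma~\ref{confl_red} its blueprint is $\ast(\alpha')=\alpha$, the occurrences of each $y_i$ being relocated to the addresses $b^i_j=(1)\cdot c^i_j$ of round $i$, as required.

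The delicate point, and the one I would treat most carefully, is the application case: I must check that restricting the global extraction to a single child really is a legal reduction of that child's blueprint (the essential fact being that, after the first step forced into the right child, the left and right subtrees of the resulting $\ast$-node never interact), and that the shared variables $y_i$ with $i\in I_1\cap I_2$ are instantiated consistently on both sides — which is precisely why the induction is fed subsequences of one fixed, globally chosen $Y$ rather than independently chosen variables. The HRM side-condition on applications is then exactly the right-shuffle constraint $F^p_2\neq\varepsilon$ read through the blueprint: it is what guarantees that the greatest free variable of $(N_1N_2)$ is contributed by the argument $N_2$. Finally, the bound-variable conventions of Section~\ref{la_calc} are maintained automatically, since every binder is renamed to one of the fresh maximal variables introduced in the abstraction case, each chosen distinct from all previously used names.
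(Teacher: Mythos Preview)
Your argument is correct and follows the same structural induction as the paper's proof: the variable case is immediate, the abstraction case prepends the extraction of the bound variable (via Lemma~\ref{abstract_extract}) to the given reduction and applies the induction hypothesis with $Y$ extended by a fresh top variable, and the application case splits the reduction between the two children. You actually supply more detail than the paper, which dismisses the application case as ``follows easily from the induction hypothesis'' and in the abstraction case reuses the original bound variable $x$ rather than explicitly picking a fresh $w>y_m$; your treatment is if anything the more careful of the two (the relevant citation for recovering the blueprint of $\lambda w.N_1$ is Lemma~\ref{abstract_extract}.(2.b) rather than Lemma~\ref{confl_red} directly, but the latter is what the former rests on).
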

\begin{proof} By induction on $M$. The proposition is clear if $M$ is a variable. The case of $M = (M_1 M_2)$ follows easily from the induction hypothesis. 
Suppose $M = \la x.M_1:\phi\to\psi$ with $M_1\Vdash\gamma$. Let $Y' = (y_1,\dots, y_m,x)$.  By Lemma \ref{abstract_extract}.(2.a) there exist $a_1,\dots, a_p$ such that $\{a_1,\dots, a_p\} = \{a\,|\,\compdown{M}{a} = x\}$ and $\gamma\rhd^{a_0}_{\phi}\dots\rhd^{a_p}_{\phi}\gamma' = \compdown{\alpha}{1}$. Now $$\alpha\rhd^{b^m_0}_{\omega_m}\dots\rhd^{b^m_{p_m}}_{\omega_m}\ \ \dots\ \ \rhd^{b^1_0}_{\omega_1}\dots\rhd^{b^1_{p1}}_{\omega_1}\ebp$$ hence each $b^i_j$ is of the form $(1)\cdot c^i_j$. Furthermore $$\gamma\rhd^{a_0}_{\phi}\dots\rhd^{a_p}_{\phi}\rhd^{c^m_0}_{\omega_m}\dots\rhd^{c^m_{p_m}}_{\omega_m}\ \ \dots\ \ \rhd^{c^1_0}_{\omega_1}\dots\rhd^{c^1_{p_1}}_{\omega_1}\ebp$$
By induction hypothesis there exists $N_1$ with the same domain, blueprint and of the same type as $M_1$ such that $\free(N_1) = Y'$, $\{a\,|\,\compdown{N_1}{a} = x\} = \{a_0,\dots, a_p\}$ and $\{c\,|\,\compdown{N_1}{c}~=~y_i\} = \{c^i_0,\dots, c^i_{p_i}\}$ for each $i$. By Lemma \ref{abstract_extract}.(2.b) we have $\lambda x.N_1\Vdash \alpha$, hence we may take $N = \la x.N_1$.
\end{proof}
\section{Vertical compressions and compact terms}\label{sect_compact}
The aim of this section is to provide a partial characterisation of minimal inhabitants. Section \ref{subsect_bound_local} is just a simple remark on the relative depths of their blueprints, and an easy consequence of the subformula property (Lemma \ref{normal_propto}): if $M$ is a minimal $\lamt$-inhabitant of $\phi$, then for all addresses $a$ in $M$ the blueprint of $\compdown{M}{a}$ is of relative depth at most $k\times p$, where:
\begin{itemize}
\item $k$ is the number of $\la$ in the path from the root to $M$ to $a$,
\item $p$ is the number of subformulas of $\phi$.  
\end{itemize}
We call {\em locally compact} every $\lamt$-inhabitant satisfying this condition.
In Section \ref{subsect_comp} we introduce the notion of {\em vertical compression} of a blueprint. A (strict) vertical compression of $\beta$ is obtained by taking any address $b$ in $\beta$, then by grafting $\compdown{\beta}{b}$ at any address $a < b$ such that $\beta(a) = \beta(b)$. The vertical compressions of $\beta$ are all blueprints obtained by applying this transformation to $\beta$ zero of more times. The key property of those compressions is the following (see Figure \ref{compression_step}):
\begin{figure}
\begin{center}
\epsfig{scale=1.2, file=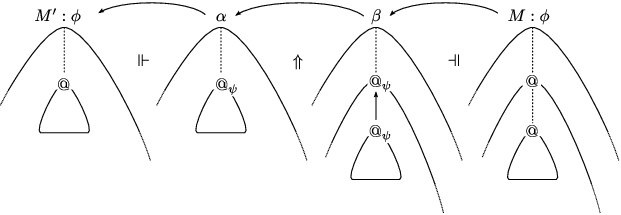}
\caption{\label{compression_step} How the compression of terms is able to follow the compression of blueprints.}
\end{center}
\end{figure}
\begin{itemize}
\item If $M$ is of blueprint $\beta$ and $\alpha$ is a vertical compression of $\beta$, the compression of $\beta$ into $\alpha$ can be mimicked by a compression of $M$ into an HRM-term, in the following sense. Assuming $\alpha = \beta[a\leftarrow\compdown{\beta}{b}]$ (the base case), the term $Q = M[a\leftarrow \compdown{M}{b}]$ is {\em not} in general an HRM-term. However, {\em there exists} an HRM-term $M'$ with the same domain as $Q$ and of the same type as $M$.
Moreover $M'$ and $M$ are applications of the same type or abstractions of the same type.     
\end{itemize}
Let us again consider a $\lamt$-inhabitant $M$ and two addresses $a,b$ such that $a < b$, $\compdown{M}{a}$ and $\compdown{M}{b}$ are applications of the same type or abstractions of the same type. Suppose:
\begin{itemize}
\item there exists a vertical compression $\alpha'$ of the blueprint of $\compdown{M}{b}$ such that the sequence $\Omega(\free(\compdown{M}{a}))$ can be extracted from $\alpha'$.
\end{itemize} 
This situation is a generalisation of the last example in the introduction of Section  \ref{sect_blue} (in which $\alpha'$ was equal to the blueprint of $\compdown{M}{b}$, thereby a trivial compression of this blueprint). The term $M$ is not minimal. Indeed, the key property above implies the existence of a term $N$ of blueprint $\alpha'$ whose size is not greater than the size of  $\compdown{M}{b}$, and such that $N, \compdown{M}{b},\compdown{M}{a}$ are applications of the same type or abstractions of the same type. By Lemma \ref{switch_var}, there exists a term $P$ of the same type and with the same domain as $N$ such that $\free(P) = \free(\compdown{M}{a})$. The graft of $P$ at $a$ yields an inhabitant of strictly smaller size. 

We will call {\em compact} all inhabitants in which the preceding situation does not occur. All inhabitants of minimal size are of course compact. As we shall see in Section  \ref{sect_finite}, we will not need a sharper characterisation of minimal inhabitants. For every formula $\phi$, the set of compact inhabitants of $\phi$ is actually a {\em finite} set, and our decision method will consist in the exhaustive computation of their domains.
\subsection{Depths of the blueprints of minimal inhabitants}\label{subsect_bound_local}
\begin{definition} Two terms $M,M'\in\lamt$ are {\em of the same kind} if and only if they are both variables, or both applications, or both abstractions, and if they are of the same type.
\end{definition}
\begin{definition}\label{def_subform} For all formulas $\phi$, we write $\sub(\phi)$ for the set of all subformulas of $\phi$.
\end{definition}
\begin{definition}\label{def_lambda} Let $M\in\lamt$. Let $a$ be any address in $M$. Let $(a_1,\dots, a_m)$ be the strictly increasing sequence of all prefixes of $a$. Let $(\la x_1,\dots, \la x_k)$ be the subsequence of $(M(a_1),\dots, M(a_{m}))$ consisting of all labels of the form $\la x$. We write $\Lambda(M,a)$ for $(x_1,\dots, x_k)$.
\end{definition}
\begin{definition}\label{def_loc_comp}
Let $M$ be a $\lamt$-inhabitant of $\phi$. We say that $M$ is {\em locally compact} if for all addresses $a$ in $M$, the blueprint of $\compdown{M}{a}$ is of relative depth at most $|\Lambda(M,a)|\times|\sub(\phi)|$.  
\end{definition}
\begin{lemma}\label{local_depth} Let $M$ be a $\lamt$-inhabitant of $\phi$. If $M$ is not locally compact, then there exist two addresses $b$, $b'$ such that  $b < b'$,  $\compdown{M}{b}$ and $\compdown{M}{b'}$ are of the same kind and $\free(\compdown{M}{b}) = \free(\compdown{M}{b'})$. Moreover, $M$ is not a $\lamt$-inhabitant of $\phi$ of minimal size.
\end{lemma}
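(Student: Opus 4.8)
The plan is to turn the failure of local compactness into a long nested chain of stable subterms, extract from it a repeated (kind, free‑variable set) pair by a pigeonhole count, and then graft to produce a strictly smaller inhabitant. First I would unpack the hypothesis: if $M$ is not locally compact there is an address $a$ with the blueprint $\alpha$ of $N=\compdown{M}{a}$ of relative depth exceeding $|\Lambda(M,a)|\times|\sub(\phi)|$. By the definition of relative depth this gives a maximal chain $c_0<c_1<\dots<c_d$ in $\dom(\alpha)$ with more than $|\Lambda(M,a)|\times|\sub(\phi)|+1$ members. Writing $R_i=\compdown{N}{c_i}=\compdown{M}{a\cdot c_i}$, each $R_i$ is a variable or an application lying in the stable part of $N$, and its type is a subformula of $\phi$ by Lemma~\ref{normal_propto}.

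Next I would isolate three HRM‑specific structural facts. (i) Variables have no proper subterms, so every $R_i$ with $i<d$ is an application and the chain contains at most one variable. (ii) Since $M$ is closed, $\free(N)\subseteq\Lambda(M,a)$, whence $|\free(N)|\leq|\Lambda(M,a)|$; and since $M$ is in normal form each application $R_i$ has its head variable free, so $\free(R_i)\neq\emptyset$. (iii) By Lemma~\ref{her_stable}.(1), applied inside $N$ to consecutive chain members, $\free(R_{i+1})\subseteq\free(R_i)$, so the sets $\free(R_i)$ form a nested decreasing family of nonempty subsets of $\free(N)$. With these in hand the count is clean: for a fixed type $\psi\in\sub(\phi)$ the application‑members of type $\psi$ have pairwise comparable free sets by (iii), so if pairwise distinct they would form a strictly decreasing chain of nonempty subsets of $\free(N)$, of length at most $|\Lambda(M,a)|$. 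Hence at most $|\Lambda(M,a)|\times|\sub(\phi)|$ applications can carry pairwise‑distinct pairs $(\mathrm{type},\free)$. Removing the at most one variable of the chain by (i), more than $|\Lambda(M,a)|\times|\sub(\phi)|$ applications remain, so pigeonhole yields $i<j$ with $R_i,R_j$ applications of the same type and $\free(R_i)=\free(R_j)$. Setting $b=a\cdot c_i$ and $b'=a\cdot c_j$ gives $b<b'$ with $\compdown{M}{b},\compdown{M}{b'}$ of the same kind and $\free(\compdown{M}{b})=\free(\compdown{M}{b'})$, which is the first assertion.

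For the minimality claim I would graft, forming $M'=M[b\leftarrow\compdown{M}{b'}]$. As $b<b'$, the subterm $\compdown{M}{b'}$ is a strict subterm of $\compdown{M}{b}$, so $M'$ is strictly smaller than $M$; the two subterms share their type, so $M':\phi$; they share their free‑variable set, so every right‑maximality constraint at the ancestors of $b$ is untouched and $M'$ is again HRM; and because the inserted subterm is an application it cannot be the operator of a $\beta$‑redex, so grafting creates no redex and $M'$ stays in $\beta$‑normal form. Thus $M'$ is a $\lamt$‑inhabitant of $\phi$ strictly smaller than $M$. I expect the main obstacle to be the counting paragraph: one must use \emph{exactly} the three HRM facts above—variables are leaves, application subterms carry a free head, and stable free‑variable sets only shrink down the tree—to hit the bound $|\Lambda(M,a)|\times|\sub(\phi)|$ rather than a looser constant; by contrast the graft here is painless precisely because the two subterms already have equal free variables, so no non‑uniform renaming (Lemma~\ref{switch_var}) is needed.
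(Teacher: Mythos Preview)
Your argument is correct and follows the same route as the paper: extract a long chain in the stable part from the failure of local compactness, apply pigeonhole to the pairs $(\text{type},\free)$ along that chain using Lemma~\ref{her_stable}.(1) and Lemma~\ref{normal_propto}, and then graft $\compdown{M}{b'}$ at $b$. You are in fact more careful than the paper on two points it leaves implicit---that only the deepest chain member can be a variable, and that each application carries a nonempty free-variable set (so the decreasing chain of free sets has at most $|\Lambda(M,a)|$, not $|\Lambda(M,a)|+1$, distinct values)---which is exactly what makes the bound $|\Lambda(M,a)|\times|\sub(\phi)|$ tight enough for the pigeonhole.
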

\begin{proof}
\begin{figure}
\begin{center}
\epsfig{scale=1.2, file=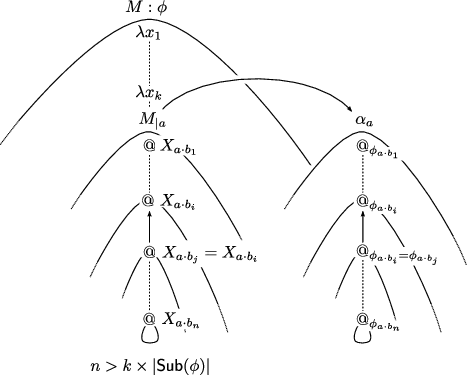}
\end{center}
\caption{\label{compact_case_1} Proof of Lemma \ref{local_depth}.}
\end{figure} For each address $a$ in $\dom(M)$, let $\alpha_a$ be the blueprint of $\compdown{M}{a}$ and let  $X_{a} = \free(\compdown{M}{a})$. Assume the existence of an $\alpha_a$ of relative depth $n >|\Lambda(M,a)|\times|\sub(\phi)|$. There exist $b_1,\dots, b_{n+1}\in\dom(\alpha_a)$ such that  $b_1 < \dots < b_n < b_{n+1}$. By Lemma \ref{her_stable}.(1) we have $X_{a\cdot b_n}\subseteq\dots\subseteq X_{a\cdot b_1}\subseteq \Lambda(M,a)$. By Lemma \ref{normal_propto}, each $\phi_{a\cdot b_i}$ is a subformula of $\phi$. Hence there exist $i, j$ such that $i < j$ and $(X_{a\cdot b_i},\phi_{a\cdot b_i}) = (X_{a\cdot b_j},\phi_{a\cdot b_j})$, that is, $\compdown{M}{a\cdot b_i}$ and  $\compdown{M}{a\cdot b_j}$ are applications of the same type and with the same free variables (Figure \ref{compact_case_1}). Now, let $M' = M[a\cdot b_i\leftarrow \compdown{M}{a\cdot b_j}]$. The term $M'$ is a  $\lamt$-inhabitant of $\phi$ of strictly smaller size.
\end{proof}
\subsection{Vertical compression of a blueprint}\label{subsect_comp}
\begin{definition}\label{def_pumping} We let $\vpump$ be the least reflexive and transitive binary relation on blueprints satisfying the following: if $a,b\in\dom(\beta)$, $a < b$ and $\beta(a) = \beta(b)$, then  $\beta[a\leftarrow \compdown{\beta}{b}]\vpump\beta$. 
\end{definition}
\begin{lemma}\label{diag_cc} Suppose $M\in\lamt$, $M:\phi$, $M\Vdash\beta$ and $\alpha\vpump\beta$. There exists a term $M'\in\lamt$ of the same kind as $M$, of blueprint $\alpha$ and such that $|\dom(M')|\leq |\dom(M)|$.
\end{lemma}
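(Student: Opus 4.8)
The plan is to argue by induction on the length of a $\vpump$-derivation of $\alpha$ from $\beta$, reducing the statement to a single compression step. If $\alpha=\beta$, take $M'=M$. Otherwise split the derivation into a first step $\beta\rightsquigarrow\beta[a\leftarrow\compdown{\beta}{b}]$, with $a<b$ and $\beta(a)=\beta(b)$, followed by a shorter derivation; I would treat the single step as described below to get $M_1\Vdash\beta[a\leftarrow\compdown{\beta}{b}]$ of the same kind as $M$ with $|\dom(M_1)|\le|\dom(M)|$, and then apply the induction hypothesis to $M_1$. Since being of the same kind is transitive and the size inequalities compose, this produces the required $M'$.

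So fix one step $\alpha=\beta[a\leftarrow\compdown{\beta}{b}]$. Because $a$ and $b$ lie in the stable part of $M$, Lemma~\ref{her_stable} together with the stable-part property identifies $\compdown{\beta}{a}$ and $\gamma:=\compdown{\beta}{b}$ with the blueprints of $\compdown{M}{a}$ and $\compdown{M}{b}$; moreover $\compdown{M}{b}$ is an HRM term of blueprint $\gamma$, of the same type and the same kind as $\compdown{M}{a}$ (as $\beta(a)=\beta(b)$), and $|\dom(\compdown{M}{b})|<|\dom(\compdown{M}{a})|$. When $a=\varepsilon$ the term $M$ is itself a variable or an application, and $M'=\compdown{M}{b}$ works at once. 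When $a\neq\varepsilon$, I would build $M'$ by grafting a renamed copy of $\compdown{M}{b}$ at $a$ and then re-deriving the context along the path $\varepsilon=a_0<a_1<\dots<a_k=a$ from the bottom up.

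The reconstruction keeps, for every prefix $a_j$, an HRM term $T_j$ of blueprint $\compdown{\alpha}{a_j}$, of the same type and kind as $\compdown{M}{a_j}$ and with $|\dom(T_j)|\le|\dom(\compdown{M}{a_j})|$, starting from $\compdown{M}{b}$ at level $a$. To climb one node: when $\compdown{M}{a_{j-1}}=\la z.\compdown{M}{a_j}$, I would use Lemma~\ref{switch_var} to rename $T_j$ so that its greatest free variable is $z$, and set $T_{j-1}=\la z.T_j$; when $\compdown{M}{a_{j-1}}$ is an application whose on-path child is the right one with left sibling $R$, I would rename $R$ (again by Lemma~\ref{switch_var}) to an $R'$ all of whose free variables lie below $\max(\free(T_j))$, and set $T_{j-1}=(R'\,T_j)$; symmetrically, when the on-path child is the left one with right sibling $S$, I would rename $S$ up to an $S'$ with $\max(\free(T_j))\le\max(\free(S'))$ and set $T_{j-1}=(T_j\,S')$. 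Lemma~\ref{switch_var} leaves blueprints and types unchanged and only renames free variables or adds a binder, so each $T_{j-1}$ has the intended blueprint $\compdown{\alpha}{a_{j-1}}$ and the domain sizes never increase; the output is $M'=T_0$.

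The content of the proof is the legitimacy of these renamings. The application and abstraction clauses of the definition of HRM depend only on closedness and on the greatest free variable of a subterm (as recalled right after that definition), so the three moves above really do restore right-maximality once the free variables are placed as prescribed. That Lemma~\ref{switch_var} can move the greatest free variable of a term to an arbitrary position of an admissible type rests on the fact that the type of the greatest free variable of any term is always the first one extracted from its blueprint — precisely the content of Lemma~\ref{abstract_extract}.(3), whose extracted sequence $\Omega(\free(\cdot))$ ends with that type. I expect the main obstacle to be making the local choices globally coherent: the greatest free variable of the graft is pushed upward by the right-application steps and downward by the abstraction and left-application steps. I would settle this not greedily but by gathering, over the whole path, the finite set of strict order requirements between the variable-occurrences imposed by the three clauses — a consistent partial order, since it is read off a tree — and then realizing it by one order-embedding into $({\cal X},<)$, which is possible because each class $\Omega^{-1}(\phi)$ is infinite and so a variable of prescribed type can always be found above or below any finite set already placed. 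Finally, the stable-part hypothesis guarantees that no variable bound along the path occurs free in $\compdown{M}{a}$ or in $\compdown{M}{b}$, so these renamings never create capture along the path.
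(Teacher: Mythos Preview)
Your overall plan (reduce to a single step, then recurse on the length of $a$) is exactly the paper's, and the base case $a=\varepsilon$ is fine. The gap is in the invariant you carry along the path. You claim $T_j\Vdash\compdown{\alpha}{a_j}$, but $\compdown{\alpha}{a_j}$ is the restriction of the \emph{global} blueprint of $M$, computed relative to $\free(M)$; the blueprint of any standalone term $T_j$ is computed relative to $\free(T_j)$. These two objects differ as soon as some $\lambda z$ occurs on the path above $a_j$: the off-path sibling $R$ (or $S$) may contain $z$ free, so its own blueprint is strictly larger than $\compdown{\beta}{a_{j-1}\cdot(1)}$. Concretely, take $M=\lambda x.(P\,(R\,S))$ with $a=(1,2)$, $b=(1,2,2)$, $x\in\free(P)$, $x\notin\free(RS)$. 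Your $T_1=(P'\,T_2)$ has blueprint $@_\psi(\text{bp}(P),\compdown{\beta}{b})$ with a root $@$, whereas $\compdown{\alpha}{(1)}$ has no root (since $(1)\notin\dom(\beta)$). At the abstraction step you then place $z$ at one position (the greatest free variable of $T_1'$), which falls in the $T_2$-part; abstracting removes that leaf and the root $@$, but leaves all occurrences of $x$ inside $\text{bp}(P)$ intact and shrinks the $T_2$-side. The resulting blueprint is neither $\compdown{\alpha}{\varepsilon}=\alpha$ nor any compression of it.

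What is missing is precisely what the paper supplies: at an abstraction node one must work with the blueprint $\beta_1$ of the \emph{body} $M_1$ (not with $\compdown{\beta}{(1)}$), obtain by induction an $M'_1$ of blueprint $\alpha_1=\beta_1[a_1\leftarrow\compdown{\beta_1}{b_1}]$, and then use Lemma~\ref{abstract_extract} to identify the exact addresses $c_0,\dots,c_p$ of $x$ in $M_1$. Since $a,b\in\dom(\beta)$ force each $c_i$ to be incomparable with $a_1$ (and with $b_1$), one has $\alpha_1\rhd^{c_0}_\chi\cdots\rhd^{c_p}_\chi\compdown{\alpha}{(1)}$, and Lemma~\ref{switch_var} applied \emph{with these specific addresses} produces an $M''_1$ whose greatest variable sits exactly at $c_0,\dots,c_p$; only then does $\lambda y.M''_1$ have blueprint $\alpha$. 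Your ``rename so that the greatest free variable is $z$'' does not control where $z$ lands, and the global order-embedding you propose at the end fixes variable \emph{values} but not their \emph{positions} in the term, which is what determines the blueprint after abstraction. The correct invariant to carry is $T_j\Vdash\text{bp}(\compdown{M}{a_j})[a'\leftarrow\compdown{\beta}{b}]$ (with $a=a_j\cdot a'$), together with the positions of each pending bound variable; with that bookkeeping your reconstruction becomes the paper's induction.
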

\begin{proof} It suffices to consider the case of $\alpha = \beta[a\leftarrow \compdown{\beta}{b}]$ with $a, b\in\dom(\beta)$, $a < b$ and $\beta(a) = \beta(b)$.
We prove the existence of $M'$ by induction on the length of $a$. If $a = \varepsilon$ then $M$ is necessarily an application and $\beta(\varepsilon) =\beta(b) = @_\phi$, hence $\compdown{M}{b}$ is an application of type $\phi$, and we can take $M' = \compdown{M}{b}$. Assume $a\neq\varepsilon$. 

(1) Suppose $M = (M_1M_2)$, $M_1\Vdash\beta_1$, $M_2\Vdash\beta_2$, $a = (i)\cdot a_{i}$ and $b = (i)\cdot b_i$. By induction hypothesis there exists $M'_i$ of blueprint $\alpha_i = \beta_i[a_i\leftarrow\compdown{\beta_i}{b_i}] = \beta_i[a_i\leftarrow\compdown{\beta}{b}] $,  of the same kind as $M_{i}$ and such that $\dom(M'_i) \leq \dom(M_i)$. Let $j = 1$ if $i = 2$, otherwise let $j = 2$. Let $(M'_{j},\alpha_{j}) = (M_{j},\beta_{j})$. Let $X = (x_1,\dots, x_n)$ be the strictly increasing sequence of all variables free or bound in $M'_2$. Let $Y = (y_1,\dots, y_n)$ be a strictly increasing sequence of variables such that $\Omega(X) = \Omega(Y)$ and $y_1$ is greater that or equal to the greatest variable of $M'_1$. Let $M''_2$ be the term obtained by replacing each $x_i$ by $y_i$ in $M'_2$. We can take $M' = (M'_1M''_2)$. 

(2) Suppose $M = \la x.M_1$, $M_1\Vdash\beta_1$, $x:\chi$, $a = (1)\cdot a_1$ and $b = (1)\cdot b_1$. As $a,b\in\dom(\beta)$, we have also $a_1,b_1\in\dom(\beta_1)$. By induction hypothesis there exists $M'_{1}$ of the same kind as $M_{1}$, of blueprint $\alpha_1 = \beta_1[a_1\leftarrow\compdown{\beta_1}{b_1}]$ and such that $\dom(M'_1) \leq \dom(M_1)$. By Lemma \ref{abstract_extract}.(2.a) there exist $\gamma_1, c_0,\dots, c_p$ such that $\{c_0,\dots, c_p\} = \{c\,|\,\compdown{M}{c} = x\}$, $\beta_1\rhd^{c_0}_{\chi}\dots \rhd^{c_p}_{\chi}\gamma_1$ and $\beta = *(\gamma_1)$. Since $a,b\in\dom(\alpha)$, $a_1$ and $c_i$ are incomparable addresses for all $i$. Hence $\alpha_1 =\beta_1[a_1\leftarrow\compdown{\beta_1}{b_1}]\rhd^{c_0}_{\chi}\dots \rhd^{c_p}_{\chi}\gamma_1[a_1\leftarrow\compdown{\beta_1}{b_1}] = \compdown{\beta[a\leftarrow\compdown{\beta}{b}]}{(1)} = \compdown{\alpha}{1}$. By Lemma  \ref{switch_var} there exists a term $M''_1$ of the same type and with the same domain as $M'_1$ such that the greatest variable $y$ free in $M''_1$ is of type $\chi$ and $\{c\,|\,\compdown{M''_1}{c} = y\} = \{c_0,\dots, c_p\}$. By Lemma \ref{abstract_extract}.(2.b) we have $\la y.M''_1\Vdash\alpha$, hence we may take $M' = \la y.M''_1$.  
\end{proof}
\begin{definition}\label{def_compact} A term $M\in\lamt$ is {\em compact} when there are no $a, b, \alpha'$ such that $a< b$, $\compdown{M}{a}$ and $\compdown{M}{b}$ are of the same kind,  $\compdown{M}{b}\Vdash\alpha_b$, $\alpha'\vpump\alpha_b$ and $\Omega(\free(\compdown{M}{a}))\in\linear(\alpha')$.   
\end{definition}
\begin{lemma}\label{compact_terms}  Every $\lamt$-inhabitant of minimal size is compact. Every compact $\lamt$-inhabitant of $\phi$ is locally compact.
\end{lemma}
\begin{proof} Let $M$ by an arbitrary $\lamt$-inhabitant of $\phi$. 

(1) Assume $M$ is not compact. Let $a,b$ be such that  $a< b$, $\compdown{M}{a}$ and $\compdown{M}{b}$ are of the same kind,  $\compdown{M}{b}\Vdash\alpha_b$, $\alpha'\vpump\alpha_b$, $\free(\compdown{M}{a}) = X_a$ and $\Omega(X_a)\in\linear(\alpha')$ (see Figure \ref{minimal_term}). By Lemma \ref{diag_cc} there exists a term $N\in\lamt$ of blueprint $\alpha'$, of the same kind as $\compdown{M}{b}$ and such that $|\dom(N)|\leq|\dom(\compdown{M}{b})|$. By Lemma \ref{switch_var} there exists $P\in\lamt$ of blueprint $\alpha'$, of the same kind as $N$, such that $\dom(P) = \dom(N)$ and $\free(P) = X_a$. The term $M[a\leftarrow P]$ is then a $\lamt$-inhabitant of $\phi$ of smaller size.

(2) Suppose $M$ meets the conditions of Lemma  \ref{local_depth}. Let $\alpha_{b'}$ be the blueprint of $\compdown{M}{b'}$. By Lemma \ref{abstract_extract}.(3) we have $\Omega(\free(\compdown{M}{b})) = \Omega(\free(\compdown{M}{b'})\in\linear(\alpha_{b'})$. Since the relation $\vpump$ is reflexive, $M$ is not compact. 
\begin{figure}
\begin{center}
\epsfig{scale=1.2, file=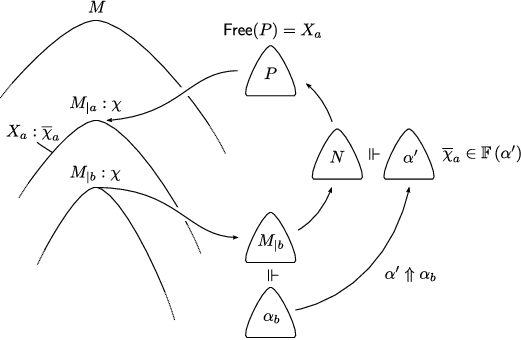}
\end{center}
\caption{\label{minimal_term} Proof of Lemma \ref{compact_terms}, part (1).}
\end{figure}
\end{proof}
\section{Shadows}\label{sect_shadows}
So far we have isolated two properties shared by all minimal inhabitants (Lemma \ref{compact_terms}). We shall now exploit these properties so as to design a decision method for the inhabitation problem. 

In Section \ref{sect_comp} and \ref{sect_shadows_def} we show how to associate, with each locally compact inhabitant $M$ of a formula $\phi$, a tree with the same domain as $M$ which we call the {\em shadow} of $M$.
At each address $a$ this tree is labelled with a triple of the form $(\ov\chi_a,\gamma_a,\phi_a)$ where $\phi_a$  is the type of $\compdown{M}{a}$, the sequence $\ov\chi_a$ is $\Omega(\free(\compdown{M}{a}))$, and $\gamma_a$ is a ``transversal compression'' of the blueprint $\alpha_a$ of $\compdown{M}{a}$ (Definitions \ref{def_equiv} and \ref{mcomp}). Recall that $\ov\chi_a\in\linear(\alpha_a)$ (by Lemma \ref{abstract_extract}.(3)). The blueprint $\gamma_a$ can be seen as a synthesized version of $\alpha_a$ of the same relative depth but of smaller ``width'', and  such that $\ov\chi_a\in\linear(\gamma_a)\subseteq\linear(\alpha_a)$.

Each tree prefix of the shadow of $M$ belongs to a finite set effectively computable from $\phi$ and the domain of this prefix. In particular, one can compute all possible values for its labels, regardless of the full knowledge of $M$ -- or even without the knowledge of the existence of $M$. The key property satisfied by this shadow at every address $a$ is: 
\begin{itemize}
\item for each $\gamma'\vpump\gamma_a$, there exists $\alpha'\vpump\alpha_a$ such that $\linear(\gamma')\subseteq\linear(\alpha')$.
\end{itemize}
This property is sufficient to detect the non-compactness of $M$ for a pair of addresses $(a,b)$ only from the knowledge of $\ov\chi_a, \phi_a,\gamma_b, \phi_b$ and the arity of the nodes at $a$ and $b$. 
Indeed, suppose $a < b$, $\phi_a = \phi_b$ and the nodes at $a$, $b$ are of the same arity (1, or 2). Now, assume:
\begin{itemize}
\item there exists $\gamma'\vpump\gamma_b$ such that $\ov\chi_a\in\linear(\gamma')$.
\end{itemize}
 Then $\compdown{M}{a}$ and $\compdown{M}{b}$  are of the same kind and there exists $\alpha'\vpump\alpha_b$ such that $\ov\chi_a = \Omega(\free(\compdown{M}{a}))\in\linear(\gamma')\subseteq\linear(\alpha')$, therefore $M$ is not compact.

  In Section \ref{sect_shadows_def}, what we call a {\em shadow} is merely a tree $a\mapsto(\ov\chi_a,\gamma_a,\phi_a)$ of a certain shape, no matter if this tree is the shadow of a term or not. This shadow is {\em compact} if there is no pair ($a,b$) as above. Of course, the shadow of a compact term is always compact in this sense.

 In Section \ref{sect_finite} we will prove that for every formula $\phi$, the set of shadows of compact inhabitants of $\phi$  is a finite set effectively computable from $\phi$ (hence the same property holds for the set of compact inhabitants of $\phi$), and we will deduce from this key property the decidability of type inhabitation for HRM-terms.
\subsection{Blueprint equivalence and transversal compression}\label{sect_comp}
\begin{definition}\label{def_equiv} We let $\equiv$ be the least binary relation on blueprints such that:
\begin{enumerate}
\item $\ebp\equiv\ebp$,
\item $\phi\equiv \phi$,
\item if $\alpha_1\equiv\beta_1$, $\alpha_2\equiv\beta_2$, then $@_\phi(\alpha_1,\alpha_1)\equiv@_\phi(\beta_1,\beta_2)$,
\item if $|\ov a| = |\ov b| = n$ and $\alpha_i \equiv\beta_i$ for each $i\in[1,\dots, n]$, then 
$*_{\ov a}(\alpha_1,\dots,\alpha_{n})\equiv*_{\ov b}(\beta_1,\dots,\beta_{n})$.
\end{enumerate}
\end{definition}
In (3), we assume $\alpha_1,\alpha_2,\beta_1,\beta_2$ non-empty. In (4), we assume that the elements of each sequence $\ov a$, $\ov b$ are pairwise incomparable addresses. As to avoid circularity we assume also $a\neq\varepsilon$ or $b\neq\varepsilon$, and $\alpha_i,\beta_i\neq\ebp$ for at least one $i$.

To some extent this equivalence allows us to consider blueprints regardless of the exact values of addresses. For instance  $*_{\ov a}(\alpha_1,\dots,\alpha_{n})\equiv*(\alpha_1,\dots,\alpha_n)\equiv *(\alpha_n,\dots,\alpha_1)$, also $*(*(\alpha,\beta),\gamma) \equiv *(\alpha,\beta,\gamma) \equiv*(\alpha,*(\beta,\gamma))$, etc. It is easy to check that $\alpha\equiv\beta$ implies $\linear(\alpha)=\linear(\beta)$ -- this property will be used without reference.
\begin{definition}\label{mcomp} For each $m\in\nat$, we let $\curvearrowleft_m$ be the least binary relation such that:
\begin{enumerate}
\item if $\gamma_1 \equiv\dots\equiv\gamma_m\equiv\gamma_{m+1}\not\equiv\ebp$, then $\ast_{\ov a}(\gamma_1,\dots,\gamma_m)\curvearrowleft_m\ast_{\ov a\cdot (b)}(\gamma_1,\dots,\gamma_m,\gamma_{m+1})$,
\item if $\alpha = *_{\ov a}(\alpha_1,\dots,\alpha_n)$,  $\beta = *_{\ov b}(\beta_1,\dots,\beta_p)$ and $\alpha\curvearrowleft_m\beta$, then:
\begin{enumerate}
\item $@_\phi(\alpha,\gamma)\curvearrowleft_m @_\phi(\beta,\gamma)$,
\item $@_\phi(\gamma,\alpha)\curvearrowleft_m @_\phi(\gamma,\beta)$,
\item $\ast_{\ov a\cdot(c)}(\alpha_1,\dots,\alpha_n,\gamma)\curvearrowleft_m \ast_{\ov b\cdot(c)}(\beta_1,\dots,\beta_p,\gamma)$.
\end{enumerate}
\end{enumerate}
We call {\em $m$-compression of $\beta$} every $\alpha$ such that $\alpha\curvearrowleft_m\beta$. The {\em width} of $\beta$ is defined as the least $m\in\nat$ for which there is no $\alpha$ such that $\alpha\curvearrowleft_m\beta$.
\end{definition}
Again the elements of $\ov a\cdot(b)$, $\ov a\cdot(c)$ and $\ov b\cdot(c)$ must be pairwise incomparable addresses, and $\alpha,\beta,\gamma$ must be non-empty.
Note that for all non-empty $\beta$, we have $\ebp\curvearrowleft_0\beta$, hence the empty blueprint is the only blueprint of null width. If $\beta$ is of width $m > 0$, then for all addresses $a$, for $\compdown{\beta}{a} = \ast_{\ov a}(\gamma_1,\dots,\gamma_k)$ and for each $\gamma_i\neq\ebp$, the sequence $(\gamma_1,\dots,\gamma_k)$ contains no more than $m$ blueprints $\equiv$-equivalent to $\gamma_i$. For instance, if $\phi,\psi,\chi$ are distinct formulas, $\ast(\phi,\phi,\phi,\psi,\psi,\chi)$ is of width 3, $\ast(\omega, @_\omega(*(\phi,\psi),\phi), @_\omega(*(\psi,\phi),\phi))$ is of width 2, etc.
\begin{definition} \label{ordcomp} For each $m\in\nat$,
we write $\sqeq_m$ for the reflexive and transitive closure of the union of $\equiv$ and $\curvearrowleft_m$. We let $\sqeq^{\max}_m$ denote the subset of the relation $\sqeq_m$ of all pairs with a left-hand-side of width at most $m$.
\end{definition}
For instance, if $\phi,\psi,\chi$ are distinct formulas: 
$$\ebp\sqeq^{\max}_0\ast(\psi,\chi,\phi)\sqeq^{\max}_1\ast(\chi,\phi,\phi,\psi,\psi)\sqeq^{\max}_2\ast(\phi,\phi,\phi,\psi,\psi,\chi)$$
 Of course $\alpha\sqeq_m\beta$ implies $\alpha\sqeq_j\beta$ for all $j\in[1,\dots,m]$ and
 clearly, $\alpha\curvearrowleft_m\beta$ implies $|\dom(\alpha)| < |\dom(\beta)|$, therefore $\curvearrowleft_m$ is well-founded.  
\begin{definition}\label{enum_def} For all ${\cal S}\subseteq\sign$, for all $d\in\nat$ and for all $m\in\nat$:
\begin{itemize}
\item we let $\bp({\cal S},d,\infty)$ be the set of ${\cal S}$-blueprints of relative depth at most $d$,
\item we let $\bp({\cal S}, d, m)$ be the set of all blueprints in $\bp({\cal S}, d, \infty)$ of width at most $m$. 
\end{itemize}
\end{definition}
\begin{lemma}\label{enum_bounded} 
 For all finite ${\cal S}\subseteq\sign$,  for all $d\in\nat$ and for all $m\in\nat$:
\begin{enumerate} 
\item The set $\bp({\cal S}, d, m)/_\equiv$ is a finite set.
\item 
A selector $\brp({\cal S}, d, m)$ for $\bp({\cal S}, d, m)/_\equiv$ is effectively computable from $({\cal S}, d, m)$.
\end{enumerate}
\end{lemma}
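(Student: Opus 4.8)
The plan is to prove both parts simultaneously by induction on the relative depth $d$, exploiting the fact that modulo $\equiv$ every blueprint has a canonical ``forest'' shape. First I would record the structural normal form underlying everything. If $\alpha$ is non-empty and $a_1,\dots,a_k$ are its $<$-minimal defined addresses (an antichain, since the proper prefixes of a minimal defined address are all undefined), then $\alpha=\ast_{(a_1,\dots,a_k)}(\compdown{\alpha}{a_1},\dots,\compdown{\alpha}{a_k})$, and hence $\alpha\equiv\ast(\delta_1,\dots,\delta_k)$ where each $\delta_i=\compdown{\alpha}{a_i}$ is \emph{rooted}, i.e. is either a single formula $\phi$ or of the form $@_\phi(\beta_1,\beta_2)$ with $\beta_1,\beta_2$ non-empty. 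Two bookkeeping facts make the induction run: since the proper prefixes of a minimal defined address contribute no defined prefix, the relative depth of $\ast(\delta_1,\dots,\delta_k)$ is the maximum of the relative depths of the $\delta_i$, while that of $@_\phi(\beta_1,\beta_2)$ is $1+\max$ of those of $\beta_1,\beta_2$; and the width (Definition \ref{mcomp}) is unchanged by the binary $@$-node, whereas at the top $\ast$-junction it simply caps at $m$ the number of pairwise-$\equiv$ components $\delta_i$.

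For part (1) I would count $\equiv$-classes through this shape. Let $B_d$ be the number of $\equiv$-classes of ${\cal S}$-blueprints of relative depth $\le d$ and width $\le m$, and $R_d$ the number of such classes that are rooted. Rule (3) of Definition \ref{def_equiv} says the class of $@_\phi(\beta_1,\beta_2)$ is determined by $\phi$ together with the \emph{ordered} pair of classes of $\beta_1,\beta_2$; so a rooted class of depth $\le d$ is either a formula of ${\cal S}$ or such an $@_\phi$ built over two non-empty classes of depth $\le d-1$, giving $R_d\le |{\cal S}_{\mathrm{at}}|+|{\cal S}_{@}|\cdot B_{d-1}^{2}$, where ${\cal S}_{\mathrm{at}},{\cal S}_{@}$ are the formula-symbols and the $@_\psi$-symbols of ${\cal S}$. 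Rule (4), together with the freedom to choose the addresses $\ov a$, says the class of $\ast(\delta_1,\dots,\delta_k)$ depends only on the \emph{multiset} of classes of the $\delta_i$, and the width bound caps each multiplicity at $m$, so $B_d\le (m+1)^{R_d}$ (the empty multiset giving $\ebp$). Since $R_0=|{\cal S}_{\mathrm{at}}|$ and $B_0\le(m+1)^{R_0}$ are finite, alternating the two inequalities in an induction on $d$ yields $R_d,B_d<\infty$ for all $d$, which is part (1).

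For part (2) I would make exactly this recursion constructive, building the selector bottom-up. Given a selector $\brp({\cal S},d-1,m)$, I form the rooted representatives of depth $\le d$ as the formula-symbols of ${\cal S}$ together with all $@_\phi(\beta_1,\beta_2)$ with $@_\phi\in{\cal S}$ and $\beta_1,\beta_2$ ranging over the non-empty members of $\brp({\cal S},d-1,m)$; then I form all representatives by taking $\ebp$ together with the canonical $\ast(\delta_1,\dots,\delta_k)$ for every multiset of rooted representatives with each multiplicity in $\{1,\dots,m\}$, placing the chosen components at the canonical addresses $(1),\dots,(k)$. By the normal form above this finite, effectively computable family meets every class of $\bp({\cal S},d,m)$. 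To turn it into a genuine selector (exactly one representative per class) I would invoke decidability of $\equiv$ and delete any candidate $\equiv$-equivalent to an earlier one; $\equiv$ is decidable because the procedure read off the rules of Definition \ref{def_equiv} terminates on the decreasing component structure: compare formulas directly, compare $@$-nodes label-wise and recursively on the two ordered children, and compare arbitrary blueprints by matching, up to $\equiv$, their two multisets of rooted components.

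The main obstacle is not the arithmetic but the non-uniqueness of the $\ast_{\ov a}$-decomposition flagged after Figure~\ref{blueprint_cons}: I must be sure that $\equiv$ is \emph{precisely} the congruence identifying blueprints with the same multiset-of-rooted-components shape at every junction, so that (i) the class of a $\ast$-node really is a function of the multiset of its components' classes (used in the step $B_d\le (m+1)^{R_d}$), (ii) width is $\equiv$-invariant and decomposes as claimed, and (iii) the multiset-matching in the decision procedure for $\equiv$ is correct. Establishing this normal-form characterisation of $\equiv$ is the one genuinely load-bearing point; once it is in hand, both the finiteness count and the effective construction of $\brp({\cal S},d,m)$ follow by the straightforward depth induction sketched above.
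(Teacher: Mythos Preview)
Your proposal is correct and follows essentially the same approach as the paper: induction on $d$, alternating between rooted blueprints (determined up to $\equiv$ by the root symbol and the ordered pair of child classes) and general blueprints (determined by the multiset of rooted-component classes, with multiplicities capped at $m$), yielding both finiteness and an effective selector. The only cosmetic difference is that the paper builds its selector directly via the multiplicity functions $\sigma_\beta\in\{0,\dots,m\}^{\{1,\dots,k\}}$ and the canonical representatives $\rho_\tau$, so no separate deduplication step via decidability of $\equiv$ is needed.
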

\begin{proof}
(1) Let $\bp_\varepsilon({\cal S}, d, m)$ be the set of all rooted blueprints in $\bp({\cal S}, d, m)$. Assuming  $\bp_\varepsilon({\cal S}, d, m)/_\equiv$ is a finite set and a selector $\brp_\varepsilon({\cal S}, d, m)$ for $\bp_\varepsilon({\cal S}, d, m)/_\equiv$ is effectively computable from $({\cal S}, d, m)$, we prove that $\bp({\cal S}, d, m)/_\equiv$ and  $\bp_\varepsilon({\cal S}, d+1, m)/_\equiv$ are finite sets and show how to compute a selector for each set. 

Let $(\alpha_1,\dots,\alpha_k)$ be an enumeration of  $\brp_\varepsilon({\cal S}, d, m)$.  Let $\Sigma_d$ be the set of all functions from $\{1,\dots, k\}$ to $\{0,\dots,m\}$. For each $\beta\in\bp({\cal S}, d, m)$ there exist
$\beta_1,\dots,\beta_n\in\bp_\varepsilon({\cal S}, d, m)$ and  $\ov b$ such that  $\beta=*_{\ov b}(\beta_1,\dots,\beta_n)$. We let $\sigma_\beta$ be the function mapping each $i\in\{1,\dots,k\}$ to the number of occurrences of an element $\equiv$-equivalent to $\alpha_i$ in the sequence $(\beta_1,\dots,\beta_n)$. Clearly $\sigma_\beta\in\Sigma_d$ and furthermore for all $\beta'\in\bp({\cal S}, d, m)$ we have $\beta\equiv\beta'$ if and only if $\sigma_{\beta} = \sigma_{\beta'}$, hence $\bp({\cal S}, d, m)$ is a finite set. 

For each $\tau\in\Sigma_d$, let $\rho_\tau = *(\alpha^1_1,\dots,\alpha^{\tau(1)}_1,\dots,\alpha^1_k,\dots,\alpha^{\tau(k)}_k)$ where each $\alpha^j_i$ is equal to $\alpha_i$. We have $\rho_\tau\in\bp({\cal S}, d, m)$ and $\sigma(\rho_\tau) = \tau$, that is, if $\tau,\tau'\in\Sigma_d$ and $\tau\neq\tau'$, then $\rho_\tau\not\equiv\rho_{\tau'}$. Hence we may define $\brp({\cal S}, d, m)$ as $\{\rho_\tau\,|\,\tau\in\Sigma_d\}$.

The finiteness of  $\bp_\varepsilon({\cal S}, d+1, m)/_\equiv$ follows immediately from the finiteness of $\bp({\cal S}, d, m)$ and the fact that if $\beta = @_\phi(\beta_1,\beta_2)$ and $\beta' = @_\psi(\beta'_1,\beta'_2)$ are elements of $\bp_\varepsilon({\cal S}, d+1, m)$, then $\beta_1,\beta_2,\beta'_1,\beta'_2$ are non-empty elements of $\bp({\cal S}, d, m)$ and furthermore $\beta\equiv\beta'$ if and only if $\beta_1 \equiv \beta'_1$ and $\beta_2\equiv\beta'_2$. The same property allows us to define  $\brp_\varepsilon({\cal S}, d + 1, m)$ as the set of all blueprints of the form $@_\phi(\gamma_1,\gamma_2)$ where $@_\phi\in S$ and each $\gamma_i$ is a non-empty element of $\brp({\cal S},d,m)$.

(2) The lemma follows by induction on $d$, using (1) and the facts that: $\bp_\varepsilon({\cal S}, 0, 0)$ is empty (hence $\bp({\cal S}, d, 0) = \{\ebp\}$ for all $d$); if $m\in\nat_+$, then $\bp_\varepsilon({\cal S}, 0, m)$ is the finite set of all formulas of ${\cal S}$.
 \end{proof}
\subsection{Shadow of a term}\label{sect_shadows_def}
\begin{definition}\label{rep_shadows} Let $\phi$ be a formula. Let ${\cal S}_\phi$ be the union of $\sub(\phi)$ (Definition \ref{def_subform}) and the set of all $@_\psi$ such that $\psi\in\sub(\phi)$.
For each integer $k$, for each formula $\phi$, we let $\shadow(\phi,k) = \brp({\cal S}_\phi, k\times |\sub(\phi)|, k)$, 
where $\brp$ is the function introduced in Lemma \ref{enum_bounded}.(2).
\end{definition}
\begin{definition}\label{shadows} A {\em shadow} is a finite tree in which each node is of arity at most 2 and is labelled with a triple of the form $(\ov\chi, \gamma,\psi)$, where $\ov\chi$ is a sequence of formulas, $\gamma$ is a blueprint and $\psi$ is a formula. 

We call {\em $\phi$-shadow} every shadow $\Xi$ satisfying the following conditions. We have $\Xi(\varepsilon) = (\varepsilon,\ebp,\phi)$. For each $a\in\dom(\Xi)$, let $k_a$ be the number of $b < a$ such that the node of $\Xi$ at $b$ is unary, and let $(\ov\chi_a,\gamma_a,\psi_a) = \Xi(a)$. Then:
\begin{itemize}
\item $\ov \chi_a$ is a sequence of subformulas of $\phi$ of length at most $k_a$,
\item $\gamma_a\in\shadow(\phi,k_a)$,
\item $\ov\chi_a\in\linear(\gamma_a)$
\item $\psi_a$ is a subformula of $\phi$. 
\end{itemize}
\end{definition}
\begin{definition}\label{def_shadow_term} Let $M$ be a locally compact $\lamt$-inhabitant of $\phi$. For each $a\in\dom(M)$:
\begin{itemize}
\item let $\ov\chi_a = \Omega(\free(\compdown{M}{a}))$, 
\item let $\alpha_a$ be the blueprint of $\compdown{M}{a}$, 
\item let $\gamma_a\in\shadow(\phi,|\Lambda(M,a)|)$ be such that $\gamma_a\sqeq^{\max}_{|\Lambda(M,a)|}\alpha_a$,
\item let $\phi_a$ be the type of $\compdown{M}{a}$.
\end{itemize} The tree $\Xi$ mapping each $a\in\dom(M)$ to $(\ov\chi_a.\gamma_a,\phi_a)$ will be called {\em the shadow of $M$}.
\end{definition}
Recall that if $M$ is  a locally compact $\lamt$-inhabitant of $\phi$, then for each address $a$ in $M$, the blueprint $\alpha_a$ of $\compdown{M}{a}$ is of relative depth at most $|\Lambda(M,a)|\times|\sub(\phi)|$. 
Every maximal $|\Lambda(M,a)|$-compression  of $\alpha_a$ produces a shadow $\alpha'_a$ with the same relative depth and of width at most  $|\Lambda(M,a)|$, to which some element of  $\shadow(\phi,|\Lambda(M,a)|)$ is equivalent, thus the shadow of $M$ is well-defined. Note that the choice of $\gamma_a$ is possibly not unique (although it is, since ${\mathbb R}$ is a selector and one can actually prove that $\gamma\sqeq^{\max}_m \alpha$ and $\gamma'\sqeq^{\max}_m \alpha$ implies $\gamma\equiv \gamma'$, but this property is irrelevant to our discussion). We assume that  {\em some} $\gamma_a$ is chosen for each address $a$ in $M$.

Obviously the shadow of $M$ satisfies the first, second and fourth conditions in the definition of $\phi$-shadows given above -- in the next section, we prove that it satisfies also the third.
\begin{figure}
\begin{center}
\epsfig{scale=1.2, file=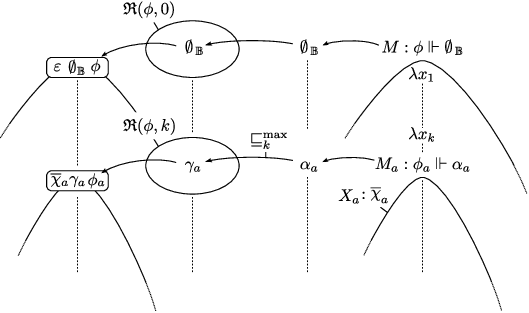}
\caption{\label{shadow_def} A compact inhabitant and its shadow.}
\end{center}
\end{figure}
\subsection{Compact shadows and compact inhabitants}\label{sect_completeness}
\begin{definition}\label{def_compact_shadow} 
A shadow $\Xi$ is {\em compact} if and only if there are no $a, b$ such that: $a < b$, the nodes of $\Xi$ at $a$, $b$ are of the same arity,  $\Xi(a) = (\ov\chi_a,\gamma_a,\psi)$, $\Xi(b)=(\ov\chi_b,\gamma_{b},\psi)$ and there exists $\gamma'\vpump\gamma_b$ such that $\ov\chi_a\in\linear(\gamma')$. 
\end{definition}
Compare this definition with the definition of compactness for term (Definition \ref{def_compact}). With the help of three auxiliary lemmas, we now prove the key lemma of Section \ref{sect_shadows}: if $M$ is a compact inhabitant -- a fortiori locally compact by Lemma \ref{compact_terms} -- then the shadow of $M$ is a compact $\phi$-shadow.
\begin{lemma}\label{commute_left}
 If $\alpha\vpump\beta\sqeq_{1}\beta'$, then there exists $\alpha'$ such that $\alpha\sqeq_{1}\alpha'\vpump\beta'$.
\end{lemma}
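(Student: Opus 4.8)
The plan is to prove a \emph{local} commutation statement and then tile it. Both $\vpump$ and $\sqeq_1$ are reflexive--transitive closures of one-step relations: $\vpump$ of the single-graft relation ``$\beta[a\leftarrow\compdown{\beta}{b}]$ with $a<b$ and $\beta(a)=\beta(b)$'', and $\sqeq_1$ of $\equiv\cup\curvearrowleft_1$. So I would first reduce to the case in which $\alpha\vpump\beta$ is a single graft and $\beta\,R\,\beta'$ is a single step with $R\in\{\equiv,\curvearrowleft_1\}$, aiming to produce an $\alpha'$ with $\alpha\sqeq_1\alpha'\vpump\beta'$ (note that $\vpump$ on the right is allowed to be several grafts). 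Given this local statement the full lemma follows by a double induction: writing $\beta\sqeq_1\beta'$ as a chain of single $R$-steps and inducting on its length, and for each step writing $\alpha\vpump\beta$ as a chain of single grafts and inducting on its length, using transitivity of $\sqeq_1$ and of $\vpump$ to compose the tiles.

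For the step $R={\equiv}$, I would use that a single $\equiv$-step only permutes the children of $\ast$-nodes and re-addresses them, hence preserves the ancestor--descendant order on addresses and the labels in the domain. The graft pair $a<b$ with $\beta(a)=\beta(b)$ is therefore carried to a pair $a'<b'$ in $\beta'$ with $\beta'(a')=\beta'(b')$, and the corresponding graft yields an $\alpha'$ with $\alpha\equiv\alpha'$; so $\alpha\sqeq_1\alpha'\vpump\beta'$ with a single $\equiv$ on the bottom. I would set this up by structural induction on $\beta$, pushing the graft into the subtree where it acts whenever it does not meet the root of the $\equiv$-step and treating the root cases ($@_\phi$ and $\ast$) directly.

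For the step $R={\curvearrowleft_1}$, $\beta'$ is obtained from $\beta$ by inserting, at some $\ast$-node of address $c$, an extra child $\gamma_2$ that is $\equiv$-equivalent to an existing child $\gamma_1$ (Definition~\ref{mcomp}, with the insertion propagated through $@$- and $\ast$-contexts by congruence). I would again localise both the graft and the insertion by structural induction, reducing to a case analysis on the position of the graft addresses $a<b$ relative to $c$ and its two copies. When the region modified by the graft is disjoint from, or strictly below, the inserted copy, performing the same graft in $\beta'$ gives an $\alpha'$ that differs from $\alpha$ by exactly one duplicated $\equiv$-copy, so $\alpha\curvearrowleft_1\alpha'$ and the bottom side is a single $\curvearrowleft_1$; in the sub-case where the graft target lies above $c$ the copies are erased and $\alpha'=\alpha$.

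The main obstacle is the case in which the graft acts \emph{inside} one of the two $\equiv$-copies. Grafting only in that copy would destroy the equivalence between the copies, so no single $\curvearrowleft_1$ would relate $\alpha$ to $\alpha'$. The remedy, and the reason the right-hand side of the lemma is the full relation $\vpump$ rather than a single graft, is to perform the corresponding graft in \emph{both} copies: since the copies are $\equiv$-equivalent, the graft addresses $a<b$ in one copy have counterparts in the other with equal labels, so the second graft is legal, and because grafting commutes with $\equiv$ (the content of the $R={\equiv}$ case) the two results remain $\equiv$-equivalent. The resulting $\alpha'$ then satisfies $\alpha'\vpump\beta'$ via two grafts and $\alpha\curvearrowleft_1\alpha'$, closing the diagram. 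I expect the bookkeeping of addresses under the re-addressing allowed by $\equiv$ to be the only genuinely fiddly part; once the ``graft in every copy'' idea is in place, every case produces a bottom side consisting of a single $\equiv$- or $\curvearrowleft_1$-step, hence $\alpha\sqeq_1\alpha'$.
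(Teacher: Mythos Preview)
Your plan is correct and matches the paper's proof: both reduce to the two one-step cases $\beta\equiv\beta'$ and $\beta\curvearrowleft_1\beta'$, establish the $\equiv$-case by structural induction (a graft pair $a<b$ is carried to a graft pair $a'<b'$), and resolve the only non-trivial $\curvearrowleft_1$-case---the graft acting inside the surviving copy---by invoking the $\equiv$-case to produce a parallel graft in the duplicated copy, yielding $\alpha\curvearrowleft_1\alpha'\vpump\beta'$. The only organisational difference is that the paper keeps the full relation $\alpha\vpump\beta$ when commuting past a single $\curvearrowleft_1$ (so the appeal to the $\equiv$-case is made once, for the whole sequence of grafts), whereas you decompose $\vpump$ into single grafts first and tile; both are fine, and your remark that the right-hand $\vpump$ may become two grafts in the duplicated-copy case is exactly what makes the tiling go through.
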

\begin{proof} (1) An immediate induction on  $|\dom(\beta')|$ shows that if $\alpha = \beta[a\leftarrow\compdown{\beta}{b}]$ and $\beta\equiv\beta'$, then there exist $a',b'$ such that $a' < b'$ and $\alpha\equiv\alpha' = \beta'[a'\leftarrow\compdown{\beta'}{b'}]$. As a consequence, an immediate induction on the length of the derivation of $\alpha\vpump\beta$ shows that the lemma holds if $\beta\equiv\beta'$. 

(2) Another induction on $|\dom(\beta')|$ shows that if $\alpha\vpump\beta\curvearrowleft_1\beta'$, then there exists $\alpha'$ such that $\alpha\curvearrowleft_1\alpha'\vpump\beta'$. The only non trivial case is $\alpha = \ast_{(a_1)}(\alpha_1)$,  $\beta = *_{(a_1)}(\beta_1)$ with $\alpha_1\vpump\beta_1$ and $\beta' = *_{(a_1,a_2)}(\beta_1,\beta_2)$ with $\beta_1\equiv\beta_2$. Since $\alpha_1\vpump\beta_1\equiv\beta_2$, by (1) there exists $\alpha_2$ such that  $\alpha_1\equiv\alpha_2\vpump\beta_2$. Hence $\alpha = \ast_{(a_1)}(\alpha_1)\curvearrowleft_1\ast_{(a_1,a_2)}(\alpha_1,\alpha_2)\vpump\ast_{(a_1,a_2)}(\beta_1,\beta_2) = \beta'$.

(3) Using (1) and (2), the lemma follows by induction on the length of an arbitrary sequence $(\beta_0,\dots,\beta_n)$ such that $\beta_0 = \beta$, $\beta_n = \beta'$ and $\beta_{i-1}\equiv\beta_{i}$ or  $\beta_{i-1}\curvearrowleft_1\beta_{i}$ for each $i\in[1,\dots,n]$.
\end{proof}
\begin{lemma}\label{preserve_right} If $\alpha\sqeq_1\beta$, then $\linear(\alpha)\subseteq\linear(\beta)$.
\end{lemma}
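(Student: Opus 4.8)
The plan is to peel the statement down to a single elementary step. By Definition~\ref{ordcomp} the relation $\sqeq_1$ is the reflexive--transitive closure of $\equiv\cup\curvearrowleft_1$, and set inclusion is itself reflexive and transitive; so it is enough to check that every generating step can only enlarge $\linear$. For an $\equiv$-step this is exactly the fact, recorded after Definition~\ref{def_equiv}, that $\alpha\equiv\beta$ implies $\linear(\alpha)=\linear(\beta)$. Hence the entire content is the implication
\[
\alpha\curvearrowleft_1\beta\ \Longrightarrow\ \linear(\alpha)\subseteq\linear(\beta),
\]
which I would prove by induction on the derivation of $\alpha\curvearrowleft_1\beta$ according to Definition~\ref{mcomp}. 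Throughout I would use the structural description of $\linear$ listed at the end of Section~\ref{subsect_extract}, that is $\linear(\ast_{\ov a}(\gamma_1,\dots,\gamma_k))=\circledast(\linear(\gamma_1),\dots,\linear(\gamma_k))$ and $\linear(@_\phi(\gamma_1,\gamma_2))=\cc(\linear(\gamma_1),\linear(\gamma_2))$, together with the evident monotonicity of $\circledast$ and $\cc$ for inclusion of their arguments (a (right-)shuffle of sequences taken from smaller sets is a (right-)shuffle of sequences taken from larger sets, and closure under contraction preserves inclusion).

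The heart of the argument is the base case, rule~1 of Definition~\ref{mcomp} specialised to $m=1$, namely $\ast_{(a_1)}(\gamma_1)\curvearrowleft_1\ast_{(a_1,b)}(\gamma_1,\gamma_2)$ with $\gamma_1\equiv\gamma_2\not\equiv\ebp$. Setting $\mathcal F=\linear(\gamma_1)=\linear(\gamma_2)$ (equal because $\gamma_1\equiv\gamma_2$), the goal is $\circledast(\mathcal F)\subseteq\circledast(\mathcal F,\mathcal F)$. Since the shuffle of a single sequence is that sequence, every member of $\circledast(\mathcal F)$ is a contraction of some $F=(f_1,\dots,f_k)\in\mathcal F$. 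Now $F$ itself lies in $\circledast(\mathcal F,\mathcal F)$: the sequence $(f_1,f_1,f_2,f_2,\dots,f_k,f_k)$ is a shuffle of $(F,F)$ --- take the $j$-th chunk of each copy to be $(f_j)$ --- and contracting each adjacent pair $(f_j,f_j)$ returns $F$. As $\circledast(\mathcal F,\mathcal F)$ is closed under contraction and contraction is transitive, every contraction of $F$, and therefore every element of $\circledast(\mathcal F)$, belongs to $\circledast(\mathcal F,\mathcal F)$.

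For the congruence cases the induction hypothesis does the work. Cases 2(a) and 2(b), i.e.\ $@_\phi(\alpha,\gamma)\curvearrowleft_1@_\phi(\beta,\gamma)$ and its mirror arising from $\alpha\curvearrowleft_1\beta$, follow immediately from $\linear(\alpha)\subseteq\linear(\beta)$ and the monotonicity of $\cc$. For case 2(c), where $\ast_{\ov a\cdot(c)}(\alpha_1,\dots,\alpha_n,\gamma)\curvearrowleft_1\ast_{\ov b\cdot(c)}(\beta_1,\dots,\beta_p,\gamma)$ comes from $\alpha=\ast_{\ov a}(\alpha_1,\dots,\alpha_n)\curvearrowleft_1\beta=\ast_{\ov b}(\beta_1,\dots,\beta_p)$, the tidy move is to flatten with $\equiv$ instead of proving an associativity law for $\circledast$ by hand: as concrete partial trees both $\ast_{\ov a\cdot(c)}(\alpha_1,\dots,\alpha_n,\gamma)$ and $\ast(\alpha,\gamma)$ are of the form $\ast_{\ov d}(\alpha_1,\dots,\alpha_n,\gamma)$ for suitable pairwise-incomparable $\ov d$, so they are $\equiv$-equivalent by rule~4 of Definition~\ref{def_equiv} (which is reflexive by a trivial induction); likewise the right-hand side is $\equiv\ast(\beta,\gamma)$. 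Since $\equiv$ preserves $\linear$, the goal becomes $\circledast(\linear(\alpha),\linear(\gamma))\subseteq\circledast(\linear(\beta),\linear(\gamma))$, which is again immediate from the induction hypothesis and monotonicity of $\circledast$. The only genuinely combinatorial obstacle is the base case: once the self-interleaving argument for $\circledast(\mathcal F)\subseteq\circledast(\mathcal F,\mathcal F)$ is secured, the remaining steps are routine monotonicity together with the $\equiv$-flattening bookkeeping.
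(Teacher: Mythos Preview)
Your proof is correct and follows essentially the same approach as the paper's: reduce $\sqeq_1$ to a single $\curvearrowleft_1$-step (using that $\equiv$ preserves $\linear$), then argue by induction on the structure of that step, with the base case $\ast_{(a_1)}(\gamma_1)\curvearrowleft_1\ast_{(a_1,b)}(\gamma_1,\gamma_2)$ and the congruence cases handled by monotonicity of $\circledast$ and $\cc$. The paper merely declares the base case ``clear'' and the rest ``easy from the induction hypothesis''; you have spelled out the doubling/contraction argument for $\circledast(\mathcal F)\subseteq\circledast(\mathcal F,\mathcal F)$ and the $\equiv$-flattening for case~2(c), which is exactly the content the paper suppresses.
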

\begin{proof} By induction on $|\dom(\beta)|$. Since $\gamma\equiv\gamma'$ implies $\linear(\gamma)=\linear(\gamma')$ and $|\dom(\gamma)| = |\dom(\gamma')|$, it suffices to consider the case where $\alpha$ is a $1$-compression of $\beta$.  The case 
$\alpha =  *_{(a_1)}(\alpha_1)$ and 
$\beta = *_{(a_1,a_2)}(\alpha_1,\alpha_2)$ is clear. The remaining cases follow easily from the induction hypothesis.
\end{proof}
\begin{lemma}\label{preserve_left} If $\alpha\sqeq_m\beta$, then the set of all elements of $\linear(\beta)$ of length at most~$m$ is a subset of $\linear(\alpha)$.
\end{lemma}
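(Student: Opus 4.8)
The plan is to peel $\sqeq_m$ into single relational steps and reduce everything to one combinatorial fact about shuffles. Since $\sqeq_m$ is the reflexive--transitive closure of $\equiv\cup\curvearrowleft_m$, and since the length bound $m$ is \emph{uniform}, the desired inclusion composes along a chain: writing $\linear(\delta)_{\le m}$ for the elements of $\linear(\delta)$ of length at most $m$, if $H\in\linear(\beta)_{\le m}$ then along any chain $\alpha=\delta_0,\dots,\delta_n=\beta$ a downward induction places $H$ successively in each $\linear(\delta_i)$ (at each step $H$ is short, so the single-step inclusion applies). As $\equiv$ preserves $\linear$ exactly, the whole lemma reduces to the single claim: if $\alpha\curvearrowleft_m\beta$ then $\linear(\beta)_{\le m}\subseteq\linear(\alpha)$. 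I would prove this by structural induction on the derivation of $\alpha\curvearrowleft_m\beta$, using the recursive description of $\linear$ (the four properties stated after the shuffle definitions), the fact that every $\linear(\delta)$ is closed under contraction (it is a singleton, $\{\varepsilon\}$, or a $\circledast$/$\circledcirc$), and a shortening observation.

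The \emph{shortening} observation is this: if $H$ is a contraction of a (right-)shuffle $R$ of $(F_1,F_2)$ with $|H|\le m$, then inside each run of $R$ the elements contributed by $F_1$ are contiguous in $F_1$ and mutually equal (because a run of $R$ is a contiguous block of equal values, and $F_1$ is an order-preserving subword), so they may be merged; merging them, in each run, down to just enough survivors to keep that run at least as long as in $H$ produces a contraction $F_1'$ of $F_1$ from which $H$ is still recoverable by a (right-)shuffle with $F_2$, and it leaves at most (the length of that run in $H$) survivors per run, so $|F_1'|\le|H|\le m$; symmetrically for $F_2$. With this, the inductive clauses of Definition~\ref{mcomp} are routine: for the two $@$-clauses one takes $H\in\linear(\beta)_{\le m}=\circledcirc(\cdots)_{\le m}$, shortens the compressed child's contribution to a short $F'\in\linear(\beta_0)$, feeds $F'$ (of length $\le m$) to the induction hypothesis to land it in $\linear(\alpha_0)$, and reassembles; for the common-$\ast$-component clause one first uses associativity of $\circledast$ (an easy shuffle identity) to write $\linear(\beta)=\circledast(\linear(\beta_0),\linear(\delta))$ and then argues identically.

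The genuine obstacle is the base clause, where $\alpha=\ast_{\ov a}(\gamma_1,\dots,\gamma_m)$, $\beta=\ast_{\ov a\cdot(b)}(\gamma_1,\dots,\gamma_{m+1})$ with all $\gamma_i$ equivalent and non-empty, so that $\linear(\beta)=\circledast(\mathcal L^{m+1})$ and $\linear(\alpha)=\circledast(\mathcal L^{m})$ for $\mathcal L=\linear(\gamma_1)$ a set of non-empty sequences; here I must show that a short extracted sequence built from $m+1$ parallel copies can already be built from $m$, i.e.\ one copy is redundant. Writing $H$ of length $\ell\le m$ as a contraction of a shuffle $R$ of sources $L_1,\dots,L_{m+1}\in\mathcal L$, let $t_k,s_k$ be the lengths of run $k$ in $H$ and in $R$ (equal profiles, so $s_k\ge t_k\ge1$) and let $c^{(i)}_k\ge0$ count the elements run $k$ receives from source $i$. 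Dropping any source $i_0$ with $c^{(i_0)}_k\le s_k-t_k$ for every $k$ leaves a shuffle of $m$ sources whose runs are non-empty and of length $\ge t_k$, so $H$ is still one of its contractions, giving $H\in\circledast(\mathcal L^m)$. Such a source exists by counting: if every source violated the slack in some run, assign each source to one violated run $k$ and let $n_k$ be the number assigned there; then $s_k\ge n_k(s_k-t_k+1)$, whence $n_k\le s_k/(s_k-t_k+1)\le t_k$ (the last step being $(t_k-1)(s_k-t_k)\ge0$), and summing gives $m+1=\sum_k n_k\le\sum_k t_k=\ell\le m$, a contradiction. This pigeonhole, together with the run-contiguity exploited in shortening, is where all the real work lies; the remainder is bookkeeping with the $\circledast$/$\circledcirc$ identities.
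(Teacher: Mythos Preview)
Your proposal is correct, and its overall architecture matches the paper's: reduce $\sqeq_m$ to a single $\curvearrowleft_m$ step (the $\equiv$-steps being trivial since $\equiv$ preserves $\linear$), and then treat the base rule of Definition~\ref{mcomp} as the heart of the matter, the contextual rules (2a)--(2c) being routine once one has a shortening device. Your shortening observation makes explicit what the paper leaves implicit when it declares the contextual cases ``non-trivial only'' in the base rule.

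The genuine difference is in the base case, where the paper's argument is considerably simpler than your run-based pigeonhole. The paper observes that $\ov\phi\in\Phi^{(m+1)}$ (with $\Phi=\linear(\gamma)$) is equivalent to the existence of subsets $J_1,\dots,J_{m+1}\subseteq\{1,\dots,p\}$ covering $\{1,\dots,p\}$ with each $f(J_i)\in\Phi$ (this uses that $\Phi$ is closed under contraction). Then for each position $j\in\{1,\dots,p\}$ one simply \emph{chooses} a single index $k_j$ with $j\in J_{k_j}$; the at most $p\le m$ subsets $J_{k_1},\dots,J_{k_p}$ still cover, so $\ov\phi\in\Phi^{(p)}\subseteq\Phi^{(m)}$. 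No run decomposition, no inequality $n_k\le s_k/(s_k-t_k+1)\le t_k$, no counting of slacks: the whole point collapses to ``$p$ positions need at most $p$ covering sources.'' Your argument reaches the same conclusion but via a detour; what it buys is nothing beyond what the paper's two-line selection already gives. Conversely, your explicit treatment of the contextual clauses (and the care about right-shuffles keeping a last $F_2$-element, which you should make explicit) is more detailed than the paper's terse dismissal, which is a minor expository gain.
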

\begin{proof} By induction on $|\dom(\beta)|$. Again,  we examine only the case $\alpha\curvearrowleft_m\beta$. The proposition is trivially true if $m = 0$. Suppose $m > 0$. The only non-trivial case is
 $\alpha \equiv *_{\ov a}(\gamma_1,\dots,\gamma_m)$ and $\beta \equiv *_{\ov a}(\gamma_1,\dots,\gamma_m,\gamma_{m+1})$ with $\gamma_i \equiv \gamma$ for all $i$. Let $\Phi = \linear(\gamma)$. For each integer $k$, let $\Phi^{(k)} = \circledast( \Phi _1,\dots, \Phi _k)$ where $ \Phi _i = \linear(\gamma)$ for each $i$. Let $\ov\phi = (\phi_1,\dots, \phi_p)\in\linear(\beta)$ be such that $p\leq m$. We have to prove that  $\ov\phi\in\linear(\alpha)$. For each $J\subseteq\{1,\dots, p\}$, let $(j_1,\dots, j_q)$ be the strictly increasing enumeration of all elements of $J$ and let $f(J) =  (\phi_{j_1},\dots, \phi_{j_q})$. We have   $\ov\phi\in\linear(\beta) =  \Phi ^{(m+1)}$, hence there exist $J_1,\dots, J_{m+1}$ such that $J_1\cup\dots\cup J_{m+1} = \{1,\dots, p\}$, and $f(J_i)\in\linear(\gamma)$ for each $i\in \{1,\dots,m+1\}$.
 For each $j\in\{1,\dots, p\}$, let $k_j$ be any element of $\{1,\dots, m+1\}$ such that $j\in J_{k_j}$. Then $J_{k_1}\cup\dots\cup J_{k_p} =\{1,\dots, p\}$, so $\ov\phi\in\circledast(\{f(J_{k_1})\},\dots, \{f(J_{k_p})\})\subseteq \Phi ^{(p)}\subseteq \Phi ^{(m)} = \linear(\alpha)$. 
\end{proof}
\begin{lemma}\label{finiteness}  Let $M$ be a locally compact $\lamt$-inhabitant of $\phi$. The shadow of $M$ is a $\phi$-shadow. If $M$ is compact, then this shadow is also compact.
\end{lemma}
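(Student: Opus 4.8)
The statement splits into two independent claims, and the plan is to treat them in turn, in both cases reducing everything to the structural facts already established about the interaction of $\linear$, $\vpump$ and $\sqeq_1$. Throughout, for $a\in\dom(M)$ I write $m_a=|\Lambda(M,a)|$ (the number of binders above $a$, which is the count $k_a$ appearing in Definition~\ref{shadows}), $\alpha_a$ for the blueprint of $\compdown{M}{a}$, $\ov\chi_a=\Omega(\free(\compdown{M}{a}))$, $\phi_a$ for the type of $\compdown{M}{a}$, and $\gamma_a$ for the chosen maximal $m_a$-compression with $\gamma_a\sqeq^{\max}_{m_a}\alpha_a$, as in Definition~\ref{def_shadow_term}.

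For the first claim I would verify the four defining conditions of a $\phi$-shadow. Conditions one, two and four are immediate, exactly as the author anticipates: since $M$ is closed, every free variable of $\compdown{M}{a}$ is bound above $a$, so $\free(\compdown{M}{a})\subseteq\Lambda(M,a)$ and hence $|\ov\chi_a|\leq m_a$; the types in $\ov\chi_a$ and the type $\phi_a$ are subformulas of $\phi$ by the Subformula Property (Lemma~\ref{normal_propto}); and $\gamma_a\in\shadow(\phi,m_a)$ is built into the definition of the shadow of $M$. The only condition needing argument is the third, $\ov\chi_a\in\linear(\gamma_a)$. For this I would combine two facts: $\ov\chi_a\in\linear(\alpha_a)$ by Lemma~\ref{abstract_extract}.(3), and $\gamma_a\sqeq_{m_a}\alpha_a$. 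Since $|\ov\chi_a|\leq m_a$, Lemma~\ref{preserve_left}, which sends every element of $\linear(\alpha_a)$ of length at most $m_a$ into $\linear(\gamma_a)$, delivers $\ov\chi_a\in\linear(\gamma_a)$ at once.

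For the second claim I would argue by contraposition, producing from a failure of compactness of the shadow $\Xi$ a witness to the failure of compactness of $M$ in the sense of Definition~\ref{def_compact}. So suppose there are $a<b$ with the nodes at $a,b$ of the same arity, $\Xi(a)=(\ov\chi_a,\gamma_a,\psi)$, $\Xi(b)=(\ov\chi_b,\gamma_b,\psi)$, and some $\gamma'\vpump\gamma_b$ with $\ov\chi_a\in\linear(\gamma')$. Equal arity together with the common third component $\psi=\phi_a=\phi_b$ says precisely that $\compdown{M}{a}$ and $\compdown{M}{b}$ are of the same kind. It then remains to transport the vertical compression $\gamma'$ of the \emph{synthesized} blueprint $\gamma_b$ into a vertical compression of the \emph{genuine} blueprint $\alpha_b$. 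From $\gamma_b\sqeq^{\max}_{m_b}\alpha_b$ I obtain $\gamma_b\sqeq_{m_b}\alpha_b$, hence $\gamma_b\sqeq_1\alpha_b$ (using that $\alpha\sqeq_m\beta$ implies $\alpha\sqeq_1\beta$), so $\gamma'\vpump\gamma_b\sqeq_1\alpha_b$. Lemma~\ref{commute_left} then supplies an $\alpha'$ with $\gamma'\sqeq_1\alpha'\vpump\alpha_b$, and Lemma~\ref{preserve_right} gives $\linear(\gamma')\subseteq\linear(\alpha')$. Consequently $\Omega(\free(\compdown{M}{a}))=\ov\chi_a\in\linear(\gamma')\subseteq\linear(\alpha')$ with $\alpha'\vpump\alpha_b$ and $\compdown{M}{b}\Vdash\alpha_b$, which is exactly the data witnessing that $M$ is not compact. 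The contrapositive is the desired implication.

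I expect the main difficulty to be conceptual rather than computational, and to sit entirely in the second claim: the compactness test on shadows is phrased through the compressed blueprints $\gamma_a,\gamma_b$, whereas compactness of terms is phrased through the true blueprints $\alpha_a,\alpha_b$, so the proof turns on being able to push a vertical compression $\vpump$ through the transversal-compression relation $\sqeq$ while preserving the relevant extractible sequence. This is precisely what Lemmas~\ref{commute_left}, \ref{preserve_left} and \ref{preserve_right} package; once they are invoked the remaining work is bookkeeping, with the two delicate points being the descent from $\sqeq_{m_b}$ to $\sqeq_1$ so that the $\sqeq_1$-lemmas apply, and the matching of the ``same arity, same type'' condition on $\Xi$ with ``same kind'' on $M$.
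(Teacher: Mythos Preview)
Your proposal is correct and follows essentially the same route as the paper: for the first claim you invoke Lemma~\ref{normal_propto}, Lemma~\ref{abstract_extract}.(3) and Lemma~\ref{preserve_left} exactly as the paper does, and for the second you argue by contraposition, chain $\gamma'\vpump\gamma_b\sqeq_1\alpha_b$, apply Lemma~\ref{commute_left} to commute, and finish with Lemma~\ref{preserve_right}. Your exposition is in fact slightly more careful than the paper's in making explicit the descent from $\sqeq_{m_b}$ to $\sqeq_1$ and the identification of ``same arity, same third component'' with ``same kind''.
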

\begin{proof} 
 For each address $a$ in $M$, the sequence $\ov\chi_a = \Omega(\free(\compdown{M}{a}))$ is a subsequence of $\Omega(\Lambda(M,a))$, hence
the first proposition follows from the definition of the shadow of $M$, Lemma \ref{normal_propto}, Lemma \ref{abstract_extract}.(3) and Lemma \ref{preserve_left}. 
Let $\Xi$ be shadow of $M$. Assume $\Xi$ is not compact.
There exist $a, b\in\dom(\Xi) = \dom(M)$ such that $\Xi(a) = (\ov\chi_a,\gamma_a,\psi)$, $\Xi(b) = (\ov\chi_b,\gamma_b,\psi)$, the nodes at $a$,$b$ in $\Xi$ are of the same arity, and there exists $\gamma'\vpump\gamma_b$ such that $\ov\chi_a\in\linear(\gamma')$. We have  $\compdown{M}{a}$, $\compdown{M}{b}$ of the same kind. Let $\alpha_a, \alpha_b$ be the blueprints of  $\compdown{M}{a}$, $\compdown{M}{b}$. Since  $\gamma_{b}\sqeq_{|\Lambda(M,a\cdot b)|}^{\max}\alpha_{b}$, we have $\gamma'\vpump\gamma_b\sqeq_1\alpha_b$. By Lemma \ref{commute_left} there exists $\alpha'$ such that $\gamma'\sqeq_1\alpha'\vpump\alpha_{b}$. By Lemma \ref{preserve_right}, we have $\ov\chi_a\in\linear(\gamma')\subseteq\linear(\alpha')$, hence $M$ is not compact. 
\end{proof}
\section{Finiteness of the set of compact $\phi$-shadows}\label{sect_finite}
Our last aim will be to prove that for each formula $\phi$, the set of all compact $\phi$-shadows is a finite set effectively computable from $\phi$. 

In definition \ref{def_shadow_order}, we introduce a last binary relation $\Subset$ on blueprints. The key lemma of this section (Lemma \ref{extended_well}) shows that whenever ${\cal S}\subset\sign$ is a finite set (in particular when ${\cal S}$ is the set of all subformulas of $\phi$ and all $@$'s tagged with a subformula of $\phi$), the relation $\Subset$ is an almost full relation \cite{Terese2003} on the set of all ${\cal S}$-blueprints: for every infinite sequence $\gamma_1,\gamma_2,\dots$ over $\bp({\cal S})$, there exists $i, j$ such that $i < j$ and $\gamma_i\Subset \gamma_{j}$.
 This result will be proven with the help of Melli\`es' Axiomatic Kruskal Theorem \cite{Mellies1998}. The finiteness of the set of compact $\phi$-shadows follows from this key lemma with the help of  K\"onig's Lemma (Lemma \ref{finite_compact}). The ability to compute these shadows follows directly from their definition.

By Lemma \ref{finiteness}, a consequence of this result is also the finiteness for each $\phi$ of the set of all compact $\lamt$-inhabitants of $\phi$, although our decision method is based on the computation of {\em shadows} of compact terms rather than a direct computation of those terms. It is worth mentioning that the proof of Theorem  \ref{axiom_kruskal} is non-constructive and that it gives no information about the complexity of our proof-search method -- this question might be itself another open problem.
\subsection{Almost full relations and Higman Theorem}\label{sect_akt}
\begin{definition}\label{def_shadow_order} We let $\Subset$ be the relation on blueprints defined by $\alpha\Subset\beta$ if and only if for all $\ov \chi\in\linear(\alpha)$, there exists $\gamma\vpump\beta$ such that $\ov\chi\in\linear(\gamma)$.
\end{definition}
\begin{definition} Let $\cal U$ be an arbitrary set. An {\em almost full relation (AFR) on ${\cal U}$} is a binary relation $\ll$ such that for every infinite sequence $(u_i)_{i\in \nat}$ over ${\cal U}$, there exist $i, j$ such that $i < j$ and $u_i\ll u_j$.
\end{definition}
The main aim of Section \ref{sect_finite} will be to prove the last key lemma from which we will easily infer the decidability of $\lamt$-inhabitation: for each finite ${\cal S}\subseteq\sign$, the relation $\Subset$ is an AFR on~$\bp({\cal S})$. 
\begin{proposition}\label{inter_product_hier}\mbox{}
\begin{enumerate}
\item  If $\ll$ and $\ll'$ are AFRs on ${\cal U}$, then $\ll\cap\ll'$ is an AFR on ${\cal U}$.
\item Suppose $\ll_{\cal U}$ is an AFR on ${\cal U}$ and $\ll_{\cal V}$ is an AFR on ${\cal V}$. Let $\ll_{{\cal U}\times{\cal V}}$ be the relation defined by $(U,V)\ll_{{\cal U}\times{\cal V}}(U',V')$ if and only if $U\ll_{\cal U} U'$ and $V\ll_{\cal V} V'$. Then $\ll_{{\cal U}\times{\cal V}}$ is an AFR on ${\cal U}\times{\cal V}$.
\end{enumerate}
\end{proposition}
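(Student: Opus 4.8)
The plan is to reduce both claims to a single Ramsey-theoretic strengthening of the AFR property: \emph{if $\ll$ is an AFR on ${\cal U}$, then every infinite sequence $(u_i)_{i\in\nat}$ over ${\cal U}$ admits an infinite ascending subsequence}, i.e.\ an infinite set $S\subseteq\nat$ with $u_i\ll u_j$ for all $i<j$ in $S$. To prove this, I would colour each pair $\{i,j\}$ with $i<j$ by the truth value of $u_i\ll u_j$, giving a $2$-colouring of the pairs of $\nat$; by the infinite Ramsey theorem there is an infinite homogeneous set $S$. Were the homogeneous colour ``false'', the subsequence $(u_i)_{i\in S}$ would contain no forward-related pair, contradicting that $\ll$ is an AFR; hence the colour is ``true'' and $S$ is as required. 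This upgrade — from the single related pair guaranteed by the bare definition to an \emph{infinite} ascending subsequence — is where the real work lies, and it is exactly what Ramsey's theorem supplies. Everything else is bookkeeping.

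For part (2), consider an infinite sequence $((U_i,V_i))_{i\in\nat}$ over ${\cal U}\times{\cal V}$. I first apply the strengthening to the projection $(U_i)$ and the AFR $\ll_{\cal U}$, extracting an infinite set $S$ with $U_i\ll_{\cal U} U_j$ for all $i<j$ in $S$. Restricting attention to indices in $S$, I then invoke the plain AFR property of $\ll_{\cal V}$ on $(V_i)_{i\in S}$ to obtain $i<j$ in $S$ with $V_i\ll_{\cal V} V_j$; since $i,j\in S$ we already have $U_i\ll_{\cal U} U_j$, and therefore $(U_i,V_i)\ll_{{\cal U}\times{\cal V}}(U_j,V_j)$. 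The asymmetry here is deliberate: we need an ascending subsequence for only \emph{one} of the two relations, after which the other is used through its definition alone, so a single pass through Ramsey suffices.

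Part (1) can be handled by the same recipe on a single carrier, or deduced directly from part (2). For the latter, given $(u_i)_{i\in\nat}$ over ${\cal U}$, apply part (2) to the sequence $((u_i,u_i))_{i\in\nat}$ over ${\cal U}\times{\cal U}$ with $\ll_{\cal U}=\ll$ and $\ll_{\cal V}=\ll'$: this yields $i<j$ with $u_i\ll u_j$ and $u_i\ll' u_j$, i.e.\ $u_i\,(\ll\cap\ll')\,u_j$. I would probably present (1) first as the warm-up and then note that (2) follows by the same two-step extract-then-apply pattern, since the Ramsey lemma is the only nontrivial tool either statement needs.
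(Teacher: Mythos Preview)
Your argument is correct. The paper itself gives no proof here: it simply refers to \cite{Mellies1998}, noting that both statements appear there as corollaries of a preparatory lemma (Lemma~4) in the course of proving the Axiomatic Kruskal Theorem. The Ramsey-based strengthening you isolate---every infinite sequence over an AFR contains an infinite $\ll$-ascending subsequence---is the standard device in the well-quasi-order/AFR literature and is essentially what that cited lemma provides, so you have effectively reconstructed the referenced argument rather than taken a genuinely different route. The deduction of (1) from (2) via the diagonal $(u_i,u_i)$ is a clean way to avoid repeating the two-step pattern.
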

\begin{proof} See \cite{Mellies1998}. Both results appear in the proof of Theorem 1, Step 4 (p.523) as a corollary of Lemma 4 (p.520)\end{proof}
\begin{definition} \label{def_Subsetseq}Let ${\cal U}$ be a set, let $\ll$ be a binary relation.
We let $\seq({\cal U})$ denote the set of all finite sequences over ${\cal U}$.
 The relation $\ll_\seq$ {\em induced by $\ll$ on $\seq({\cal U})$} is defined~by $(U_1,\dots, U_n)\ll_\seq(V_1,\dots, V_m)$ if and only if there exists a strictly monotone function $\eta:\{1,\dots, n\}\to\{1,\dots, m\}$ such that $U_i\ll\,V_{\eta(i)}$ for each $i\in\{1,\dots, n\}$.
\end{definition}
\begin{theorem}(Higman) \label{step_B} If $\ll$ is an AFR on ${\cal U}$, then ${\ll}_\seq$ is an AFR on $\seq({\cal U})$.
\end{theorem}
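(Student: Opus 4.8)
The plan is to argue by contradiction using Nash-Williams' \emph{minimal bad sequence} method. Call a sequence $(S_i)_{i\in\nat}$ over $\seq({\cal U})$ \emph{bad} if there are no $i<j$ with $S_i \ll_\seq S_j$. Assuming $\ll_\seq$ is not an AFR, a bad sequence exists. First I would extract from it a bad sequence that is \emph{minimal} in the following sense: having fixed $S_0,\dots,S_{k-1}$ so that this prefix still extends to some bad sequence, choose $S_k$ to be of least possible length subject to $S_0,\dots,S_k$ again extending to a bad sequence. (This recursion needs only dependent choice together with the well-ordering of the lengths in $\nat$.) The resulting $(S_i)_{i\in\nat}$ is itself bad, and it is minimal in that no bad sequence agrees with it on a proper prefix and then continues with a strictly shorter term.

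Before the main step I would record two easy observations. First, in any bad sequence every term is non-empty: the empty sequence embeds into every sequence (take the empty monotone map $\eta$), so an empty $S_i$ would give $S_i \ll_\seq S_j$ for all $j>i$. Hence each minimal $S_i$ can be written $S_i = (a_i)\cdot T_i$ with head $a_i\in{\cal U}$ and tail $T_i\in\seq({\cal U})$. Second, since $\ll$ is an AFR on ${\cal U}$, any infinite sequence over ${\cal U}$ has an infinite $\ll$-increasing subsequence: colour each pair $\{i,j\}$ with $i<j$ according to whether $a_i\ll a_j$, and apply the infinite Ramsey theorem; a homogeneous set coloured ``false'' would be an infinite bad sequence for $\ll$, contradicting the AFR hypothesis, so the homogeneous set is coloured ``true''. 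Applying this to the heads $(a_i)_{i\in\nat}$ yields indices $i_0<i_1<\cdots$ with $a_{i_k}\ll a_{i_l}$ for all $k<l$.

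The decisive step is then to splice the tails in. Consider the sequence $S_0,\dots,S_{i_0-1},T_{i_0},T_{i_1},T_{i_2},\dots$. I claim it is bad, which contradicts minimality because its first $i_0$ terms coincide with those of $(S_i)$ while its $i_0$-th term $T_{i_0}$ is strictly shorter than $S_{i_0}$. To see badness, suppose a good pair existed. If both members lie in the initial block $S_0,\dots,S_{i_0-1}$ we contradict badness of $(S_i)$ directly. If both are tails, say $T_{i_k}\ll_\seq T_{i_l}$ with $k<l$, then prepending the pair $a_{i_k}\ll a_{i_l}$ to the witnessing embedding gives $S_{i_k}\ll_\seq S_{i_l}$, again contradicting badness of $(S_i)$. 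Finally, if one member is some $S_i$ ($i<i_0$) and the other a tail $T_{i_k}$, note that $T_{i_k}$ is a contiguous suffix of $S_{i_k}$, so any embedding of $S_i$ into $T_{i_k}$ is \emph{a fortiori} an embedding of $S_i$ into $S_{i_k}$ (compose the monotone map with the shift); thus $S_i\ll_\seq S_{i_k}$, contradicting badness once more. All cases being impossible, the spliced sequence is bad, and minimality is violated.

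I expect the main obstacle to be organisational rather than technical: getting the minimal bad sequence set up so that ``strictly shorter $i_0$-th term'' genuinely contradicts its defining minimality, and checking the mixed head/tail case correctly -- there one must use that each tail sits inside its own parent as a literal suffix, so that no reflexivity or transitivity of $\ll$ is needed (indeed the statement assumes none). The Ramsey extraction of an increasing subsequence of heads is the only place where the AFR hypothesis on $\ll$ is used, and it is precisely what lets the tail/tail case go through for \emph{every} pair $k<l$ rather than for a single good pair.
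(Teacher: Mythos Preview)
Your argument is the standard Nash--Williams minimal bad sequence proof of Higman's theorem, and it is correct as written: the construction of the minimal bad sequence, the non-emptiness of its terms, the Ramsey extraction of a $\ll$-increasing subsequence of heads, and the three-case analysis of the spliced sequence all go through, and the contradiction with minimality is exactly as you describe. You are also right to stress that no reflexivity or transitivity of $\ll$ is invoked anywhere; the proof works for an arbitrary binary relation, which is what the paper's notion of AFR requires.

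There is nothing to compare on the paper's side: the paper does not prove this theorem but simply cites \cite{Higman1952, Kruskal1972, Mellies1998}. You have therefore supplied a complete proof where the paper gives only a reference. If anything, your write-up is more detailed than what one typically finds in the cited sources, particularly in flagging the mixed $S_i$/$T_{i_k}$ case and the fact that the suffix relation (rather than any property of $\ll_\seq$) is what carries that case.
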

\begin{proof} See \cite{Higman1952, Kruskal1972, Mellies1998}.
\end{proof}
\subsection{From rooted to unrooted blueprints}
Melli\`es' Axiomatic Kruskal Theorem allows one to conclude that a relation is an AFR (a~``well binary relation'' in  \cite{Mellies1998}) as long as it satisfies a set of five properties or ``axioms'' (six in the original version of the theorem -- see the remarks of Melli\`es at the end of its proof explaining why five axioms suffice). The details of those axioms will be given in Section \ref{axiok}. 

 Four of those five axioms are relatively easy to check. The remaining axiom is more problematical. This rather technical section is entirely devoted to the proof of Lemma \ref{root_to_multi}, which will ensure that this last axiom is satisfied. We want to prove the following proposition:
\begin{quote}
{\em  Let ${\cal S}$ be a finite subset of $\sign$. Let ${\cal B}_\varepsilon$ be a subset of $\bpr({\cal S})$. 

Let ${\cal B} = \{*_{\ov a}(\beta_1,\dots,\beta_n)|\,\forall i\in[1,\dots,n],\beta_i\in{\cal B}_\varepsilon\}$. 

If $\Subset$ is an AFR on ${\cal B}_\varepsilon$, then $\Subset$ is an AFR on ${\cal B}$.}
\end{quote}
Recall that  $\bpr({\cal S})$ stands for the set of all rooted ${\cal S}$-blueprints. We want to be able to extend the property that $\Subset$ is an AFR on a given set of rooted blueprints to the set all blueprints that have those rooted blueprints at their minimal addresses. 

Higman Theorem suffices to show that $\Subset_\seq$ (Definition \ref{def_Subsetseq}) is an AFR on the set of finite sequences over ${\cal B}_\varepsilon$. 
However, if one considers an infinite sequence $(\beta_i)_{i\in\nat}$ over ${\cal B}$ and transforms each $\beta_i = *_{\ov a_i}(\beta^i_1,\dots,\beta^i_{n_i})$ where $\beta^i_1,\dots, \beta^i_{n_i}\in{\cal B}_\varepsilon$ into $\sigma(\beta_i) = (\beta^i_1,\dots,\beta^i_{n_i})$, the theorem will only provide two integers $i, j$  and strictly monotone function $\eta$ such that $i < j$ and $\beta^i_k\Subset \beta^j_{\eta(k)}$ for each $k\in\{1,\dots, n_i\}$. This is sufficient to ensure that $\beta_i =  *_{\ov a_i}(\beta^i_1,\dots,\beta^i_{n_i})\Subset \ast_{\ov b}(\beta^j_{\eta(1)},\dots,\beta^j_{\eta(n_i)})$, but not in general $\beta_i\Subset\beta_j$.

To bypass this difficulty we show how for each blueprint $\beta\in\bp({\cal S})$,  one can extract from the set of all vertical compressions of $\beta$ a complete set of ``followers'' of $\beta$ of minimal size (Lemma \ref{sub_linear}). This set $\{\alpha_1,\dots,\alpha_p\}$ has the property that for each $\ov \phi\in\linear(\beta)$, there exists at least one $\alpha_i$ such that $\linear(\alpha_i)$ contains a {\em subsequence} of $\ov\phi$ -- but not necessarily $\ov\phi$ itself. The relative depth of each $\alpha_i$ does not depend on the relative depth on $\beta$, but only on ${\cal S}$: it is at most  $\Sigma_{i=1}^{1+|{\cal S}_@|}\,i$, where ${\cal S}_{@}$ is the set of all binary symbols in ${\cal S}$.
The lemma in proven in four steps. 

First, observe that the set of all $\alpha\vpump\beta$ of relative depth at most $\Sigma_{i=1}^{1+|{\cal S}_@|}\,i$ is a complete set of followers. If we consider the set of all $\gamma$ such that $\gamma\sqeq^{\max}_1\alpha$ for at least one such $\alpha$, we obtain a (possibly infinite) set closed under $\equiv$ and finite up to $\equiv$. We call it the set of {\em ${\cal S}$-residuals} of $\beta$. 

Second, we prove  that the set of ${\cal S}$-residuals of $\beta$ is  a complete set of followers of $\beta$ in the same sense, that is, for each  $\ov \phi\in\linear(\beta)$ there exists an ${\cal S}$-residual $\gamma$ of $\beta$ such that $\linear(\gamma)$ contains a subsequence of $\ov\phi$ (Lemma \ref{preserve_residual}).

Third, we prove that if $\beta = *_{\ov a}(\beta_1,\dots,\beta_n)$, $\beta'= *_{\ov b}(\beta'_1,\dots,\beta'_n,\beta'_{n+1},\dots,\beta'_{n+k})$ are such that $\beta_i\Subset\beta'_i$ for each $i\in[1,\dots,n]$, and if furthermore $\beta, \beta'$ have the same set of ${\cal S}$-residuals, then $\beta\Subset\beta'$ (Lemma \ref{residual}).

The last step is the proof of the lemma itself. The set of ${\cal S}$-residuals is finite up to $\equiv$ (Lemma \ref{enum_bounded}), so there are only a finite number of possible values for the set of residuals of each ${\cal S}$-blueprint. As a consequence, it is always possible to extract from an infinite sequence over ${\cal B}$ an infinite sequence of blueprints with the same set of residuals. The conclusion follows from the third step and Higman Theorem.
\begin{definition} For every ${\cal S}\subseteq\sign$, we let ${\cal S}_{@}$ denote the set of all binary symbols in ${\cal S}$.
\end{definition}
\begin{lemma}\label{sub_linear} Let ${\cal S}$ be a finite subset of $\sign$. For all $\beta\in\bp({\cal S})$, for all $\ov\psi\in\linear(\beta)$, there exists $\alpha$ of relative depth at most $\Sigma_{i=1}^{1+|{\cal S}_@|}\,i$ such that $\alpha\vpump\beta$ and such that $\linear(\alpha)$ contains a subsequence of $\ov\psi$. 
\end{lemma}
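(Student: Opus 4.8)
The plan is to reason with the path picture of $\linear$. Every formula, being of null arity, labels a leaf, and the $\ast$--junctions are not nodes of the domain, so the relative depth of a blueprint is exactly the maximal number of $@$--nodes on a root--to--leaf path. An extraction step $\beta\rhd^a_\phi\beta'$ chooses a leaf $a$ reached from the root by descending to the right at every $@$ and arbitrarily at every $\ast$, and deletes that leaf together with the $@$--nodes above it on the path; a sequence $\ov\psi\in\linear(\beta)$ is then (the reverse of) the contraction of the sequence of leaf--labels read off a complete reduction $\beta\rhd^+\dots\rhd^+\ebp$. The single tool I would use is the \emph{collapse}: whenever two nodes satisfy $a<b$ and $\beta(a)=\beta(b)$, one has $\beta[a\leftarrow\compdown{\beta}{b}]\vpump\beta$. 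Crucially this requires only $a<b$ and equality of labels, \emph{not} that the segment from $a$ to $b$ be a right--spine; a collapse only deletes material, and everything below $b$ is preserved verbatim, so any leaf below $b$ remains extractable with the same label.

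First I would fix a witnessing reduction of $\beta$ to $\ebp$ that produces $\ov\psi$ and carry it along, since the real content is that the compressions can be performed so that the compressed blueprint still admits a reduction whose extracted (contracted) sequence is a \emph{subsequence} of $\ov\psi$. I would then argue by induction on $k=|{\cal S}_@|$, proving the bound $\Sigma_{i=1}^{1+k}\,i$ from the recursion $D(k)=(1+k)+D(k-1)$ with $D(0)=1$ (a blueprint with no binary symbol is a $\ast$ of leaves, of relative depth $0\le 1$). The cases $\ebp$ and $\phi$ are immediate. Using $\linear(\ast_{\ov a}(\beta_1,\dots,\beta_t))=\circledast(\linear(\beta_1),\dots,\linear(\beta_t))$, the non--rooted case reduces to the rooted one: I split $\ov\psi$ along a shuffle into $\ov\psi^{(i)}\in\linear(\beta_i)$, apply the induction hypothesis to each rooted $\beta_i$ at the \emph{same} $k$, and reassemble with $\ast_{\ov a}$, which adds no node and hence does not raise the relative depth. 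One then checks that a shuffle of the returned subsequences, taken in the order inherited from the original shuffle, is again a subsequence of $\ov\psi$ and lies in the $\circledast$.

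The heart is the rooted case $\beta=@_\phi(\beta_1,\beta_2)$, where $\linear(\beta)=\cc(\linear(\beta_1),\linear(\beta_2))$. Here I would \emph{eliminate the symbol $\phi$} from everything below a top region of bounded relative depth, so that the induction hypothesis applies below it with the smaller alphabet ${\cal S}_@\setminus\{\phi\}$ and bound $D(k-1)$. The collapse tool lets me push occurrences of $@_\phi$ upward: repeatedly collapsing a $@_\phi$--node into a $@_\phi$--ancestor removes every $@_\phi$ strictly between them, and, choosing these collapses compatibly with the fixed witnessing reduction, each of them only discards material and leaves a sub--extraction whose contracted label sequence stays a subsequence of $\ov\psi$. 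The budget $1+k$ for the top region — rather than $1$ — is the price of keeping a subsequence intact: to avoid destroying the interleaving of $\phi$ with the other symbols recorded by the right--shuffle $\cc$, one keeps, above the recursion, a top region in which $\phi$ may still reappear but only separated by occurrences of the $k$ other labels, so that its relative depth is at most $1+k$. Below that region no $@_\phi$ survives, the induction hypothesis supplies followers of relative depth $\le D(k-1)$, and the two contributions add to $D(k)$.

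The step I expect to be the main obstacle is precisely this simultaneous control of the two invariants in the rooted case: confining $\phi$ to a top region of relative depth $\le 1+k$ \emph{and} exhibiting, after all the deletions caused by the collapses and after the recombination through the (right--)shuffles, an explicit reduction of the final $\alpha$ whose extracted sequence contracts to a subsequence of the given $\ov\psi$. Making this bookkeeping precise — matching each collapse with a transformation of the fixed witnessing reduction, and verifying the $1+k$ accounting that produces the triangular bound — is where the difficulty lies; by comparison the $\ast$--case, the base cases, and the verification that each collapse is a genuine $\vpump$--step are routine.
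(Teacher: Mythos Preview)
Your high-level strategy --- collapse at repeated $@$-labels, reduce the alphabet, control the depth by a triangular recursion --- is the paper's strategy. But there is a real gap exactly where you say ``the difficulty lies,'' and the paper closes it with a device your sketch does not contain.

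The problem is the right-shuffle. For $\beta=@_\phi(\beta_1,\beta_2)$ and $\ov\psi\in\cc(\linear(\beta_1),\linear(\beta_2))$, the subsequence returned from the right child must keep the \emph{last} element of $\ov\psi_2$; otherwise the recombination need not land in $\cc$, and you lose the subsequence of $\ov\psi$. So the inductive hypothesis applied on the right must be stronger than the lemma itself. The paper therefore proves two statements by simultaneous induction on $|\dom(\beta)|$: property~(1), which is essentially your target with the slightly better bound $1+\Sigma_{i=1}^{|{\cal S}_@|}i$, and property~(2), which returns $(\alpha,\ov\phi)$ with $\alpha\vpump\beta$, $\ov\phi$ a subsequence of $\ov\psi$ \emph{with the same last element}, such that along every ``starting-address'' path (the right-spine used by the first extraction step of $\ov\phi$) the $@$-labels are pairwise distinct, and off those paths the depth is already bounded by the constant of~(1). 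This is the precise form of your ``top region'': it is not an amorphous zone of depth $1+k$ but the right-spine to the first extracted leaf, forced to carry distinct labels and hence of length at most $|{\cal S}_@|$.

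The induction then runs as follows. For~(1) at $@_\psi(\beta_1,\beta_2)$ one collapses $\beta$ onto its \emph{deepest} $@_\psi$, so $@_\psi$ vanishes from both children; one applies (1) on the left and (2) on the right, each with the smaller alphabet ${\cal S}\setminus\{@_\psi\}$. For~(2) at $@_\psi(\beta_1,\beta_2)$ one applies (1) on the left and (2) on the right with the \emph{same} ${\cal S}$; if the resulting right-spine repeats some $@_\psi$ one collapses there and recurses --- this time on strictly smaller $|\dom|$, not on smaller $k$. Your single induction on $k$ cannot absorb this last step, and without the second invariant you have no mechanism to preserve the last element through the right child; that is why your ``$1+k$'' budget and the bookkeeping after the collapses remain heuristic.
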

\begin{proof} 
Call {\em ${\cal S}$-linearisation} every pair $(\gamma,\ov\chi)$ such that $\gamma\in\bp({\cal S})$ and $\ov\chi\in\linear(\gamma)$. Call {\em starting address for $(\gamma,\ov\chi)$} every address $b$ for which there exist $\phi,\gamma'$ such that $\gamma\rhd^{b}_\phi\gamma'$ and $\ov\chi\in\cc(\linear(\gamma'),(\phi))$.  Call {\em path to $b$ in $\gamma$} the maximal sequence $(b_1,\dots, b_n, b_{n+1})$ over $\dom(\gamma)$ such that $b_1 <\dots < b_n < b_{n+1} = b$. 

Given an arbitrary ${\cal S}$-linearisation $(\beta,\ov\psi)$, we prove simultaneously by induction on $|\dom(\beta)|$ the following properties:
\begin{enumerate}
\item There exists an ${\cal S}$-linearisation $(\gamma,\ov\chi)$ such that:
\begin{enumerate}
\item $\gamma\vpump\beta$ and $\ov\chi$ is a subsequence of $\ov\psi$,
\item $\gamma$ is of relative depth at most  $1 +\Sigma_{i=1}^{|{\cal S}_@|}\,i $.
\end{enumerate}
\item 
There exists an ${\cal S}$-linearisation $(\alpha,\ov\phi)$ such that:
\begin{enumerate}
\item $\alpha\vpump\beta$, $\ov\phi$ is a subsequence of $\ov\psi$, 

and if $\psi\neq\varepsilon$, then the last elements of $\ov\phi, \ov\psi$ are equal,
\item for each starting address $b$ for $(\alpha,\ov\phi)$ and for $(b_1,\dots,b_n,b_{n+1})$ equal to the path to $b$ in $\alpha$, the values $\alpha(b_1),\dots,\alpha(b_n)$ are pairwise distinct,
\item for all $c$ incomparable with each starting address for $(\alpha,\ov\phi)$, 

$(\compdown{\alpha}{c})$ is of relative depth at most  $1 +\Sigma_{i=1}^{|{\cal S}_@|}\,i $. 
\end{enumerate}
\end{enumerate}
Note that the conjunction of (2.b) and (2.c) implies that every address $d$ in $\alpha$ is of relative depth at most $|S_@| + 1 +\Sigma_{i=1}^{|{\cal S}_@|}\,i = \Sigma_{i=1}^{1+|{\cal S}_@|}\,i$. Indeed, suppose $d$ is of maximal relative depth and not a starting address for  $(\alpha,\ov\phi)$. Then $d$ must be incomparable with each starting address for $(\alpha,\ov\phi)$. Let $e$ be the shortest prefix of $d$ in $\dom(\alpha)$ that is incomparable with each starting address for $(\alpha,\ov\phi)$. The address $e$ is of relative depth at most $|{\cal S}_@|$ in $\alpha$ -- otherwise there would exist in $\dom(\alpha)$ an address $f < e$ of relative depth $|{\cal S}_@|$ and a starting adress for $(\alpha,\ov\phi)$ of the form $f\cdot f'$, of relative depth strictly greater than  $|{\cal S}_\alpha|$, a contradiction. Moreover the relative depth of $d$ is the sum of the relative depth of $e$ in $\alpha$ and the relative depth of $\compdown{\alpha}{e}$.

The cases $\beta=\ebp$ is immediate. If $\beta = *_{\ov a}(\beta_1,\dots,\beta_n)$, $i\neq j$ and  $\beta_i,\beta_j\neq\ebp$, then the conclusion follows easily from the induction hypothesis. Suppose $\beta = @_\psi(\beta_1,\beta_2)$. 

(1)  Let $d$ be an address of maximal length in $\beta^{-1}(@_\psi)$.  Let $\delta = @_\psi(\delta_1,\delta_2) = \compdown{\beta}{d}$. By assumption $\varepsilon$ is the only element of $\delta^{-1}(@_\psi)$. As $\ov\psi\in\linear(\beta)$, there exist $\ov\psi_0\in\linear(\delta)$, $\ov\psi_1\in\linear(\delta_1)$, $\ov\psi_2\in\linear(\delta_2)$ such that $\ov\psi_0$ is a subsequence $\ov\psi$ and $\ov\psi_0\in \cc(\{\ov\psi_1\},\{\ov\psi_2\})$. 
By induction hypothesis there exists an $({\cal S} - \{@_\psi\})$-linearisation $(\gamma_1,\ov\chi_1)$ satisfying 
 conditions (1.a), (1.b) w.r.t $(\delta_1,\ov\psi_1)$, and an $({\cal S} - \{@_\psi\})$-linearisation $(\gamma_2,\ov\chi_2)$ satisfying conditions (2.a), (2.b), (2.c) w.r.t $(\delta_2,\ov\psi_2)$. 
Let $\gamma = @_\psi(\gamma_1,\gamma_2)$. We have $\gamma\vpump\delta$ and $\beta(\varepsilon) = \delta(\varepsilon) = \gamma(\varepsilon)$, hence $\gamma\vpump\beta$. The blueprint $\gamma_1$ is of relative depth at most $1 + \Sigma_{i=1}^{|{\cal S}_@| - 1} i\leq\Sigma_{i=1}^{|{\cal S}_@|}\,i$. The blueprint $\gamma_2$ is of relative depth at most $|{\cal S}_@| + \Sigma_{i=1}^{|{\cal S}_@| - 1} =\Sigma_{i=1}^{|{\cal S}_@|}\,i$. Therefore $\gamma$ is of relative depth at most  $1 + \Sigma_{i=1}^{|{\cal S}_@|}\,i$. Now $\ov\chi_2$ is a subsequence of $\ov\psi_2$ with the same last element, so there exists in $\cc(\{\ov\chi_1\},\{\ov\chi_2\})\subseteq\linear(@_\psi(\gamma_1,\gamma_2))$ a subsequence $\ov\chi$ of $\ov\psi_0$. Thus $(\gamma,\ov\chi)$ satisfies (1.a) and (1.b) w.r.t $(\beta,\ov\psi)$.

(2) As $\ov\psi\in\linear(\beta)$, there exist $\ov\psi_1\in\linear(\beta_1), \ov\psi_2\in\linear(\beta_2)$ such that $\ov\psi\in \cc(\{\ov\psi_1\},\{\ov\psi_2\})$. 
By induction hypothesis there exists an ${\cal S}$-linearisation $(\alpha_1,\ov\phi_1)$  satisfying conditions (1.a), (1.b) w.r.t $(\beta_1,\ov\psi_1)$, and
 an ${\cal S}$-linearisation $(\alpha_2,\ov\phi_2)$ satisfying conditions (2.a), (2.b), (2.c) w.r.t $(\beta_2,\ov\psi_2)$. 

 Let $\alpha_0 = @_\psi(\alpha_1,\alpha_2)$. We have $\alpha_0\vpump\beta$. The last elements of $\ov\phi_2$, $\ov\psi_2$ are equal and $\cc(\{\ov\phi_1\},\{\ov\phi_2\})\subseteq\linear(\alpha_0)$. Hence there exists in $\linear(\alpha_0)$ a subsequence $\ov\phi_0$ of $\ov\psi$ with the same last element as $\ov\psi$. Thus $(\alpha_0,\ov\phi_0)$ satisfies (2.a). 

For all $c$ incomparable with each starting address for $(\alpha_0,\ov\phi_0)$, either $c = (1)\cdot c'$ and $c'\in\dom(\alpha_1)$, or $c = (2)\cdot c''$ and $c''\in\dom(\alpha_2)$ is incomparable with each starting address in $\alpha_2$. As a consequence, the choice of $\alpha_1,\alpha_2$ ensures that $(\alpha_0,\ov\phi_0)$ satisfies (2.c).

If $(\alpha_0,\ov\phi_0)$ satisfies (2.b), then we may take $(\alpha, \ov\phi) = (\alpha_0,\ov\phi_0)$. Otherwise some starting address $b$ for $(\alpha_0,\ov\phi_0)$ does not satisfy condition (2.b). Let $(b_1,\dots,b_n,b_{n+1})$ be the path to $b$ in $\alpha$. We have $b_1 = \varepsilon$, and for each $i > 0$, there exists $d_i$ such that $b_i = (2)\cdot d_i$. The sequence $(d_2,\dots,d_{n+1})$ is then a path to $d = d_{n+1}$ in $\alpha_2$, and $d$ is a starting address for $(\alpha_2,\ov\phi_2)$.
 The values $\alpha_2(d_2),\dots,\alpha_2(d_n)$ are pairwise distinct, so there must exist $i > 1$ such that $\alpha(b_i) = @_\psi$. Since $b_i$ is in the path to $b$, there exists in $\linear(\compdown{\alpha_2}{d_i})$ a subsequence $\ov\phi'_0$  of $\ov\phi_0$ with the same last element as $\ov\phi_0$. For $\alpha'_0 = \alpha_0[\varepsilon\leftarrow \compdown{\alpha_2}{d_i}]$, we have $\alpha'_0\vpump\beta$, $\ov\phi'_0\in\linear(\alpha'_0)$ and the last elements of ${\ov\phi}'_0, \ov\phi_0,\ov\psi$ are equal. By induction hypothesis there exists an ${\cal S}$-linearisation  $(\alpha,\ov\phi)$  satisfying (2.a), (2.b), (2.c) w.r.t $(\alpha'_0,\ov\phi'_0)$. The pair $(\alpha,\ov\phi)$ satisfies also  those conditions w.r.t $(\beta,\ov\psi)$.
\end{proof}
\begin{definition}\label{def_residual} Let ${\cal S}$ be a finite subset of $\sign$. For all $\beta\in\bp({\cal S})$, for all $\alpha\vpump\beta$ of relative depth at most $\Sigma_{i=1}^{1+|{\cal S}_@|}\,i$, we call {\em ${\cal S}$-residual} of $\beta$ every $\alpha_0$ such that $\alpha_0\sqeq^{\max}_1\alpha$.
\end{definition}
Note that the set of ${\cal S}$-residuals of $\beta$ is $\{\ebp\}$ if $\beta=\ebp$. Otherwise, it is an infinite set: even if $\beta = \phi$, the set of residuals of $\beta$ is the $\equiv$-equivalence class of $\beta$ and contains all blueprints of the form $*_{a} (\phi)$ (recall that $\equiv$ is a subset of $\sqsubseteq_1$, see Definition \ref{ordcomp}). 
\begin{lemma}\label{preserve_residual}  Let ${\cal S}$ be a finite subset of $\sign$. For all $\beta\in\bp({\cal S})$ and for all $\ov\psi\in\linear(\beta)$, there exists an ${\cal S}$-residual $\alpha_0$ of $\beta$ such that $\linear(\alpha_0)$ contains a subsequence of $\ov\psi$.
\end{lemma}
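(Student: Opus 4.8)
The plan is to combine Lemma~\ref{sub_linear} with a single compression argument. By Lemma~\ref{sub_linear} there is a blueprint $\alpha$ of relative depth at most $\Sigma_{i=1}^{1+|{\cal S}_@|}\,i$ with $\alpha\vpump\beta$ and such that $\linear(\alpha)$ contains a subsequence $\ov\chi$ of $\ov\psi$. By Definition~\ref{def_residual}, \emph{any} $\alpha_0$ with $\alpha_0\sqeq^{\max}_1\alpha$ is then an ${\cal S}$-residual of $\beta$, so it suffices to produce one such $\alpha_0$ whose $\linear$ contains a subsequence of $\ov\chi$. Everything thus reduces to the following claim, stated for arbitrary blueprints: \emph{for every $\alpha$ and every $\ov\chi\in\linear(\alpha)$ there exist $\alpha_0\sqeq^{\max}_1\alpha$ and a subsequence $\ov\chi_0$ of $\ov\chi$ with $\ov\chi_0\in\linear(\alpha_0)$.}

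To prove the claim I would introduce the auxiliary relation $\preceq$ defined by: $\alpha'\preceq\alpha$ if and only if for every $\ov\chi\in\linear(\alpha)$ there is a subsequence $\ov\chi'$ of $\ov\chi$ with $\ov\chi'\in\linear(\alpha')$ and, when $\ov\chi\neq\varepsilon$, with the same last element as $\ov\chi$. The requirement on the last element is the crucial ingredient: it is exactly what keeps the right-shuffle condition ``$F^p_2\neq\varepsilon$'' alive across an application node. The relation $\preceq$ is plainly reflexive and transitive, and $\alpha\equiv\alpha'$ implies $\alpha'\preceq\alpha$ since then $\linear(\alpha)=\linear(\alpha')$. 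The heart of the argument is to show $\curvearrowleft_1\,\subseteq\,\preceq$, whence $\sqeq_1\subseteq\preceq$; the claim then follows by taking for $\alpha_0$ any maximal $1$-compression of $\alpha$ (one exists since $\curvearrowleft_1$ strictly decreases $|\dom(\cdot)|$ and is therefore well-founded), so that $\alpha_0\sqeq^{\max}_1\alpha$ and, by $\sqeq_1\subseteq\preceq$, also $\alpha_0\preceq\alpha$.

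I would establish $\curvearrowleft_1\subseteq\preceq$ by induction on the derivation of $\alpha'\curvearrowleft_1\alpha$ (Definition~\ref{mcomp}). In the base case $\alpha=\ast_{\ov a\cdot(b)}(\gamma_1,\gamma_2)$ with $\gamma_1\equiv\gamma_2$ and $\alpha'=\ast_{\ov a}(\gamma_1)$, so that $\linear(\alpha)=\circledast(\Phi,\Phi)$ and $\linear(\alpha')=\Phi$ with $\Phi=\linear(\gamma_1)$; writing $\ov\chi=(\chi_1,\dots,\chi_p)$ I reuse the combinatorial step from the proof of Lemma~\ref{preserve_left} to obtain index sets $J_1\cup J_2=\{1,\dots,p\}$ with $f(J_i)\in\Phi$, and, labelling so that $p\in J_1$, I take $\ov\chi'=f(J_1)$: a subsequence of $\ov\chi$ ending in $\chi_p$ and lying in $\Phi=\linear(\alpha')$. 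The inductive cases propagate a merge through a context, and here I would use the observation that if $\ov\chi$ is a contraction of a (right-)shuffle $S$ of two sequences, then the map sending each position of $S$ to the position representing it in $\ov\chi$ is order-preserving and value-preserving; restricting $S$ to the positions of a chosen subsequence of one factor and of all of the other factor yields, after this map, a subsequence of $\ov\chi$ that is itself a contraction of a (right-)shuffle of the two restricted factors. For the $\ast$-context (rule~2c) this gives $\ov\chi'\in\circledast(\linear(\alpha_1'),\linear(\gamma))$, and the last element of $\ov\chi$ is preserved since it comes either from the untouched factor or from one whose last element the induction hypothesis preserves. For an application node merged on the left (rule~2a) the last $H$-chunk is untouched, so the result is still a right-shuffle; for an application node merged on the right (rule~2b) the induction hypothesis keeps the last element of the right factor $H$, hence its last chunk stays non-empty and the result is again a right-shuffle. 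In every case the last element of $\ov\chi$ is preserved, closing the induction.

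The main obstacle, and the reason the last-element clause is built into $\preceq$, is the application/right-shuffle case: taking an arbitrary subsequence of the right factor could delete its final letter and destroy the ``$F^p_2\neq\varepsilon$'' condition, so that the contracted sub-shuffle would no longer be a right-shuffle and might fall outside $\circledcirc(\cdot,\cdot)$. Tracking the last element throughout, exactly as condition~(2.a) of Lemma~\ref{sub_linear} already does, is what resolves this. Once the claim is proved the lemma follows immediately: applying it to $(\alpha,\ov\chi)$ yields an $\alpha_0\sqeq^{\max}_1\alpha$, which is an ${\cal S}$-residual of $\beta$ by Definition~\ref{def_residual}, together with a subsequence of $\ov\chi$ in $\linear(\alpha_0)$, hence a subsequence of $\ov\psi$.
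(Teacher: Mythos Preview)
Your proposal is correct and follows essentially the same approach as the paper's own proof: apply Lemma~\ref{sub_linear} to obtain $\alpha\vpump\beta$ of bounded relative depth with a subsequence of $\ov\psi$ in $\linear(\alpha)$, then show that a single $\curvearrowleft_1$-step (hence any $\sqeq_1$-chain down to a maximal $1$-compression $\alpha_0$) preserves the existence of such a subsequence \emph{with the same last element}, the last-element clause being needed precisely to handle the right-shuffle at an $@_\phi$-node. The paper packages the inductive step as a direct claim about $\gamma\curvearrowleft_1\delta$ rather than naming an auxiliary relation $\preceq$, and gives less detail on the shuffle manipulations, but the content is the same.
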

\begin{proof} (1) Let $\gamma,\delta$ be arbitrary blueprints. Suppose $\gamma\curvearrowleft_1\delta$. We prove by induction on $\delta$ that for all $\ov\phi\in\linear(\delta)$, there exists in $\linear(\gamma)$ a subsequence of $\ov\phi$.
In order to deal with the case $\delta = @_\phi(\delta_1,\delta_2)$, we need to prove a slightly more precise property: for all $\ov\phi\in\linear(\delta)$, there exists in $\linear(\gamma)$ a subsequence $\ov\psi$ of $\ov\phi$ such that the last elements of $\ov\phi$, $\ov\psi$ are equal. The base case is $\delta = *_{(a_1,a_2)}(\gamma_1,\gamma_2)$, $\gamma_1\equiv\gamma_2$ and $\gamma = *_{a_1}(\gamma_1)$, and this case is clear. Other cases follow easily from the induction hypothesis.

(2) We prove the lemma. By Lemma \ref{sub_linear} and by definition of an ${\cal S}$-residual, there exist $\alpha_0, \alpha$ such that $\alpha_0\sqeq_1\alpha\vpump\beta$, $\linear(\alpha)$ contains a subsequence of $\ov\psi$ and $\alpha_0$ is an ${\cal S}$-residual. It follows from (1) that $\linear(\alpha_0)$ contains a subsequence of $\ov\psi$.
\end{proof}
\begin{lemma}\label{residual} Let ${\cal S}$ be a finite subset of $\sign$. Suppose:
\begin{itemize}
\item $\beta =  *_{\ov a}(\beta_1,\dots,\beta_n)\in\bp({\cal S})$, 
\item $\beta' = *_{\ov b}(\beta'_1,\dots,\beta'_n,\beta'_{n+1},\dots,\beta'_{n+k})\in\bp({\cal S})$,
\item  $\beta_i\Subset\beta'_i$ for each $i\in\{1,\dots,n\}$, 
\item  the sets of ${\cal S}$-residuals of $\beta$ and $\beta'$ are equal.
\end{itemize}
Then $\beta\Subset\beta'$.
\end{lemma}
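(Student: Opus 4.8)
The plan is to unfold the definition of $\Subset$ and reduce everything to a statement about shuffles and contractions of extractible sequences. By Definition \ref{def_shadow_order}, proving $\beta\Subset\beta'$ amounts to showing that for every $\ov\chi\in\linear(\beta)$ there is a $\gamma\vpump\beta'$ with $\ov\chi\in\linear(\gamma)$. Since $\beta=*_{\ov a}(\beta_1,\dots,\beta_n)$, we have $\linear(\beta)=\circledast(\linear(\beta_1),\dots,\linear(\beta_n))$, so a given $\ov\chi$ is a contraction of a shuffle of sequences $\ov\chi_1,\dots,\ov\chi_n$ with $\ov\chi_i\in\linear(\beta_i)$. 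Using the hypothesis $\beta_i\Subset\beta'_i$ I would first extract, for each $i\le n$, a blueprint $\gamma_i\vpump\beta'_i$ with $\ov\chi_i\in\linear(\gamma_i)$.

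Next I would observe that because the components of a $*$-blueprint sit at pairwise incomparable addresses, every vertical compression of $\beta'$ acts componentwise: any $\gamma\vpump\beta'$ has the form $\gamma=*_{\ov b}(\gamma_1,\dots,\gamma_n,\gamma_{n+1},\dots,\gamma_{n+k})$ with $\gamma_i\vpump\beta'_i$. Thus the naive candidate $\gamma=*_{\ov b}(\gamma_1,\dots,\gamma_n,\beta'_{n+1},\dots,\beta'_{n+k})$ already reduces to $\beta'$, but $\linear(\gamma)=\circledast(\linear(\gamma_1),\dots,\linear(\gamma_n),\linear(\beta'_{n+1}),\dots,\linear(\beta'_{n+k}))$ carries forced, nonempty contributions from the extra components $\beta'_{n+1},\dots,\beta'_{n+k}$, whereas $\ov\chi$ uses none of them. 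The elementary remark that saves the day is that \emph{contraction can delete such contributions}: if a sequence $\ov\psi$ is a subsequence of $\ov\chi$, then a shuffle of $\ov\chi$ with $\ov\psi$ can be contracted back to $\ov\chi$ (interleave each letter of $\ov\psi$ next to its matching letter of $\ov\chi$, then contract the adjacent repetitions). Hence it suffices to compress each extra component $\beta'_{n+j}$ to some $\gamma_{n+j}\vpump\beta'_{n+j}$ whose extractible set $\linear(\gamma_{n+j})$ contains a subsequence of the target $\ov\chi$: these contributions are then absorbed and $\ov\chi\in\linear(\gamma)$ for $\gamma=*_{\ov b}(\gamma_1,\dots,\gamma_{n+k})$.

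The remaining and hardest step is to produce, for each extra component, a compression whose extractible content is a subsequence of $\ov\chi$, and this is exactly where the equality of the sets of ${\cal S}$-residuals of $\beta$ and $\beta'$ is indispensable. I would argue that every residual of $\beta'$ arising from the extra components must, by this equality, already occur as a residual of $\beta$, hence its $\equiv$-class is realised by a follower of the \emph{matched} part $\beta_1,\dots,\beta_n$. Tracing this back through $\beta_i\Subset\beta'_i$, Lemma \ref{sub_linear} and Lemma \ref{preserve_residual} (which shows the residuals of $\beta$ cover $\ov\chi$ up to subsequences), the minimal follower of each $\beta'_{n+j}$ is $\equiv$-equivalent to one coming from some $\beta_i$, so its extractible content coincides, up to subsequence, with content already produced inside $\ov\chi_i$ and is therefore a subsequence of $\ov\chi$. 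Throughout I would use that $\equiv$ preserves $\linear$ and, via Lemma \ref{commute_left}, commutes suitably with $\vpump$, so that components may be freely reordered and residuals compared up to $\equiv$.

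The main obstacle is precisely this last piece of bookkeeping: converting the global identity of residual sets into a componentwise statement that each extra $\beta'_{n+j}$ can be compressed to match, up to $\equiv$, a residual already present on the matched side, and thus to yield a subsequence of the prescribed $\ov\chi$. Once this matching is in place, the contraction-absorption remark closes the argument at once. I expect the delicate point to be the verification that residuals separate exactly the $\equiv$-classes of width-$1$ followers, so that an extra class genuinely cannot be hidden; this relies on the width-$1$ normalisation built into the very definition of an ${\cal S}$-residual.
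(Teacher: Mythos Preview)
Your first two paragraphs are correct and match the paper exactly: decompose $\ov\chi\in\linear(\beta)$ as a contracted shuffle of $\ov\chi_i\in\linear(\beta_i)$, use $\beta_i\Subset\beta'_i$ to get $\gamma_i\vpump\beta'_i$ with $\ov\chi_i\in\linear(\gamma_i)$, observe that vertical compressions of a $*$-blueprint act componentwise, and note that any extra component whose extractible set contains a \emph{subsequence} of $\ov\chi$ can be absorbed by contraction. This is precisely the skeleton of the paper's argument.

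The gap is in your third paragraph. You try to extract from the equality of residual sets a \emph{componentwise} statement (``the minimal follower of each $\beta'_{n+j}$ is $\equiv$-equivalent to one coming from some $\beta_i$''), but the hypothesis says nothing about residuals of individual components: residuals are defined from vertical compressions of the \emph{whole} blueprint, and there is no reason a follower of $\beta'_{n+j}$ alone should coincide with a follower of some $\beta_i$. The bookkeeping you anticipate as ``delicate'' is in fact not available in that form.

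The paper avoids this entirely by using the residual hypothesis \emph{globally, once}. Apply Lemma~\ref{preserve_residual} to $\beta$ and $\ov\chi$ to obtain a single ${\cal S}$-residual $\alpha_0$ of $\beta$ with a subsequence $\ov\phi$ of $\ov\chi$ in $\linear(\alpha_0)$. By the hypothesis, $\alpha_0$ is also an ${\cal S}$-residual of $\beta'$, which by Definition~\ref{def_residual} means $\alpha_0\sqeq_1\,*_{\ov b}(\alpha'_1,\dots,\alpha'_{n+k})\vpump\beta'$ for some $\alpha'_i\vpump\beta'_i$. Lemma~\ref{preserve_right} then gives $\ov\phi\in\linear(*_{\ov b}(\alpha'_1,\dots,\alpha'_{n+k}))$, so each $\linear(\alpha'_{n+j})$ already contains a subsequence of $\ov\phi$, hence of $\ov\chi$. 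Now take $\gamma=*_{\ov b}(\gamma_1,\dots,\gamma_n,\alpha'_{n+1},\dots,\alpha'_{n+k})$: your absorption remark finishes the proof. The point you were missing is that the needed compressions of the extra components are read off \emph{directly} from the witness that $\alpha_0$ is a residual of $\beta'$; no component-to-component matching is required.
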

\begin{proof} Let $\ov\psi\in\linear(\beta)$. There exists for each $i\in[1,\dots,n]$ a sequence  $\ov\psi_i\in\linear(\beta_i)$ such that $\ov\psi\in \circledast(\{\ov\psi_1\},\dots, \{\ov\psi_n\})$. By assumption  there exists for each $i\in[1,\dots, n]$ an $\alpha_i\vpump\beta'_i$ such that $\ov\psi_i\in\linear(\alpha_i)$. As a consequence $\ov\psi\in\linear(*(\alpha_1,\dots,\alpha_{n}))$.  

By Lemma \ref{preserve_residual} there exists an ${\cal S}$-residual $\alpha_0$ of $\beta$ such that $\linear(\alpha_0)$ contains a subsequence $\ov\phi$ of $\ov\psi$. By assumption $\alpha_0$ is also an ${\cal S}$-residual of $\beta'$, hence there exist $\alpha'_{1},\dots, \alpha'_{n+k}$, $\ov b$ such that $\alpha_0\sqeq_1 *_{\ov b}(\alpha'_1,\dots,\alpha'_{n+k})\vpump\beta'$.  By Lemma \ref{preserve_right}, we have 
 $\ov\phi\in\linear(\ast_{\ov b}(\alpha'_1,\dots,\alpha'_{n+k}))$. Hence for each $i\in[1,\dots, n+k]$, there exists in $\linear(\alpha'_i)$ a subsequence of $\ov\phi$, which is also a subsequence of $\ov\psi$. 
Now, let  $\alpha = *_{\ov b}(\alpha_1,\dots,\alpha_{n}, \alpha'_{n+1},\dots,\alpha'_{n+k})$.
Then $\alpha\vpump\beta'$, $\ov\psi\in\linear(*(\alpha_1,\dots\alpha_n))$,  and for each $j\in[1,\dots,k]$ there exists in $\linear(\alpha'_{n+j})$ a subsequence of $\ov\psi$. As a consequence $\ov\psi\in\linear(\alpha)$. 
\end{proof}
\begin{lemma}\label{root_to_multi}  Let ${\cal S}$ be a finite subset of $\sign$. Let ${\cal B}_\varepsilon$ be a subset of $\bpr({\cal S})$. Let ${\cal B} = \{*_{\ov a}(\beta_1,\dots,\beta_n)|\,\forall i\in[1,\dots,n],\beta_i\in{\cal B}_\varepsilon\}$. If $\Subset$ is an AFR on ${\cal B}_\varepsilon$, then $\Subset$ is an AFR on ${\cal B}$.
\end{lemma}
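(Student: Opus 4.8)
The plan is to reduce the claim to Higman Theorem (Theorem \ref{step_B}) together with Lemma \ref{residual}, the obstacle being exactly the one flagged in the discussion preceding Lemma \ref{sub_linear}: if we write each element of ${\cal B}$ as $\ast_{\ov a}(\beta_1,\dots,\beta_n)$ with $\beta_1,\dots,\beta_n\in{\cal B}_\varepsilon$ and feed the component sequences to Higman, we only recover $\Subset$ up to an \emph{insertion} and a \emph{reordering} of components, which does not by itself give $\Subset$ between the original blueprints. The residual machinery exists precisely to bridge this gap: Lemma \ref{residual} upgrades a componentwise $\Subset$ (with extra components on the right) to a genuine $\Subset$, \emph{provided} the two blueprints share the same set of ${\cal S}$-residuals. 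Concretely I would start from an arbitrary infinite sequence $(\gamma_i)_{i\in\nat}$ over ${\cal B}$, fix for each $i$ a writing $\gamma_i=\ast_{\ov a_i}(\beta^i_1,\dots,\beta^i_{n_i})$ with $\beta^i_t\in{\cal B}_\varepsilon$, and record the finite sequence $\sigma(\gamma_i)=(\beta^i_1,\dots,\beta^i_{n_i})\in\seq({\cal B}_\varepsilon)$.

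Next I would exploit the finiteness of the residual data. By Definition \ref{def_residual} every ${\cal S}$-residual has relative depth at most $\Sigma_{k=1}^{1+|{\cal S}_@|}k$ and width at most $1$, so it lies in $\bp({\cal S},\Sigma_{k=1}^{1+|{\cal S}_@|}k,1)$, a set that is finite up to $\equiv$ by Lemma \ref{enum_bounded}.(1). Hence there are only finitely many possible values, up to $\equiv$, for the set of ${\cal S}$-residuals of an ${\cal S}$-blueprint. By the infinite pigeonhole principle I would then extract an infinite subsequence $(\gamma_i)_{i\in I}$ on which this residual set is constant up to $\equiv$.

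I would then apply Higman Theorem. Since $\Subset$ is an AFR on ${\cal B}_\varepsilon$, Theorem \ref{step_B} makes $\Subset_\seq$ an AFR on $\seq({\cal B}_\varepsilon)$, so among the $\sigma(\gamma_i)$ with $i\in I$ there are $i<j$ in $I$ and a strictly monotone $\eta$ with $\beta^i_t\Subset\beta^j_{\eta(t)}$ for every $t\in\{1,\dots,n_i\}$. Reindexing the (address, subblueprint) pairs of a $\ast$-blueprint leaves the blueprint unchanged, so I may rewrite $\gamma_j$ with its matched components $\beta^j_{\eta(1)},\dots,\beta^j_{\eta(n_i)}$ listed first (in this order, $\eta$ being monotone) and the unmatched ones afterwards. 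Taking $\beta=\gamma_i$ and $\beta'=\gamma_j$ in these writings puts the pair in exactly the shape required by Lemma \ref{residual}, with $\beta^i_t\Subset\beta^j_{\eta(t)}$ for each $t$ and with equal sets of ${\cal S}$-residuals (up to $\equiv$) by the choice of $I$. Lemma \ref{residual} then delivers $\gamma_i\Subset\gamma_j$, the required comparable pair, so $\Subset$ is an AFR on ${\cal B}$.

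The main obstacle is the second step: establishing that only finitely many residual sets can occur. The blueprints in ${\cal B}$ have no a priori bound on relative depth or width, and it is Lemma \ref{enum_bounded} (via the uniform depth/width bound on residuals coming from Lemma \ref{sub_linear}) that compresses this unbounded structure into a finite invariant, thereby making the pigeonhole extraction, and hence the appeal to Higman Theorem, possible. A minor point requiring care is that residual sets are genuinely infinite and only finite up to $\equiv$, so the hypothesis ``equal set of ${\cal S}$-residuals'' in Lemma \ref{residual} must be read up to $\equiv$; since the proof of that lemma uses only a representative of each residual, this reading is harmless.
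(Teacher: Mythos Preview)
Your proposal is correct and follows essentially the same route as the paper: pigeonhole on the (finitely many, by Lemma~\ref{enum_bounded}) possible residual sets, then Higman on the component sequences, then Lemma~\ref{residual} after reindexing. The only difference worth noting is your handling of the ``up to $\equiv$'' caveat: rather than arguing that Lemma~\ref{residual} tolerates equality of residual sets modulo $\equiv$, the paper observes that residual sets are themselves $\equiv$-closed (since $\equiv\subseteq\sqeq_1$ by Definition~\ref{ordcomp}), so two residual sets that agree up to $\equiv$ are literally equal, and Lemma~\ref{residual} applies verbatim.
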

\begin{proof} Let ${\cal R} = \bp({\cal S},\Sigma_{i=1}^{1+|{\cal S}_@|}\,i, 1)$ (see Definition \ref{enum_def}). For each $\beta\in{\cal B}$, let $\rho(\beta)$ be the set of ${\cal S}$-residuals of $\beta$. We have  $\rho(\beta)\subseteq {\cal R}$.  Moreover $\rho(\beta)$ is closed under $\equiv$ (as $\equiv$ is a subset of $\sqeq_1$, see Definition \ref{ordcomp}), that is, $\rho(\beta)$ is a union of the elements of a subset of  ${\cal R}/_\equiv$. By Lemma \ref{enum_bounded}.(1) the latter is a finite set, therefore $\{\rho(\beta)\,|\,\beta\in{\cal B\}}$ is a finite set.

For each $\beta = *_{\ov a}(\beta_1,\dots,\beta_n)\in{\cal B}$ where $\ov a$ is increasing w.r.t the lexicographic ordering of addresses and $\beta_1,\dots, \beta_n\in{\cal B}_\varepsilon$, let $\sigma(\beta) = (\beta_1,\dots,\beta_n)$ -- recall that we can take $\ov a = \varepsilon$, $n = 0$ if $\beta = \ebp$, and $\ov a = (\varepsilon)$, $n = 1$ if $\beta$ is a rooted blueprint.  Since  $\{\rho(\beta)\,|\,\beta\in{\cal B\}}$ is a finite set, every infinite sequence over ${\cal B}$ contains an infinite subsequence of blueprints with the same set of ${\cal S}$-residuals.
By assumption $\Subset$ is an AFR on ${\cal B}_\varepsilon$.
By Theorem \ref{step_B}, $\Subset_\seq$ is an AFR on $\{\sigma(\beta)\,|\,\beta\in{\cal B}\}$. 

Thus for every infinite sequence $(\beta_i)_{i\in\nat}$ over ${\cal B}$ there exist $i, j$ such that $i < j$, $\sigma(\beta_i)\Subset_{\seq}\sigma(\beta_{j})$ and $\beta_i$, $\beta_{j}$ have the same set of residuals. For $\sigma(\beta_i) = (\beta^i_1,\dots, \beta^i_n)$ and  $\sigma(\beta_j)= (\beta^j_1,\dots, \beta^j_{n+k})$, there exists a subsequence $(\beta^i_{l_1},\dots, \beta^i_{l_n})$ of $\sigma(\beta_j)$ such that $\beta^i_1\Subset\beta^i_{l_1},\dots,\beta^i_n\Subset\beta^i_{l_n}$. There exist also $l_{n+1},\dots, l_{n+k}$ and two sequences $\ov a$ and $\ov b$ such that  $\beta_i = *_{\ov a}(\beta^i_1,\dots, \beta^i_n)$ and $\beta_j =  *_{\ov b}(\beta^j_{l_1},\dots,\beta^j_{l_n},\beta^j_{l_{n+1}},\dots, \beta^j_{l_{n+k}})$. By Lemma \ref{residual} we have $\beta_i\Subset\beta_j$.
\end{proof}
\subsection{Axiomatic Kruskal Theorem and main key lemma}\label{axiok}
The following definition is borrowed from  \cite{Mellies1998}:

\begin{definition}\label{def_ads} An {\em abstract decomposition system} is an $8$-tuple 
$$({\cal T}, {\cal L}, {\cal V}, \preceq_{\cal T}, \preceq_{\cal L},\preceq_{\cal V}, \stackrel{\cdot}{\longrightarrow}, \vdash)$$ where:
\begin{itemize}
\item ${\cal T}$ is a set of {\em terms} noted $t, u,\dots$ equipped with a binary relation $\preceq_{\cal T}$,
\item ${\cal L}$ is a set of {\em labels} noted $f, g,\dots$ equipped with a binary relation $\preceq_{\cal L}$,
\item ${\cal V}$ is a set of {\em vectors} noted $T, U,\dots$ equipped with a binary relation $\preceq_{\cal V}$,
\item $\stackrel{\cdot}{\longrightarrow}$ is a relation on ${\cal T}\times{\cal L}\times{\cal V}$, {\em e.g.} $t\stackrel{f}{\longrightarrow}T$
\item $\vdash$ is a relation on ${\cal V}\times{\cal T}$, {\em e.g.} $T\vdash t$.
\end{itemize}
For each such system, we let $\rhd_{\cal T}$ be the binary relation on ${\cal T}$ defined by
$$t\rhd_{\cal T} u\Longleftrightarrow \exists(f,T)\in{\cal L}\times{\cal V},\ \ t\stackrel{f}{\longrightarrow}T\vdash u$$
An {\em elementary term} $t$ is a term minimal w.r.t $\rhd_{\cal T}$, that is, a term for which there exists no $u$ such that $t\rhd_{\cal T} u$.
\end{definition}
\begin{theorem}(Melli\`es)\label{axiom_kruskal} Suppose $({\cal T}, {\cal L}, {\cal V}, \preceq_{\cal T}, \preceq_{\cal L},\preceq_{\cal V}, \stackrel{\cdot}{\longrightarrow}, \vdash)$ satisfies the following properties:
\begin{itemize}
\item (Axiom I) There is no infinite chain $t_1\rhd_{\cal T} t_2\rhd_{\cal T}\dots$
\item (Axiom II) The relation $\preceq_{\cal T}$ is an AFR on the set of elementary terms.
\item (Axiom III) For all $t, u, u'$,  

if $t\preceq_{\cal T} u'$ and  $u\rhd_{\cal T} u'$, then $t\preceq_{\cal T} u$.
\item (Axiom IV-bis) For all $t, u, f, g, T, U$, 

if  $t\stackrel{f}{\longrightarrow}T$  and $u\stackrel{g}{\longrightarrow}U$ and $f\preceq_{\cal L} g$ and $T\preceq_{\cal V} U$, then $t\preceq_{\cal T} u$.
\item (Axiom V) For all ${\cal W}\subseteq {\cal V}$, for ${\cal W}_\vdash = \{t\in{\cal T}\,|\,\exists T\in{\cal W}, T\vdash t\}$,  

if $\preceq_{\cal T}$ is an AFR on ${\cal W}_\vdash$, then $\preceq_{\cal V}$ is an AFR on ${\cal W}$.
\end{itemize}
If furthermore $\preceq_{\cal L}$ is an AFR on ${\cal L}$, then $\preceq_{\cal T}$ is an AFR on ${\cal T}$.
\end{theorem}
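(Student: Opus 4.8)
The plan is to prove the theorem by the classical Nash--Williams \emph{minimal bad sequence} argument, the same technique underlying Kruskal's and Higman's theorems, here driven by the five axioms. Suppose, for contradiction, that $\preceq_{\cal T}$ is not an AFR on ${\cal T}$, so that ${\cal T}$ admits a \emph{bad} sequence (one containing no pair $t_i\preceq_{\cal T} t_j$ with $i<j$). Using Axiom~I --- the well-foundedness of $\rhd_{\cal T}$ --- together with dependent choice, I would build a \emph{minimal} bad sequence $(t_i)_{i\in\nat}$: having fixed $t_0,\dots,t_{n-1}$, choose $t_n$ to be $\rhd_{\cal T}$-minimal among all terms extending $t_0,\dots,t_{n-1}$ to a bad sequence. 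Well-foundedness guarantees that such a minimal choice exists at each stage.

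Next I would separate elementary from non-elementary terms. By Axiom~II the relation $\preceq_{\cal T}$ is an AFR on elementary terms, so the subsequence of elementary $t_i$ cannot be infinite (an infinite one would contain a good pair, contradicting badness); hence all but finitely many $t_i$ are non-elementary. Let $N=\{i:t_i\text{ is non-elementary}\}$, an infinite set, and for each $i\in N$ fix a decomposition $t_i\stackrel{f_i}{\longrightarrow}T_i$ with $f_i\in{\cal L}$ and $T_i\in{\cal V}$. Put ${\cal W}=\{T_i:i\in N\}$ and ${\cal W}_\vdash=\{u:\exists T\in{\cal W},\,T\vdash u\}$; every $u\in{\cal W}_\vdash$ satisfies $t_i\rhd_{\cal T}u$ for some $i\in N$.

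The crux is to show that $\preceq_{\cal T}$ is an AFR on the component set ${\cal W}_\vdash$. Suppose $(u_k)_k$ were bad in ${\cal W}_\vdash$, each $u_k$ a component of some $t_{i_k}$ (with $i_k\in N$); let $p=\min_k i_k$ and pick $k_0$ with $i_{k_0}=p$. I would splice to form $v=(t_0,\dots,t_{p-1},u_{k_0},u_{k_0+1},\dots)$ and argue that $v$ is again bad: a good pair inside $t_0,\dots,t_{p-1}$ or inside the $u$-tail is excluded by badness of $(t_i)$ and of $(u_k)$ respectively, while a cross pair $t_a\preceq_{\cal T}u_k$ with $a<p$ would give, via Axiom~III applied to $t_a\preceq_{\cal T}u_k$ and $t_{i_k}\rhd_{\cal T}u_k$, the inequality $t_a\preceq_{\cal T}t_{i_k}$ with $i_k\ge p>a$ --- a good pair in $(t_i)$, impossible. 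Since $t_p\rhd_{\cal T}u_{k_0}$, the bad sequence $v$ contradicts the minimal choice of $t_p$. Hence no such $(u_k)$ exists and $\preceq_{\cal T}$ is an AFR on ${\cal W}_\vdash$. Axiom~V then upgrades this to the statement that $\preceq_{\cal V}$ is an AFR on ${\cal W}$.

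Finally I would combine the label and vector data. By hypothesis $\preceq_{\cal L}$ is an AFR on ${\cal L}$, and we have just obtained that $\preceq_{\cal V}$ is an AFR on ${\cal W}$, so by Proposition~\ref{inter_product_hier}.(2) the product relation is an AFR on ${\cal L}\times{\cal W}$. Applying it to $(f_i,T_i)_{i\in N}$ yields indices $i<j$ in $N$ with $f_i\preceq_{\cal L}f_j$ and $T_i\preceq_{\cal V}T_j$; Axiom~IV-bis, fed with $t_i\stackrel{f_i}{\longrightarrow}T_i$ and $t_j\stackrel{f_j}{\longrightarrow}T_j$, then forces $t_i\preceq_{\cal T}t_j$, contradicting the badness of $(t_i)$. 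This contradiction proves that $\preceq_{\cal T}$ is an AFR on ${\cal T}$. The main obstacle is the minimality argument of the third paragraph --- constructing the minimal bad sequence cleanly (the iterated choices and the appeal to Axiom~I) and verifying that the spliced sequence $v$ remains bad, which is exactly where Axiom~III earns its place; the remaining axioms enter only in the comparatively routine bookkeeping of the first, second and fourth paragraphs.
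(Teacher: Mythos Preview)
Your argument is correct and follows the classical Nash--Williams minimal bad sequence method, which is exactly the approach Melli\`es uses in \cite{Mellies1998}. Note, however, that the paper does not actually reproduce a proof of this theorem: its entire ``proof'' is the citation ``See \cite{Mellies1998}'' together with a remark that Melli\`es' original six axioms can be trimmed to the five listed here (dropping Axiom~VI and replacing Axiom~IV by Axiom~IV-bis). So there is nothing in the paper to compare your argument against beyond the reference itself; you have simply supplied the proof that the paper delegates to the cited source, and your use of the five axioms --- Axiom~I for the construction of the minimal bad sequence, Axiom~II to discard elementary terms, Axiom~III for the splicing step, Axiom~V to lift the AFR from ${\cal W}_\vdash$ to ${\cal W}$, and Axiom~IV-bis together with the product AFR for the final contradiction --- matches the intended roles of these axioms in Melli\`es' framework.
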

\begin{proof} See \cite{Mellies1998}. Mellies' result is actually established for an alternate list of axioms (numbered from I to VI). The possibility to drop Axiom VI and to replace Axiom IV with Axiom IV-bis is a remark that follows the proof of the main theorem.
\end{proof}
\begin{lemma}\label{extended_well} For each finite ${\cal S}\subseteq\sign$, the relation $\Subset$ is an AFR on $\bp({\cal S})$.
\end{lemma}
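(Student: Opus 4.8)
The statement has exactly the hypothesis--conclusion shape of Melli\`es' Axiomatic Kruskal Theorem (Theorem~\ref{axiom_kruskal}), so the plan is to package the ${\cal S}$-blueprints, ordered by $\Subset$, as an abstract decomposition system and to verify its five axioms. I would take ${\cal T}$ to be the ${\cal S}$-blueprints with $\preceq_{\cal T}={\Subset}$, the labels ${\cal L}$ to be the head symbols (the binary $@_\psi$ for $\psi\in{\cal S}$, with the formulas of ${\cal S}$ serving as the labels of the elementary, single-formula terms), and the vectors ${\cal V}$ to be the tuples of immediate sub-blueprints with $\preceq_{\cal V}$ induced from $\Subset$; the decomposition $t\stackrel{f}{\longrightarrow}T$ reads off the head and children of a rooted blueprint, while $T\vdash u$ extracts the rooted constituents out of a general blueprint. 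Because ${\cal S}$ is finite, ${\cal L}$ is finite, so equality is an AFR on ${\cal L}$ --- the side condition of the theorem --- and the same finiteness makes Axiom~II immediate, the elementary terms being the finitely many single formulas $\psi$, for which $\psi\Subset\psi$ holds since $\linear(\psi)=\{(\psi)\}$ and $\vpump$ is reflexive.

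I would then discharge the other routine axioms. Axiom~I holds because every $\rhd_{\cal T}$-step strictly decreases $|\dom(\cdot)|$, so there is no infinite descending chain. Axioms~III and~IV-bis are the monotonicity and congruence properties of $\Subset$: III says that enlarging the right-hand blueprint (passing from a constituent to the whole) cannot destroy the relation, and IV-bis says that $\Subset$ is preserved by the $@$-constructor once the heads agree and the children are related coordinatewise. Both would be obtained by unfolding Definition~\ref{def_shadow_order} and pushing the witnessing vertical compression and extraction through the relevant constructor, which is precisely what Lemmas~\ref{commute_left}, \ref{preserve_right} and~\ref{preserve_left} were designed to control, using the description $\linear(@_\psi(\gamma_1,\gamma_2))=\cc(\linear(\gamma_1),\linear(\gamma_2))$ for the $@$-case; the two-children product in IV-bis I would settle with Proposition~\ref{inter_product_hier}.2, and Higman's Theorem~\ref{step_B} wherever a whole sequence of constituents has to be compared.

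Axiom~V is the crux, and it is exactly what the preceding technical subsection delivers: for a family ${\cal W}$ of vectors with rooted-constituent set ${\cal W}_\vdash$, the hypothesis that $\Subset$ is an AFR on ${\cal W}_\vdash$ must yield that $\preceq_{\cal V}$ is an AFR on ${\cal W}$. Since each vector is (a tuple of) general blueprints, each of which is a $*$-combination of rooted blueprints drawn from ${\cal W}_\vdash$, this is precisely Lemma~\ref{root_to_multi}, combined with Proposition~\ref{inter_product_hier}.2 for the tuple structure. With all five axioms verified and $\preceq_{\cal L}$ an AFR on ${\cal L}$, Theorem~\ref{axiom_kruskal} gives that $\Subset$ is an AFR on ${\cal T}$; if ${\cal T}$ is taken to consist of the rooted blueprints only, one final application of Lemma~\ref{root_to_multi} with ${\cal B}_\varepsilon=\bpr({\cal S})$ and ${\cal B}=\bp({\cal S})$ lifts the conclusion to all of $\bp({\cal S})$, which is the claim.

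The genuine difficulty is Axiom~V, and this is why Lemma~\ref{root_to_multi} consumes a full section: a plain Higman argument on the constituent sequences only yields $\beta_i\Subset *_{\ov b}(\dots)$ rather than $\beta_i\Subset\beta_j$, and the residual apparatus (Lemmas~\ref{sub_linear}--\ref{residual}) is exactly what repairs that gap. Of the work that remains for this final lemma, the delicate points are arranging the decomposition so that Axiom~III literally holds --- that is, so that the constituents produced by $\vdash$ are ones along which $\Subset$ really is monotone --- and checking IV-bis; both rest on the fine interplay between the existential-over-compressions in the definition of $\Subset$ and the shuffle description of $\linear$. The remaining two axioms and the finiteness of ${\cal L}$ are routine.
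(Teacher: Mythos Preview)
Your overall plan—instantiate Melli\`es' axiomatic Kruskal theorem and use Lemma~\ref{root_to_multi} for Axiom~V and for the final lift from $\bpr({\cal S})$ to $\bp({\cal S})$—is exactly the paper's strategy, and Axioms~I, II, IV-bis and V go through essentially as you describe. The genuine gap is Axiom~III with your choice $\preceq_{\cal T}=\Subset$: it is simply false that $t\Subset u'$ and $u\rhd_{\cal T} u'$ imply $t\Subset u$. Take pairwise distinct formulas $\phi,\chi,\omega$ and distinct $@_\psi,@_\mu$, let $u=@_\psi(\phi,@_\mu(\chi,\omega))$ and $u'=@_\mu(\chi,\omega)$; then $u\rhd_{\cal T} u'$ and trivially $u'\Subset u'$, but $\linear(u')=\{(\chi,\omega)\}$ while the only $\gamma\vpump u$ is $u$ itself (all labels in $u$ are distinct) and $(\chi,\omega)\notin\linear(u)=\{(\phi,\chi,\omega),(\chi,\phi,\omega)\}$, so $u'\not\Subset u$. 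The lemmas you invoke (\ref{commute_left}, \ref{preserve_right}, \ref{preserve_left}) concern the interaction of $\vpump$ with $\sqeq_m$ and do not help here; the obstruction is that $\vpump$ cannot collapse $u$ onto $u'$ when their root labels differ.

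The paper's fix is precisely to change $\preceq_{\cal T}$: it sets $\alpha\preceq_{\cal T}\beta$ iff there is an address $c$ with $\alpha\Subset\compdown{\beta}{c}$ and $\alpha(\varepsilon)=\beta(c)$. With this definition Axiom~III becomes a one-line address concatenation, at the price of an extra step (part~(A) of the proof) showing that, restricted to any ${\cal T}'\subseteq\bpr({\cal S})$, $\preceq_{\cal T}$ is an AFR iff $\Subset$ is: one direction is obvious, the other uses that in any infinite sequence one may first pass to a subsequence with constant root label, after which $\alpha_i\preceq_{\cal T}\alpha_j$ gives $\alpha_i\Subset\compdown{\alpha_j}{c}$ with $\alpha_j(\varepsilon)=\alpha_j(c)$, hence $\compdown{\alpha_j}{c}\vpump\alpha_j$ and therefore $\alpha_i\Subset\alpha_j$. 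Once you incorporate this modification of $\preceq_{\cal T}$ and the accompanying equivalence argument, your proof goes through.
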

\begin{proof} According to Lemma \ref{root_to_multi} it is sufficient to prove that  $\Subset$ is an AFR on $\bpr({\cal S})$. Let $({\cal T}, {\cal L}, {\cal V}, \preceq_{\cal T}, \preceq_{\cal L},\preceq_{\cal V}, \stackrel{\cdot}{\longrightarrow}, \vdash)$ be the abstract decomposition system defined as follows.
\begin{itemize}
\item The set ${\cal T}$ is $\bpr({\cal S})$; we let $\alpha\preceq_{\cal T}\beta$ if and only if there exists an address $c$ such that $\alpha\Subset(\compdown{\beta}{c})$ and $\alpha(\varepsilon) = (\compdown{\beta}{c})(\varepsilon)$.
\item  The set ${\cal L}$ is the set of all $@$'s in ${\cal S}$, the relation $\preceq_{\cal L}$ is the identity relation on this set. 
\item  The set ${\cal V}$ is $\bp({\cal S})\times\bp({\cal S})$.

The relation  $\preceq_{\cal V}$ is defined by $(\alpha_1,\alpha_2)\preceq_{\cal V}(\beta_1,\beta_2)$ if and only if $\alpha_1\Subset\beta_1$ and $\alpha_2\Subset\beta_2$.
\item The relation $\stackrel{\cdot}{\longrightarrow}$ is defined by $\alpha\stackrel{@_\phi}{\longrightarrow}(\beta_1,\beta_2)$ if and only if $\alpha = @_\phi(\beta_1,\beta_2)$.

\item The relation $\vdash$ is the least relation satisfying the following condition. If $V = (\alpha_1,\alpha_2)$, $i \in\{1,2\}$,  $\beta_1,\dots,\beta_n\in\bpr({\cal S})$  and $\alpha_i = *_{\ov a}(\beta_1,\dots,\beta_n)$, then  $V\vdash\beta_j$ for each $j\in[1,\dots,n]$.
\end{itemize}
Note that the elements of ${\cal V}$ are pairs of blueprints that may be rootless. However if $V\vdash\beta$, then the blueprint $\beta$ is always a rooted blueprint, thus the relation $\vdash$ is indeed a subset of ${\cal V}\times{\cal T}$.

(A) For all ${\cal T}'\subseteq{\cal T}$, the relation $\Subset$ is an AFR on ${\cal T}'$ if and only if $\preceq_{\cal T}$ is an AFR on ${\cal T}'$. Indeed, consider an arbitrary infinite sequence $\ov \alpha$  over ${\cal T}'$. This sequence contains an infinite subsequence $(\alpha)_{i\in\nat}$ such that all $\alpha_i(\varepsilon)$ are equal. Clearly $\alpha_i\Subset\alpha_j$ implies   $\alpha_i\preceq_{\cal T}\alpha_j$. Conversely, if  $\alpha_i\preceq_{\cal T}\alpha_j$, then there exists $c$ such that $\alpha_i\Subset\compdown{\alpha_j}{c}$ and $\alpha_i(\varepsilon) = \alpha_j(\varepsilon) = \alpha_j(c)$. So $\alpha_i\Subset\compdown{\alpha_j}{c}\vpump\alpha_j$, hence $\alpha_i\Subset\alpha_j$. 

(B) We now check that all axioms of Theorem \ref{axiom_kruskal} are satisfied. Axiom I is clear.  The set of elementary terms is the set of all blueprints consisting of single formulas of ${\cal S}$. The relation $\preceq_{\cal T}$ is of course an AFR on the set of elementary terms, that is, axiom II is satisfied. Axiom III is immediate. If $(\alpha_1,\alpha_2)\preceq_{\cal V}(\beta_1,\beta_2)$ then $\alpha_1\Subset\beta_1$ and $\alpha_2\Subset\beta_2$, hence $@_\psi(\alpha_1,\alpha_2)\Subset@_\psi(\beta_1,\beta_2)$, a fortiori $@_\psi(\alpha_1,\alpha_2)\preceq_{\cal T}@_\psi(\beta_1,\beta_2)$, hence  Axiom IV-bis is satisfied. It remains to prove that Axiom V is satisfied. Let ${\cal W}\subseteq {\cal V}$. By definition ${\cal W}_\vdash = \{\beta\in{\cal T}\,|\,\exists (\alpha_1,\alpha_2)\in{\cal W}, (\alpha_1,\alpha_2)\vdash \beta\}$. Assuming $\preceq_{\cal T}$ is an AFR on ${\cal W}_\vdash$, we prove that $\preceq_{\cal V}$ is an AFR on ${\cal W}$. By (A) the relation $\Subset$  is an AFR on ${\cal W}_\vdash\subseteq \bpr({\cal S})$. Let ${\cal B} = \{*_{\ov a}(\beta_1,\dots,\beta_n)|\,\forall i\in[1,\dots,n],\beta_i\in{\cal W}_\vdash\}$.  By Lemma \ref{root_to_multi} the relation $\Subset$ is an AFR on ${\cal B}$. Moreover ${\cal W}\subseteq{\cal B}\times{\cal B}$. By Proposition \ref{inter_product_hier}.(2) the relation $\preceq_{\cal V}$ is an AFR on ${\cal B}\times{\cal B}$, therefore an AFR on ${\cal W}$.
\end{proof}
\begin{lemma}\label{finite_compact} For each formula $\phi$, the set of all compact $\phi$-shadows is a finite set effectively computable from $\phi$.
\end{lemma}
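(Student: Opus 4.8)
The plan is to reduce the statement to a computable bound on the depth of compact $\phi$-shadows, and to obtain that bound from the main key lemma (Lemma \ref{extended_well}) by way of K\"onig's Lemma. The reduction is routine. Suppose $D\in\nat$ bounds the length of every address occurring in a compact $\phi$-shadow. Each node of a shadow has arity at most $2$, so there are only finitely many tree domains all of whose addresses have length at most $D$; and once the domain is fixed, the number $k_a$ of unary strict ancestors of each address $a$ is determined, so the label $(\ov\chi_a,\gamma_a,\psi_a)$ ranges over the finite set of triples with $\ov\chi_a$ a sequence of subformulas of $\phi$ of length at most $k_a$, $\gamma_a\in\shadow(\phi,k_a)$ (a finite set, by Lemma \ref{enum_bounded}) and $\psi_a\in\sub(\phi)$. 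Hence there are finitely many $\phi$-shadows of depth at most $D$, all effectively enumerable from $\phi$ and $D$; it will then suffice to keep the compact ones, compactness of a given finite shadow being manifestly decidable.

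First I would read compactness along a branch as a bad-sequence condition. Put ${\cal P}_\phi=\{0,1,2\}\times\sub(\phi)\times\bp({\cal S}_\phi)$ and let $(r,\psi,\gamma)\ll(r',\psi',\gamma')$ hold iff $r=r'$, $\psi=\psi'$ and $\gamma\Subset\gamma'$. Equality on the finite sets $\{0,1,2\}$ and $\sub(\phi)$ is an AFR, and $\Subset$ is an AFR on $\bp({\cal S}_\phi)$ by Lemma \ref{extended_well}; so by two applications of Proposition \ref{inter_product_hier}.(2) the relation $\ll$ is an AFR on ${\cal P}_\phi$. Now take any compact $\phi$-shadow $\Xi$, any branch $\varepsilon=a_0<a_1<\dots<a_L$ of it, let $r_i$ be the arity of the node $a_i$ and set $t_i=(r_i,\psi_{a_i},\gamma_{a_i})$. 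Were there $i<j$ with $t_i\ll t_j$, then $a_i<a_j$, the nodes at $a_i,a_j$ would have the same arity and the same formula $\psi$, and from $\ov\chi_{a_i}\in\linear(\gamma_{a_i})$ and $\gamma_{a_i}\Subset\gamma_{a_j}$ we would get a $\gamma'\vpump\gamma_{a_j}$ with $\ov\chi_{a_i}\in\linear(\gamma')$ --- exactly the pattern forbidden by Definition \ref{def_compact_shadow}. Thus $(t_0,\dots,t_L)$ is a bad sequence for $\ll$.

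The main obstacle is that ${\cal P}_\phi$ is infinite, so the AFR property only forbids \emph{infinite} bad sequences and does not by itself bound branch lengths; this is where K\"onig's Lemma intervenes. Let $\mathbb{T}$ be the tree, ordered by the prefix relation, of all finite bad sequences $(t_0,\dots,t_{L-1})$ over ${\cal P}_\phi$ with, writing $t_i=(r_i,\psi_i,\gamma_i)$ and $k_i=|\{\,j<i\mid r_j=1\,\}|$, the constraint $\gamma_i\in\shadow(\phi,k_i)$ for every $i$. A prefix of a bad sequence is bad and the constraint on $t_i$ depends only on earlier entries, so $\mathbb{T}$ is genuinely a tree. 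It is finitely branching, because at a node of length $L$ the value $k_L\le L$ is fixed and $t_L$ then ranges over the finite set $\{0,1,2\}\times\sub(\phi)\times\shadow(\phi,k_L)$; and it has no infinite path, since one would be an infinite bad sequence for the AFR $\ll$. By K\"onig's Lemma $\mathbb{T}$ is finite; let $D$ be the length of its longest branch. Every branch of every compact $\phi$-shadow is (by the previous paragraph) a path in $\mathbb{T}$, so every compact $\phi$-shadow has depth at most $D$, the bound required above. For effectivity note that $\mathbb{T}$ itself is computable: $\shadow(\phi,k)$ is computable by Lemma \ref{enum_bounded}.(2), the operators $\linear$ and $\vpump$ produce finite computable sets, hence $\Subset$, the relation $\ll$ and the bad-sequence test are all decidable; building $\mathbb{T}$ level by level therefore terminates --- precisely because $\mathbb{T}$ is finite --- and returns $D$. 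Enumerating the finitely many $\phi$-shadows of depth at most $D$ and discarding the non-compact ones yields the finite, effectively computable set of compact $\phi$-shadows.
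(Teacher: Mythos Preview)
Your proof is correct and rests on the same two ingredients as the paper's: K\"onig's Lemma and the fact that $\Subset$ is an AFR on $\bp({\cal S}_\phi)$ (Lemma~\ref{extended_well}). The organization differs mildly. The paper grows the set ${\cal C}$ of compact $\phi$-shadows directly by a one-step ``continuation'' relation, assumes ${\cal C}$ infinite, applies K\"onig twice (once to get an infinite chain of shadows, once to get an infinite branch in their union), then uses pigeonhole on arity and formula before invoking the AFR on $\Subset$ for a contradiction. You instead package arity, formula and blueprint into a product, use Proposition~\ref{inter_product_hier}.(2) to get an AFR $\ll$ on triples, and build a separate finite tree $\mathbb{T}$ of bad label sequences to extract an explicit depth bound $D$, after which enumeration is routine. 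Your route makes the effectivity of the bound slightly more explicit (one computes $\mathbb{T}$ rather than ${\cal C}$), while the paper's avoids introducing the auxiliary tree; substantively the arguments coincide.
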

\begin{proof} For each compact $\phi$-shadow $\Xi$ and for each address $a$ such that $a$ is a leaf in $\Xi$, call {\em step-continuation at $a$} of $\Xi$ every compact $\phi$-shadow $\Xi'$ such that $\dom(\Xi') \subsetneq\dom(\Xi)\cup\{a\cdot(1), a\cdot(2)\}$ and $\Xi, \Xi'$ take the same values on $\dom(\Xi)$. Let $\leadsto$ be the relation defined by $\Xi\leadsto \Xi'$ if and only if $\Xi'$ is a step continuation of $\Xi$. By Lemma \ref{enum_bounded} and the fact that the set of subformulas  of $\phi$ is a finite set, for all $\Xi$, the set of all $\Xi'$ such that $\Xi\leadsto\Xi'$, is a finite set effectively computable from $\Xi$.  Let ${\cal C}$ be the closure under $\leadsto$ of $\{(\varepsilon\mapsto(\varepsilon,\ebp,\phi))\}$ The set of all compact $\phi$-shadows is clearly equal to this set, hence it suffices to prove that ${\cal C}$ is a finite set. Assume by way of contradiction that ${\cal C}$ is infinite. By K\"onig's Lemma there exists an infinite sequence $\Xi_0\leadsto \Xi_1\leadsto\dots$ over ${\cal C}$. The union  $\Xi_\infty = \cup_{i \geq 0}\,\Xi_i$ is a tree of infinite domain. By  K\"onig's Lemma again, there exists an infinite chain of addresses $a_1 < a_2 <\dots $ such that all $a_i$ are nodes of  $\Xi_\infty$ with the same arity and labelled with the same subformula of $\phi$. If $i < j$ and $a_i$, $a_j$ are labelled with $(\ov\chi_i,\gamma_i,\psi)$, $(\ov\chi_i,\gamma_j,\psi)$, then we cannot have $\gamma_i\Subset\gamma_j$, otherwise there would exist a $k$ such that  $\Xi_k$ is not compact. A contradiction follows from Lemma \ref{extended_well}.
\end{proof}
\section{From the shadows to the light}
\begin{theorem}\label{decidability} Ticket Entailment is decidable.
\end{theorem}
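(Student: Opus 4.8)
The plan is to assemble the reductions and finiteness results of the previous sections into a terminating bounded search. By Lemma~\ref{main_equiv} the judgment $\trite\phi$ holds if and only if $\phi$ is $\lamt$-inhabited, so it suffices to prove that $\lamt$-inhabitation is decidable for an arbitrary formula $\phi$.

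First I would convert the finiteness of compact shadows into a computable bound on the size of a witness. Suppose $\phi$ is $\lamt$-inhabited and let $M$ be an inhabitant of minimal size. By Lemma~\ref{compact_terms} the term $M$ is compact, hence locally compact, so by Lemma~\ref{finiteness} its shadow $\Xi$ is a compact $\phi$-shadow; and by Definition~\ref{def_shadow_term} the shadow of a term has the same domain as the term, whence $|\dom(M)| = |\dom(\Xi)|$. By Lemma~\ref{finite_compact} the set of compact $\phi$-shadows is a finite set effectively computable from $\phi$; let $B(\phi)$ be the maximum of $|\dom(\Xi)|$ over this set. Since Lemma~\ref{finite_compact} produces the set explicitly, $B(\phi)$ is computable from $\phi$. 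Every inhabited $\phi$ then admits an inhabitant of size at most $B(\phi)$, while conversely any inhabitant whatsoever witnesses inhabitation. Thus $\phi$ is $\lamt$-inhabited if and only if it has a $\lamt$-inhabitant $M$ with $|\dom(M)| \le B(\phi)$.

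Next I would show that this last condition is decidable by a finite enumeration. By the subformula property (Lemma~\ref{normal_propto}) every subterm of a $\lamt$-inhabitant of $\phi$ carries a type in the finite set $\sub(\phi)$. A candidate term of domain cardinality at most $B(\phi)$ is determined, up to the concrete naming of its variables, by a finite amount of data: a tree domain of cardinality at most $B(\phi)$; a node label (variable, abstraction, or application) at each address; a record of which occurrences denote the same variable; a linear order on the resulting variable classes; and a type from $\sub(\phi)$ for each class. For fixed $B(\phi)$ and $\sub(\phi)$ there are only finitely many such data, and for each one can effectively test whether the associated term is HRM, is in $\beta$-normal form, is closed, and is of type $\phi$ with respect to $\Omega$ -- all decidable syntactic conditions. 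Enumerating the finitely many candidates and testing each therefore decides whether $\phi$ has a $\lamt$-inhabitant of size at most $B(\phi)$, hence whether $\phi$ is $\lamt$-inhabited; combined with Lemma~\ref{main_equiv}, this yields the decidability of $T_\to$.

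I do not expect this final assembly to pose any real difficulty, since the essential work has already been carried out in Lemma~\ref{finite_compact}, which is where the genuine obstacle lies and which itself rests on the almost-fullness of $\Subset$ (Lemma~\ref{extended_well}) obtained through Melli\`es' Axiomatic Kruskal Theorem. The one point worth emphasising is that this route is non-constructive as to the magnitude of the bound: Lemma~\ref{finite_compact} establishes finiteness by a K\"onig's Lemma argument resting on the almost-full-relation machinery, so although $B(\phi)$ is computable, the proof yields no explicit estimate of $B(\phi)$ in terms of $\phi$, and hence no complexity bound for the resulting decision procedure.
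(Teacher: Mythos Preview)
Your proposal is correct and follows essentially the same route as the paper: reduce to $\lamt$-inhabitation via Lemma~\ref{main_equiv}, use Lemmas~\ref{compact_terms} and~\ref{finiteness} to see that any minimal inhabitant yields a compact $\phi$-shadow of the same domain, invoke Lemma~\ref{finite_compact} to compute that finite set, and then perform a bounded search using the subformula property. The only cosmetic difference is that the paper enumerates, for each compact $\phi$-shadow $\Xi$, the finitely many candidate inhabitants with domain exactly $\dom(\Xi)$, whereas you first extract the single numerical bound $B(\phi)$ and then enumerate all candidates of size at most $B(\phi)$; these are equivalent organisations of the same finite search, and your closing remark on non-constructivity matches the paper's own caveat.
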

\begin{proof} The following propositions are equivalent:
\begin{itemize}
\item the formula $\phi$ is provable in the logic $T_\to$,
\item the formula $\phi$ is inhabited by a combinator within the basis ${\sf BB'IW}$,
\item the formula $\phi$ is $\lamt$-inhabited  (Lemma \ref{main_equiv}),
\item there exists a compact $\lamt$-inhabitant of $\phi$  (Lemma \ref{compact_terms})
\item there exists a compact $\phi$-shadow with the same tree domain as a $\lamt$-inhabitant of $\phi$  (Lemmas \ref{compact_terms} and  \ref{finiteness}).
\end{itemize}
 By Lemma \ref{finite_compact}, the set of compact $\phi$-shadows is effectively computable from $\phi$. By the subformula property (Lemma \ref{normal_propto}), for each shadow $\Xi$ in this set, up to the choice of bound variables, there are only a finite number of $\lamt$-inhabitant of $\phi$ with the same domain as $\Xi$. Moreover this set of inhabitants is clearly computable from $\Xi$ and $\phi$. Hence the existence of a $\lamt$-inhabitant of $\phi$ is decidable. 
\end{proof}
\subsection*{Acknowledgments}
This work could not have been achieved without countless helpful comments and invaluable support from Pawe{\l} Urzyczyn, Paul-Andr\'e Melli\`es and Pierre-Louis Curien. I am also deeply indebted to the anonymous referees for their remarkably careful reading.

\end{document}